\newcommand{\val}{\mbox{\rm val}}
\newcommand{\PSPACE}{\mbox{\sf PSPACE}}
\newcommand{\FNP}{\mbox{\sf FNP}}
\newcommand{\FP}{\mbox{\sf FP}}
\newcommand{\TFNP}{\mbox{\sf TFNP}}
\newcommand{\NP}{\mbox{\sf NP}}
\newcommand{\coNP}{\mbox{\sf coNP}}
\newcommand{\U}{\mathsf{U}}
\newcommand{\eps}{\varepsilon}
\newcommand{\abs}[1]{\ensuremath{\mathopen\lvert #1 \mathclose\rvert}}
\newcommand{\bit}{\ensuremath{\operatorname{bit}}}
\newcommand{\Tau}{\ensuremath{\mathcal{T}}}
\newcommand{\N}{\ensuremath{{\rm \mathbb N}}}
\newcommand{\R}{\ensuremath{{\rm \mathbb R}}}
\newcommand{\var}{\mathsf{var}}
\newcommand{\reach}{\mathsf{Reach}}
\newcommand{\safety}{\mathsf{Safety}}
\newcommand{\supp}{\mathrm{Supp}}
\newcommand{\M}[3]{\begin{pmatrix}
#2 & #3 & #3 & \dots & #3\\
#1 & #2 & #3 & \dots & #3\\
\vdots & #1 & \ddots & \ddots & \vdots\\
#1 & \vdots & \ddots & #2 & #3\\
#1  & #1 & \dots & #1 & #2
\end{pmatrix}}
\newcommand{\sco}{}
\newcommand{\ma}[1][xshift=0]{
\renewcommand{\sco}{#1}
\maA
}
\newcommand{\name}{}
\newcommand{\maA}[1]{
\renewcommand{\name}{#1}
\maB
}
\newcommand{\maB}[3][-NoValue-]{
\expandafter\maC\expandafter{\sco}{\name}{#1}{#2}{#3}
}
\newcommand{\maC}[5]{
\begin{scope}[#1,solid,-,shorten >=0,shorten <=0]
\expandafter\draw (0,0) node[rectangle, minimum height=#4 cm,minimum width=#5 cm,draw] (\name) {};

\begin{scope}[shift={($(0,0)!0.5!(-#5,-#4)$)}]

\foreach\i in {0,...,#5}{
\draw (\i,0) -- (\i,#4);
}
\foreach \j in {0, ..., #4} {
\draw (0,\j) -- (#5,\j);
}
\foreach\i in {1,...,#5}{
\foreach\j in {1,...,#4}{
\draw ($(0,1+#4)+(\i,-\j)-(0.5,0.5)$) node[rectangle, minimum height=1 cm,minimum width=1 cm,draw] (\name -\j -\i) {};
}}

\ifstrequal{#3}{-NoValue-}{
\node[draw=white] (name #2) at ($(-0.5 , 0)!.5!(-0.5 , #4)$) {#2};
}{
\node[draw=white] (name #2) at ($(-0.5 , 0)!.5!(-0.5 , #4)$) {#3};
}

\end{scope}
\end{scope}

}
\newcommand{\nloop}[2][-{stealth}]{
\renewcommand{\sco}{#1}
\renewcommand{\name}{#2}
\nloopA
}
\newcommand{\nloopA}[1][xscale=1]{
\begin{scope}[xscale=-1]
\begin{scope}[#1]
\draw[-{stealth},\sco](\name .center) to ($(\name .center)+(0,0.5cm)$) arc (0:270:0.5cm);
\end{scope}
\end{scope}
}
\newcommand{\z}{z}
\newcommand{\y}{y}
\def\@comment{\let\do\@makeother \dospecials\catcode`\^^M=10\def\par{}}
\def\begincomment{\@comment\@xcomment}
\newenvironment{comment}{\begincomment}{}
 \newtheorem{theorem}{Theorem}
 \newtheorem{corollary}[theorem]{Corollary}
 \newtheorem{lemma}[theorem]{Lemma}
 \newtheorem{remark}[theorem]{Remark}
 \newtheorem{definition}[theorem]{Definition}
\newcommand{\dist}{d}
\title{Strategy Complexity of Concurrent Stochastic Games \\
with Safety and Reachability Objectives}
\author{
Krishnendu Chatterjee\thanks{IST Austria. Email: {\tt krish.chat@ist.ac.at}} \and 
Kristoffer Arnsfelt Hansen\thanks{Aarhus University. E-mail: {\tt arnsfelt@cs.au.dk}} \and 
Rasmus Ibsen-Jensen\thanks{IST Austria. Email: {\tt ribsen@ist.ac.at}}
}
\date{}
\begin{document}
\maketitle

\begin{abstract}
  We consider finite-state concurrent stochastic games, played by
  $k\geq2$ players for an infinite number of rounds, where in every
  round, each player simultaneously and independently of the other
  players chooses an action, whereafter the successor state is
  determined by a probability distribution given by the current state
  and the chosen actions. We consider reachability objectives that
  given a target set of states require that some state in the target set
  is visited, and the dual safety objectives that given a target set
  require that only states in the target set are visited. We are
  interested in the complexity of stationary strategies measured by
  their \emph{patience}, which is defined as the inverse of the
  smallest non-zero probability employed.

  Our main results are as follows: We show that in two-player zero-sum
  concurrent stochastic games (with reachability objective for one
  player and the complementary safety objective for the other player):
  (i)~the optimal bound on the patience of optimal and
  $\epsilon$-optimal strategies, for both players is doubly
  exponential; and (ii)~even in games with a single non-absorbing state
  exponential (in the number of actions) patience is necessary. In
  general we study the class of non-zero-sum games admitting
  $\eps$-Nash equilibria. 
  We show that if there is at least one player with reachability objective, 
  then doubly-exponential patience is needed in general for $\eps$-Nash 
  equilibrium strategies, whereas in contrast if all players have safety 
  objectives, then the optimal bound on patience for $\eps$-Nash equilibrium 
  strategies is only exponential.
\end{abstract}


\section{Introduction}

\noindent{\bf Concurrent stochastic games.}
Concurrent stochastic games are played on finite-state graphs by 
$k$ players for an infinite number of rounds.
In every round, each player simultaneously and independently of the other 
players chooses moves (or actions). 
The current state and the chosen moves of the players determine a probability
distribution over the successor state. 
The result of playing the game (or a \emph{play}) is an infinite sequence 
of states and action vectors.
These games with two players were introduced in a seminal work by 
Shapley~\cite{Sha53}, and have been one of the most fundamental and 
well-studied game models in stochastic graph games.
Matrix games (or normal form games) can model a wide range problems 
with diverse applications, when there is a finite number of 
interactions~\cite{Owen95,vNM47}.
Concurrent stochastic games can be viewed as a finite set of matrix games, 
such that the choices made in the current game determine which game is 
played next, and is the appropriate model for many applications~\cite{FV97}.
Moreover, in analysis of reactive systems, concurrent games provide the 
appropriate model for reactive systems with components that interact 
synchronously~\cite{dAHM00,dAHM01,AHK02}.

\smallskip\noindent{\bf Objectives.} 
An objective for a player defines the set of desired plays for the player,
i.e., if a play belongs to the objective of the player, then the player 
wins and gets payoff~1, otherwise the player looses and gets payoff~0.
The most basic objectives for concurrent games are the \emph{reachability}
and the \emph{safety} objectives. 
Given a set $F$ of states, a reachability objective with target set $F$ 
requires that some state in $F$ is visited at least once, whereas the dual
safety objective with target set $F$ requires that only states in $F$ are
visited.
In this paper, we will only consider reachability and safety objectives.
A zero-sum game consists of two players (player~1 and player~2), and
the objectives of the players are complementary, i.e.,
a reachability objective with target set $F$ for one player and 
a safety objective with target set complement of $F$ for the other 
player.
In this work, when we refer to zero-sum games we will imply that
one player has reachability objective, and the other player has the 
complementary safety objective.
Concurrent zero-sum games are relevant in many applications.
For example, the synthesis problem in control theory (e.g., 
discrete-event systems as considered in~\cite{RW87}) 
corresponds to reactive synthesis of~\cite{PR89}. 
The synthesis problem for synchronous reactive systems is appropriately modeled as 
concurrent games~\cite{dAHM00,dAHM01,TCS:AlfaroHK07}. 
Other than control theory, concurrent zero-sum games also provide the appropriate
model to study several other interesting problems, such as two-player poker 
games~\cite{MS07}.

\smallskip\noindent{\bf Properties of strategies in zero-sum games.}
Given a zero-sum concurrent stochastic game, the player-1 \emph{value} 
$v_1(s)$ of the game at a state $s$ is the limit probability with 
which he can guarantee his objective against all strategies of 
player~2. The player-2 \emph{value} $v_2(s)$ is analogously the limit 
probability with which player~2 can ensure his own objective against
all strategies of player~1. 
Concurrent zero-sum games are determined~\cite{Eve57}, i.e., 
for each state $s$ we have $v_1(s)+v_2(s)=1$.
A \emph{strategy} for a player, given a history (i.e., finite prefix of a play) 
specifies a probability distribution over the actions. 
A \emph{stationary} strategy does not depend on the history, but only
on the current state. 
For $\epsilon\geq 0$, a strategy is $\epsilon$-optimal for a state $s$
for player~$i$ if it ensures his own objective with probability at least
$v_i(s) -\epsilon$ against all strategies of the opponent. 
A $0$-optimal strategy is an \emph{optimal} strategy.
In zero-sum concurrent stochastic games, there exist stationary optimal 
strategies for the player with safety 
objectives~\cite{BAMS:Parthasarathy71,PAMS:HPRV76}; whereas in contrast, 
for the player with reachability objectives optimal strategies do not exist 
in general, however, for every $\epsilon>0$ there exists stationary 
$\epsilon$-optimal strategies~\cite{Eve57}.

\smallskip\noindent{\bf The significance of patience and roundedness of strategies.}
The basic decision problem is as follows: given a zero-sum concurrent 
stochastic game and a rational threshold $\lambda$, decide whether 
$v_1(s) \geq \lambda$. 
The basic decision problem is in \PSPACE and is \emph{square-root
sum} hard~\cite{EY06}\footnote{The square-root sum problem 
is an important problem from computational geometry, where given a set of 
natural numbers $n_1,n_2,\ldots,n_k$, 
the question is whether the sum of the square roots exceed an integer $b$. 
The problem is not known to be in \NP.}.
Given the hardness of the basic decision problem, the next most relevant 
computational problem is to compute an approximation of the value.
The computational complexity of the approximation problem is closely related 
to the size of the description of $\epsilon$-optimal strategies.
Even for special cases of zero-sum concurrent stochastic games, 
namely \emph{turn-based} stochastic games, where in each state at most 
one player can choose between multiple moves, the best known complexity 
results are obtained by guessing an optimal strategy and computing the 
value in the game obtained after fixing the guessed strategy.
A strategy has patience $p$ if $p$ is the inverse of the smallest non-zero probability used by a distribution describing the strategy.
A rational valued strategy has roundedness $q$ if $q$ is the greatest denominator of the probabilities used by the distributions describing the strategy.
Note that if a strategy has roundedness $q$, then it also has patience at most $q$.
The description complexity of a stationary strategy can be bounded by the roundedness. 
A stationary strategy with exponential roundedness, can be described using polynomially many bits, whereas 
the explicit description of stationary strategies with doubly-exponential patience is not 
polynomial.
Thus obtaining upper bounds on the roundedness and lower bounds on the patience 
is at the heart of the computational complexity analysis of concurrent stochastic games.

\smallskip\noindent{\bf Strategies in non-zero-sum games and roundedness.}
In non-zero-sum games, the most well-studied notion of equilibrium is 
\emph{Nash equilibrium}~\cite{Nash50}, which is a strategy vector (one for each player),
such that no player has an incentive of unilateral deviation (i.e., if the 
strategies of all other players are fixed, then a player cannot switch strategy
and improve his own payoff).
The existence of Nash equilibrium in non-zero-sum concurrent stochastic games
where all players have safety objectives has been established in~\cite{SS02}.
It follows from the strategy characterization of the result of~\cite{SS02} and 
our Lemma~\ref{lem:player-stationary_deviation} that if such strategies have 
exponential roundness and forms an $\epsilon$-Nash equilibrium, for a constant 
or even logarithmic number of players, for $\epsilon>0$, then there will be 
polynomial-size witness for those strategies 
(and the approximation of a Nash equilibrium can be achieved in \TFNP, 
see Remark~\ref{rem:find_safety_only_nash}). 
Thus again the notion of roundedness is at the core of the computational complexity
of non-zero-sum games.

\smallskip\noindent{\bf Previous results and our contributions.} 
In this work we consider concurrent stochastic games (both zero-sum and 
non-zero-sum) where the objectives of the players are either reachability or 
safety.
We first describe the relevant previous results and then our contributions.

\noindent{\em Previous results.} 
For zero-sum concurrent stochastic games, the optimal bound on patience and roundedness for 
$\epsilon$-optimal strategies for reachability objectives, for $\epsilon>0$, 
is doubly exponential~\cite{HKM09,HIM11}.
The doubly-exponential lower bound is obtained by presenting a family of games
(namely, Purgatory) where the reachability player requires doubly-exponential 
patience (however, in this game the patience of the safety player 
is~1)~\cite{HKM09,HIM11}; 
whereas the doubly-exponential upper bound is obtained by expressing the 
values in the existential theory of reals~\cite{HKM09,HIM11}. 
In contrast to reachability objectives that in general do not admit optimal 
strategies, similar to safety objectives there are two related classes of 
concurrent stochastic games that admit optimal stationary strategies, namely, 
discounted-sum objectives, and ergodic concurrent games.
For both these classes the optimal bound on patience and roundedness for $\epsilon$-optimal 
strategies, for $\epsilon>0$, is exponential~\cite{CI14,I13}. 
The optimal bound on patience and roundedness for optimal and $\epsilon$-optimal strategies, 
for $\epsilon>0$, for safety objectives has been an open problem.

\noindent{\em Our contributions.} Our main results are as follows:
\begin{compactenum}

\item \emph{Lower bound: general.} 
We show that in zero-sum concurrent stochastic games, a lower bound on 
patience of optimal and $\epsilon$-optimal strategies, for
$\epsilon>0$, for safety objectives is doubly exponential (in contrast 
to the above mentioned related classes of games that admit stationary optimal strategies 
and require only exponential patience).
We present a family of games (namely, Purgatory Duel) where the optimal 
and $\epsilon$-optimal strategies, for $\epsilon>0$, for both players
require doubly-exponential patience. 

\item \emph{Lower bound: three states.}
We show that even in zero-sum concurrent stochastic games with 
three states of which two are absorbing (sink states with only self-loop
transitions) the patience required for optimal and $\epsilon$-optimal 
strategies, for $\epsilon>0$, is exponential (in the number of actions).
An optimal (resp., $\epsilon$-optimal, for $\epsilon>0$) strategy in a 
game with three states (with two absorbing states) is basically an optimal
(resp., $\epsilon$-optimal) strategy of a matrix game, where some entries
of the matrix game depends on the value of the non-absorbing state 
(as some transitions of the non-absorbing state can lead to itself).
In standard matrix games, the patience for $\epsilon$-optimal strategies,
for $\epsilon>0$, is only logarithmic~\cite{LMM03}; and perhaps surprisingly in 
contrast we show that the patience for $\epsilon$-optimal strategies in 
zero-sum concurrent stochastic games with three states is exponential 
(i.e., there is a doubly-exponential increase from logarithmic to exponential).

\item \emph{Upper bound.}
We show that in zero-sum concurrent stochastic games, an upper bound 
on the patience of optimal strategies and an upper bound on the patience and roundedness of $\eps$-optimal strategies, for $\eps>0$,
is as follows: (a)~doubly exponential in general; and (b)~exponential for the
safety player if the number of value classes (i.e., the number of different 
values in the game) is constant.
Hence our upper bounds on roundedness match our lower bound results for patience.
Our results also imply that if the number of value classes is constant, then 
the basic decision problem is in \coNP (resp., \NP) if player~1 has 
reachability (resp., safety) objective. 

\item \emph{Non-zero-sum games.}
We consider non-zero-sum concurrent stochastic games with reachability and 
safety objectives.
First, we show that it easily follows from our example family of Purgatory Duel
that if there are at least two players and there is at least one player with 
reachability objective, then a lower bound on patience for $\epsilon$-Nash 
equilibrium is doubly exponential, for $\epsilon>0$, for \emph{all} players.
In contrast, we show that if all players have safety objectives, then the 
optimal bound on patience of strategies for $\epsilon$-Nash equilibrium is 
exponential, for $\epsilon>0$ (i.e., for upper bound we show that there always 
exists an $\epsilon$-Nash equilibrium where the strategy of each player requires at 
most exponential roundedness; and there exists a family of games, where for any 
$\epsilon$-Nash equilibrium the strategies of all players require at least 
exponential patience).
\end{compactenum}
In summary, we present a complete picture of the patience and roundedness required in 
zero-sum concurrent stochastic games, and non-zero-sum concurrent stochastic
games with safety objectives for all players.
Also see Section~\ref{subsec:imp-tech} 
for a discussion on important technical aspects of our results.

\smallskip\noindent{\bf Distinguishing aspects of safety and reachability.}
While the optimal bound on patience and roundedness we establish in zero-sum 
concurrent stochastic games for the safety player matches that for the 
reachability player, there are many distinguishing aspects for safety as 
compared to reachability in terms of the number of value classes (as shown in 
Table~\ref{tab:strategy-complexity}).
For the reachability player, if there is one value class, then the patience and roundedness
required is linear: it follows from the results of~\cite{Cha07} that if there 
is one value class then all the values must be either~1 or~0; and if all states 
have value~0, then any strategy is optimal, and if all states have value~1, 
then it follows from~\cite{TCS:AlfaroHK07,CdAH11} that there is an almost-sure 
winning strategy (that ensures the objective with probability~1) from all 
states and the optimal bound on patience and roundedness is linear.
The family of game graphs defined by Purgatory has two value classes, and 
the reachability player requires doubly exponential patience and roundedness, even for two 
value classes.
In contrast, if there are (at most) two value classes, then again the values 
are~1 and~0; and in value class~1, the safety player has an optimal strategy 
that is stationary and deterministic (i.e., a positional strategy) and has 
patience and roundedness~1~\cite{TCS:AlfaroHK07}, and in value class~0 any strategy is optimal.
While for two value classes, the patience and roundedness is~1 for the safety player, we show
that for three value classes (even for three states) the patience and roundedness is 
exponential, and in general the patience and roundedness is doubly exponential 
(and such a finer characterization does not exist for reachability objectives).
Finally, for non-zero-sum games (as we establish), if there are at least 
two players, then even in the presence of one reachability player, the 
patience required is at least doubly exponential, whereas if all players have 
safety objectives, the patience required is only exponential.

\begin{table}[h]
\begin{center}
\begin{tabular}{| c | c | c |}
\hline
\# Value classes     & Reachability & Safety \\
\hline
1 & Linear & One \\
\hline
2 & Double-exponential & One \\
\hline
3 & Double-exponential & {\bf Exponential} \\
  & & {\bf LB, Theorem~\ref{thm:small n}} \\
\hline 
Constant & Double-exponential & {\bf Exponential} \\
  & & {\bf UB, Corollary~\ref{COR:RoundedEpsOptimal}} \\
\hline 
General & Double-exponential & {\bf Double-exponential} \\
  & & {\bf LB, Theorem~\ref{thm:big n}} \\
  & & {\bf UB, Corollary~\ref{COR:RoundedEpsOptimal}} \\
\hline 
\end{tabular}
\end{center}
\caption{Strategy complexity (i.e., patience and roundedness of $\epsilon$-optimal strategies, for $\epsilon>0$) 
of reachability vs safety objectives depending on the number of value classes.
Our results are bold faced, and LB (resp., UB) denotes lower (resp., upper) bound on patience (resp., roundedness).
}\label{tab:strategy-complexity}
\end{table}

\smallskip\noindent{\bf Our main ideas.} 
Our most interesting results are the doubly-exponential and exponential 
lower bound on the patience and roundedness in zero-sum games. 
We now present a brief overview about the lower bound example.

The game of \emph{Purgatory} \cite{HKM09,HIM11} is a concurrent
reachability game \cite{TCS:AlfaroHK07} that was defined as an example
showing that the \emph{reachability} player must, in order to play
near optimally, use a strategy with non-zero 
probabilities that are \emph{doubly exponentially} small in the number of 
states of the game (i.e., the patience is doubly exponential). 

In this paper we present another example of a reachability game
where this is the case for the \emph{safety} player as well. The game
Purgatory consists of a (potentially infinite) sequence of \emph{escape
  attempts}. In an escape attempt one player is given the role of the
\emph{escapee} and the other player is given the role as the
\emph{guard}. An escape attempt consists of at most $N$ rounds. In
each round, the guard selects and hides a number between $1$ and $m$,
and the escapee must try to guess the number. If the escapee
successfully guesses the number $N$ times, the game ends with the
escapee as the winner. If the escapee incorrectly guesses a number
which is strictly larger than the hidden number, the game ends with
the guard as the winner. Otherwise, if the escapee incorrectly guesses
a number which is strictly smaller than the hidden number, the escape
attempt is over and the game continues.

The game of Purgatory is such that the reachability player is always
given the role of the escapee, and the safety player is always given
the role of the guard. If neither player wins during an escape attempt
(meaning there is an infinite number of escape attempts) the safety
player wins. Purgatory may be modelled as a concurrent reachability
game consisting of $N$ non-absorbing positions in which each player has
$m$ actions. The value of each non-absorbing position is 1. This means
that the reachability player has, for any $\eps>0$, a stationary
strategy that wins from each non-absorbing position with probability at
least $1-\eps$ \cite{Eve57}, but such strategies must have 
doubly-exponential patience.  In fact for $N$ sufficiently large and $m
\geq 2$, such strategies must have patience at least $2^{m^{N/3}}$ for
$\eps=1-4m^{-N/2}$ \cite{HIM11}. For the safety player however, the
situation is simple: \emph{any} strategy is optimal.

We introduce a game we call the \emph{Purgatory Duel} in which the
safety player must also use strategies of doubly-exponential patience
to play near optimally. The main idea of the game is that it forces
the safety player to behave as a reachability player. We can describe
the new game as a variation on the above description of the Purgatory
game. The Purgatory Duel consists also of a (potentially infinite)
sequence of escape attempts. But now, before each escape attempt the
role of the escapee is given to each player with probability
$\frac12$, and in each escape attempt the rules are as described
above. The game remains asymmetric in the sense that if neither player
wins during an escape attempt, the safety player wins.

The Purgatory Duel may be modelled as a concurrent reachability game
consisting of $2N+1$ non-absorbing positions, in which each player has
$m$ actions, except for a single position where the players each have
just a single action.

\smallskip\noindent{\em Technical contribution.}
The key non-trivial aspects of our proof are as follows: first, is to 
come up with the family of games, namely, Purgatory Duel, where the 
$\epsilon$-optimal strategies, for $\epsilon\geq0$, for the players are 
symmetric, even though the objectives are complementary;
and then the precise analysis of the game needs to combine and extend 
several ideas, such as refined analysis of matrix games, and analysis of 
perturbed Markov decision processes (MDPs) which are one-player 
stochastic games.

\smallskip\noindent{\bf Highlights.} 
We highlight two features of our results, namely, the surprising aspects 
and the significance (see Section~\ref{subsec:imp-features} 
for further details).
\begin{compactenum}
\item {\em Surprising aspects.}
The first surprising aspect of our result is the doubly-exponential
lower bound for concurrent safety games. The properties of strategies
in concurrent safety games resemble concurrent disocunted games,
as in both cases optimal stationary strategies exist, and locally
optimal strategies are optimal.
We show that in contrast to concurrent discounted games where exponential
patience suffices for concurrent safety games doubly-exponential patience
is necessary.
The second surprising aspect is the lower bound example itself. 
The lower bound example is obtained as follows: (i)~given Purgatory we first 
obtain simplified Purgatory by changing the start state such that it deterministically
goes to the next state; (ii)~we then consider its dual where the roles of the players
are exchanged; and (iii)~Purgatory duel is obtained by merging the start states of
simplified Purgatory and its dual. 
Both in simplified Purgatory and its dual, there are only two value classes, and
positional optimal strategies exist for the safety player.
Surprisingly we show that a simple merge operation gives a game with linear number
of value classes and the patience increases from~1 to doubly-exponential.
Finally, the properties of strategies in concurrent reachability and safety games
differ substantially.
An important aspect of our lower bound example is that we show how to modify an 
example for reachability game to obtain the result for safety games.

\item {\em Significance.} Our most important results are the lower bounds, and
the main significance is threefold.
First, the most well-studied way to obtain computational complexity result in games 
is to explicitly guess strategies, and then verify the game obtained fixing the strategy.
The lower bound for concurrent reachability games by itself did not rule out that better 
complexity results can be obtained through better strategy complexity for safety games
(indeed, for constant number of value classes, we obtain a better complexity result than known before
due to the exponential bound on roundedness).
Our doubly-exponential lower bound shows that in general the method of explicitly guessing strategies
would require exponential space, and would not yield \NP\ or \coNP\ upper bounds.
Second, one of the most well-studied algorithm for games is the strategy-iteration algorithm. 
Our result implies that any natural variant of the strategy-iteration algorithm for the safety 
player that explicitly compute strategies require exponential space in the worst-case.
Finally, in games, strategies that are witness to the values and specify how to play the game, 
are as important as values, and our results establish the precise strategy 
complexity (matching upper bound of roundedness with lower bounds of patience).

\end{compactenum}

\smallskip\noindent{\em Related work.}
We have already discussed the relevant related works such 
as~\cite{BAMS:Parthasarathy71,PAMS:HPRV76,Eve57,EY06,HKM09,HIM11,TCS:AlfaroHK07} 
on zero-sum games. 
We discuss relevant related works for non-zero-sum games. 
The computational complexity of \emph{constrained} Nash equilibrium, 
which asks the existence of Nash (or $\epsilon$-Nash, for $\epsilon>0$) 
equilibrium that guarantees at least a payoff vector has been studied.
The constrained Nash equilibrium problem is undecidable even for turn-based
stochastic games, or concurrent deterministic games with randomized 
strategies~\cite{UW11,BMS14}.
The complexity of constrained Nash equilibrium in concurrent deterministic games
with pure strategies has been studied in~\cite{BBM10,BBMU12}.
In contrast, we study the complexity of computing some Nash equilibrium in 
randomized strategies in concurrent stochastic games, and our result on 
roundedness implies that with safety objectives for all players the approximation 
of some Nash equilibrium can be achieved in \TFNP.

\section{Definitions}
\smallskip\noindent{\bf Other number.}
Given a number $i\in\{1,2\}$ let $\widehat{i}$ be the other number, i.e., if $i=1$, then $\widehat{i}=2$ and if $i=2$, then $\widehat{i}=1$.

\smallskip\noindent{\bf Probability distributions.} A {\em probability distribution} $d$ over a finite set $Z$, is a map $d:Z\rightarrow [0,1]$, such that $\sum_{z\in Z} d(z)=1$. Fix a probability distribution $d$ over a set $Z$.
The distribution $d$ is {\em pure (Dirac)} if $d(z)=1$ for some $z\in Z$ and for convenience we overload the notation and let $d=z$. The {\em support} $\supp(d)$ is the subset $Z'$ of $Z$, such that $z\in Z'$ if and only if $d(z)>0$.
The distribution $d$ is {\em totally mixed} if $\supp(d)=Z$. The {\em patience} of $d$ is $\max_{z\in \supp(d)}\frac{1}{d(z)}$, i.e., the inverse of the minimum non-zero probability. The {\em roundedness} of $d$, if $d(z)$ is a rational number for all $z\in Z$,  is the greatest denominator of $d(z)$. Note that roundness of $d$ is always at least the patience of $d$. Given two elements $z,z'\in Z$, the probability distribution $d=\U(z,z')$ over $Z$ is such that $d(z)=d(z')=\frac{1}{2}$. Let $\Delta(Z)$ be the set of all probability distributions over $Z$.

\smallskip\noindent{\bf Concurrent game structure.}
A concurrent game structure for $k$ players, consists of (1)~a finite set of {\em states} $S$, of size $N$; and (2)~for each state~$s\in S$ and each player~$i$ a set $A_s^i$ of {\em actions} (and $A^i=\bigcup_{s} A_s^i$ is the set of all actions for player~$i$, for each $i$; and $A=\bigcup_{i} A^i$ is the set of all actions) such that $A_s^i$ consists of at most $m$ actions; and (3)~a stochastic {\em transition function} $\delta:S\times A^1\times A^2\times\dots\times A^k\to \Delta(S)$. Also, a state $s$ is {\em deterministic} if 
$\delta(s,a_1,a_2,\dots,a_k)$ is pure (deterministic), for all $a_i\in A^i_s$ and for all  $i$. A state $s$ is called {\em absorbing} if $A^i_s=\{a\}$ for all $i$ and  $\delta(s,a,a,\dots,a)=s$. The number $\delta_{\min}$ is \[\min_{s,a_1,\dots,a_k,s'\in \supp(\delta(s,a_1,a_2,\dots,a_k))}(\delta(s,a_1,a_2,\dots,a_k)(s'))\enspace ,\]
i.e., the smallest non-zero transition probability.

\smallskip\noindent{\bf Safety and reachability objectives.}
Each player~$i$, who has a safety or reachability objective, is identified by a pair $(t_i,S^i)$, where $t_i\in\{\reach,\safety\}$ and $S^i\subseteq S$.

\smallskip\noindent{\bf Concurrent games and how to play them.}
Fix a number $k$ of players. A concurrent game consists of a concurrent game structure for $k$ players and for each player~$i$ a pair $(t_i,S^i)$, identifying the type of that player.
The game $G$, starting in state $s$, is played as follows: initially a pebble is placed on $v_0:=s$. In each time step $T\geq 0$, the pebble is on some state $v_T$  and each player selects (simultaneously and independently of the other players, like in the game rock-paper-scissors) an action $a_{T+1}^i\in A_{v_T}^i$. Then, the game selects $v_{T+1}$ according to the probability distribution $\delta(v_T,a_{T+1}^1,a_{T+1}^2,\dots,a_{T+1}^k)$ and moves the pebble onto $v_{T+1}$. 
The game then continues with time step $T+1$ (i.e., the game consists of infinitely many time steps). For a round~$T$, let $a_{T+1}$ be the vector of choices of the actions for the players, i.e., $(a_{T+1})_i$ is the choice of player~$i$, for each~$i$.
Round~0 is identified by $v_0$ and round~$T>0$ is then identified by the pair $(a_{T},v_T)$. 
A {\em play} $P_{s}$, starting in state $v_0=s$, is then a sequence of rounds \[(v_0,(a_{1},v_1),(a_{2},v_2),\dots, (a_{T},v_T),\dots)\enspace ,\] and for each $\ell$ a prefix of $P^\ell_{s}$ of length $\ell$ is then \[(v_0,(a_{1},v_1),(a_{2},v_2),\dots, (a_{T},v_T),\dots,(a_{\ell},v_\ell))\enspace ,\] 
and we say that $P^{\ell}_{s}$ {\em ends in} $v_\ell$.
For each $i$, player~$i$ wins in the play $P_s$, 
if $t_i=\safety$ and $v_T\in S_i$ for all $T\geq0$; or 
if $t_i=\reach$ and $v_T\in S_i$, for some $T\geq0$. 
Otherwise, player $i$ loses. For each $i$, player $i$ tries to maximize the probability that he wins.

\smallskip\noindent{\bf Strategies.} 
Fix a player~$i$. A strategy is a recipe to choose a probability 
distribution over actions given a finite prefix of a play.
Formally, a strategy $\sigma_i$ for player~$i$ is a map from $P^\ell_{s}$, 
for a play $P_{s}$ of length $\ell$ starting at state $s$, to a distribution 
over $A_{v_\ell}^i$. 
Player~$i$ {\em follows} a strategy $\sigma_i$, if given the current prefix of 
a play is $P^\ell_{s}$, he selects $a_{\ell+1}$ according to 
$\sigma_i(P^\ell_{s})$, for all plays $P_{s}$ starting at $s$ and all 
lengths $\ell$.
A strategy $\sigma_i$ for player~$i$, is {\em stationary}, if for all $\ell$ 
and $\ell'$, and all pair of plays $P_{s}$ and $P'_{s'}$, starting at states 
$s$ and $s'$ respectively, such that $P^\ell_{s}$ and $(P')^{\ell'}_{s'}$ 
ends in the same state $t$, we have that $\sigma_i(P^{\ell}_{s})=
\sigma_i((P')^{\ell'}_{s'})$; and we write $\sigma_i(t)$ for the unique 
distribution used for prefix of plays ending in $t$. 
The {\em patience} (resp., {\em roundedness}) of a strategy $\sigma_i$ is the supremum of the patience  (resp., roundedness)
of the distribution $\sigma_i(P^\ell_{s})$, over all plays $P_{s}$ starting at 
state $s$, and all lengths $\ell$.
Also, a strategy $\sigma_i$ is {\em pure} (resp., {\em totally mixed}) if 
$\sigma_i(P^{\ell}_{s})$ is pure (resp., totally mixed), for all plays 
$P_{s}$ starting at $s$ and all lengths $\ell$. 
A strategy is {\em positional} if it is pure and stationary. 
For each player~$i$, let $\Sigma^i$ be the set of all strategies for the
respective player. 
 
 \smallskip\noindent{\bf Strategy profiles and Nash equilibria.} A {\em strategy profile} $\sigma=(\sigma_i)_{i}$ is a vector of strategies, one for each player.  A strategy profile $\sigma$ defines a unique probability measure on plays, denoted $\Pr_{\sigma}$, when the players follow their respective strategies~\cite{Var85}.  Let $u(G,s,\sigma,i)$ be the probability that player~$i$ wins the game $G$ when the players follow $\sigma$ and the play starts in $s$ (i.e., the utility or payoff for player~$i$). 
Given a strategy profile $\sigma=(\sigma_i)_{i}$ and a strategy $\sigma_i'$ for player~$i$, the strategy profile $\sigma[\sigma_i']$ is the strategy profile where 
the strategy for player~$i$ is $\sigma_i'$  and the strategy for player~$j$ is $\sigma_j$ for $j\neq i$. 
Fix a state $s$ and $\eps\geq 0$. 
 A strategy profile $\sigma$ forms an {\em $\eps$-Nash equilibrium from state $s$} if for all $i$ and all strategies $\sigma_i'$ for player~$i$, we have that $u(G,s,\sigma,i)\geq u(G,s,\sigma[\sigma_i'],i)-\eps$. A strategy profile $\sigma$ forms an {\em $\eps$-Nash equilibrium} if it forms an $\eps$-Nash equilibrium from all states $s$.
 Also a strategy profile forms a {\em Nash equilibrium} (resp., from state $s$, for some $s$) if it forms a $0$-Nash equilibrium (resp., from state $s$). We say that a strategy profile has a property (e.g., is stationary) if each of the strategies in the profile has that property.

\subsection{Zero-sum concurrent stochastic games}
A zero-sum game consists of two players with complementary objectives.
Since we only consider reachability and safety objectives, a zero-sum 
concurrent stochastic game consists of a two-player concurrent stochastic game 
with reachability objective for player~1 and the complementary safety 
objective for player~2 (such a game is also referred to as concurrent
reachability game).

\smallskip\noindent{\bf Concurrent reachability game.} A concurrent reachability game is a concurrent game with two players, identified by $(\reach,S^1)$ and $(\safety,S\setminus S^1)$. Observe that in such games, exactly one player wins each play (this implies that the games are zero-sum). Note that for all 
strategy profiles $\sigma$ we have $u(G,s,\sigma,1)+u(G,s,\sigma,2)=1$. For ease of notation and tradition, we write $u(G,s,\sigma_1,\sigma_2)$ for $u(G,s,\sigma_1,\sigma_2,1)$, for all concurrent reachability games $G$, states $s$, and strategy profiles $\sigma=(\sigma_1,\sigma_2)$. 
Also if the game $G$ is clear from context we drop it from the notation.

\smallskip\noindent{\bf Values of concurrent reachability games.}
Given a concurrent reachability game $G$, the {\em upper value} of $G$ starting in $s$  is \[\overline{\val}(G,s)=\sup_{\sigma_1\in \Sigma^1}\inf_{\sigma_2\in \Sigma^2}u(G,s,\sigma_1,\sigma_2)\enspace ;\] 
and 
the {\em lower value} of $G$ starting in $s$ is
\[\underline{\val}(G,s)=\inf_{\sigma_2\in \Sigma^2}\sup_{\sigma_1\in \Sigma^1}u(G,s,\sigma_1,\sigma_2)\enspace .\]
As shown by~\cite{Eve57} we have that \[\val(G,s):=\overline{\val}(G,s)=\underline{\val}(G,s) \enspace ;\] 
and this common number is called the {\em value} of $s$. 
We will sometimes write $\val(s)$ for $\val(G,s)$ if $G$ is clear from the context. We will also write $\val$ for the vector where $\val_s=\val(s)$.

\smallskip\noindent{\bf ($\eps$-)optimal strategies for concurrent reachability games.}
For an $\eps\geq 0$, a strategy $\sigma_1$ for player~$1$ (resp., $\sigma_2$ for player~2) is called {\em $\eps$-optimal} if for each state $s$ we have that $\val(s)-\eps\leq \inf_{\sigma_2\in \Sigma^2} u(s,\sigma_1,\sigma_2)$ (resp., $\val(s)+\eps\geq \sup_{\sigma_1\in \Sigma^1} u(s,\sigma_1,\sigma_2)$). For each $i$, a strategy $\sigma_i$ for player~$i$ is called {\em optimal} if it is 0-optimal.
There exist concurrent reachability games in which player~$1$ does not have optimal strategies, see~\cite{Eve57} for an example\footnote{note that it is not because that we require the strategy to be optimal for each start state, since if there was one for each start state separately then there would be one for all, since this is not just for stationary strategies}. On the other hand in all games $G$ player~$1$ has a stationary $\eps$-optimal strategy for each $\eps>0$. In all games player~2 has an optimal stationary strategy (thus also an $\eps$-optimal stationary strategy for all $\eps>0$)~\cite{BAMS:Parthasarathy71,PAMS:HPRV76}.
Also, given a stationary strategy $\sigma_1$ for player~1 we have that there exists a positional strategy $\sigma_2$, such that 
$u(s,\sigma_1,\sigma_2)=\inf_{\sigma_2'\in \Sigma^2} u(s,\sigma_1,\sigma_2')$, i.e., we only need to consider positional strategies for player~2. Similarly, we only need to consider positional strategies for player~1, if we are given a stationary strategy for player~2. 

\smallskip\noindent{\bf ($\eps$-)optimal strategies compared to ($\eps$-)Nash equilibria.}
It is well-known and easy to see that for concurrent reachability games, a strategy profile $\sigma=(\sigma_1,\sigma_2)$ is optimal if and only if $\sigma$ forms a Nash equilibrium. Also, if $\sigma_1$ is $\eps$-optimal and $\sigma_2$ is $\eps'$-optimal, for some $\eps$ and $\eps'$, then $\sigma=(\sigma_1,\sigma_2)$ forms an $(\eps+\eps')$-Nash equilibrium. Furthermore, if $\sigma=(\sigma_1,\sigma_2)$ forms an $\eps$-Nash equilibrium, for some $\eps$, then $\sigma_1$ and $\sigma_2$ are $\eps$-optimal\footnote{observe that the two latter properties implies the former, but all are included to make it clear that there is a strong connection}.

\smallskip\noindent{\bf Markov decision processes and Markov chains.}
For each player $i$, a {\em Markov decision process (MDP) for player~$i$} is a concurrent game where the size of $A_s^{j}$ is 1 for all $s$ and $j\neq i$. 
A {\em Markov chain} is an MDP for each player (that is the size of $A_s^j$ is~1 for all $s$ and $j$). 
A {\em closed recurrent set} of a Markov chain $G$ is a maximal (i.e., no closed recurrent set is a subset of another) set $S'\subseteq S$ such that for all pairs of states $s,s'\in S$, the play starting at $s$ reaches state $s'$ eventually with probability~1 (note that it does not depend on the choices of the players as we have a Markov chain). For all starting states, eventually a closed recurrent set is reached with probability~1, and then plays stay in the closed recurrent set.
Observe that fixing a stationary strategy for all but one player in a concurrent game, the resulting game is an MDP for the remaining player. Hence, fixing a stationary strategy for each player gives a Markov chain.

\subsection{Matrix games and the value iteration algorithm} 
A (two-player, zero-sum) matrix game consists of a matrix $M\in \R^{r\times c}$. We will typically let $M$ refer to both the matrix game and the matrix and it should be clear from the context what it means.
A matrix game $M$ is played as follows: player~1 selects a row $a_1$ and at the same time, without knowing which row was selected by player~1, player~2 selects a column $a_2$. The {\em outcome} is then $M_{a_1,a_2}$. Player~1 then tries to maximize the outcome and player~2 tries to minimize it.

\smallskip\noindent{\bf Strategies in matrix games.}
A {\em strategy} $\sigma_1$ (resp., $\sigma_2$) for player~1 (resp., player~2) is a probability distribution over the rows (resp., columns) of $M$. A strategy profile $\sigma=(\sigma_1,\sigma_2)$ is a pair of strategies, one for each player.
Given a strategy profile $\sigma=(\sigma_1,\sigma_2)$ the payoff $u(M,\sigma_1,\sigma_2)$ under those strategies is the expected outcome if player~1 picks row $a_1$ with probability $\sigma_1(a_1)$ and player~2 picks column $a_2$ with probability $\sigma_2(a_2)$ for each $a_1$ and $a_2$, i.e., \[u(M,\sigma_1,\sigma_2)=\sum_{a_1}\sum_{a_2} M_{a_1,a_2}\cdot \sigma_1(a_1)\cdot \sigma_2(a_2)\enspace .\]

\smallskip\noindent{\bf Values in matrix games.}
The {\em upper value} of a matrix game is $\overline{\val}(M)=\sup_{\sigma_1}\inf_{\sigma_2}u(M,\sigma_1,\sigma_2)$.
The {\em lower value} of a matrix game is $\underline{\val}(M)=\inf_{\sigma_2}\sup_{\sigma_1}\sum_{a_1}u(M,\sigma_1,\sigma_2)$.
One of the most fundamental results in game theory, as shown by~\cite{vNM47}, is that $\val(M):=\overline{\val}(M)=\underline{\val}(M)$. 
This common number is called the {\em value}. 

\smallskip\noindent{\bf ($\eps$-)optimal strategies in matrix games.}
A strategy $\sigma_1$ for player~1 is {\em $\eps$-optimal}, for some number $\eps\geq 0$ if $\val(M)-\eps\leq \inf_{\sigma_2}u(M,\sigma_1,\sigma_2)$. Similarly, a strategy $\sigma_2$ for player~2 is {\em $\eps$-optimal}, for some number $\eps\geq 0$ if $\val(M)+\eps\geq \sup_{\sigma_1}u(M,\sigma_1,\sigma_2)$. A strategy is {\em optimal} if it is $0$-optimal. There exists an optimal strategy for each player in all matrix games~\cite{vNM47}.
Given an optimal strategy $\sigma_1$ for player~1, consider the vector $\overline{v}$, such that $\overline{v}_j=u(M,\sigma_1,j)$ for each column $j$. 
Then we have that $\overline{v}_j=\val(M)$ for each $j$ such that there exists an optimal strategy $\sigma_2$ for player~2, where $\sigma_2(j)>0$. Similar analysis holds for optimal strategies of player~2. This also shows that given an optimal strategy $\sigma_1$ for player~1 we have that $u(M,\sigma_1,\sigma_2)$ is minimized for some pure strategy $\sigma_2$ and similarly for optimal strategies $\sigma_2$ for player~2.
Given a matrix game $M$, an optimal strategy for each player and the value of $M$ can be computed in polynomial time using linear programming.

\smallskip\noindent{\bf The matrix game $A^s[\overline{v}]$ and $A^s$.}
Fix a concurrent reachability game $G$.
Given a vector $\overline{v}$ in $\R^{S}$ and a state $s$ (in $G$), the matrix game $A^s[\overline{v}]=[a_{i,j}]$ is the matrix game where $a_{i,j}=\sum_{s'\in S}\delta(s,i,j)(s')\cdot\overline{v}_{s'}$. Given a state $s$, the matrix game $A^s$ is the matrix game $A^s[\val]$. As shown by~\cite{BAMS:Parthasarathy71,PAMS:HPRV76}, each optimal stationary strategy $\sigma_2$ for player~2 in $G$ is such that for each state $s$  the distribution $\sigma_2(s)$ is an optimal strategy in the matrix game $A^s$. Also, conversely, if $\sigma_2(s)$ is an optimal strategy in $A^s$ for each $s$, then $\sigma_2$ is an optimal stationary strategy in $G$. Furthermore, also as shown by~\cite{BAMS:Parthasarathy71,PAMS:HPRV76}, we have that $\val(s)=\val(A^s)$ for each state $s$.

\smallskip\noindent{\bf The value iteration algorithm.}
The conceptually simplest algorithm for concurrent reachability games is the \emph{value iteration} algorithm, which is an iterative approximation algorithm. The idea is as follows: Given a concurrent reachability game $G$, consider the game $G^t$ where a {\em time-limit} $t$ (some non-negative integer) has been introduced. The game $G^t$ is then played as $G$, except that player~2 wins if the time-limit is exceeded (i.e., he wins after round $t$ unless a state in $S^1$ has been reached before that). 
(The game $G^t$ has a value like in the above definition of matrix games since the game only has a finite number of pure strategies and thus can be reduced to a matrix game).
The value of $G^t$ starting in state $s$ then converges to the value of $G$ starting in $s$ as $t$ goes to infinity as shown by~\cite{Eve57}. 
More precisely, the algorithm is defined on a vector $\overline{v}^t$ which is the vector where $\overline{v}^t_s$ is the value of $G^t$ starting in $s$. We can compute $\overline{v}^t_s$ recursively for increasing $t$ as follows 
\[
\overline{v}^t_s=\begin{cases}
1 &\text{if }s\in S^1\\
0 &\text{if }s\not\in S^1\text{ and }t=0\\
\val(A^s[\overline{v}^{t-1}]) &\text{if }s\not\in S^1\text{ and }t\geq 1\enspace .
\end{cases}
\]
We have that $\overline{v}^t_s\leq \overline{v}^{t+1}_s\leq \val(s)$ for all $t$ and $s$, and 
for all $s$ we have $\lim_{t\to\infty} \overline{v}^t_s =\val(s)$,
as shown by~\cite{Eve57}. As shown by~\cite{HIM11,HKLMT11} the smallest time-limit~$t$ such that $\overline{v}^t_s\geq \val(s)-\eps$ can be as large as $\eps^{-m^{\Omega(N)}}$ for some games (of $N$ states and at most $m$ actions in each state for each player) and $s$, for $\eps>0$. On the other hand it is also at most $\eps^{-m^{O(N^2)}}$ as shown by~\cite{HIM11}.

\section{Zero-sum Concurrent Stochastic Games: Patience Lower Bound}
In this section we will establish the doubly-exponential lower bound
on patience for zero-sum concurrent stochastic games.
First we define the game family, namely, \emph{Purgatory Duel} and we
also recall the family \emph{Purgatory} that will be used in our proofs.
We split our proof about the patience in Purgatory Duel in three parts.
First we present some refined analysis of matrix games, and use the 
analysis to first prove the lower bound for optimal strategies, and then
for $\eps$-optimal strategies, for $\eps>0$.

\smallskip\noindent{\bf The Purgatory Duel.} In this paper we specifically focus on the following concurrent reachability game, the {\em Purgatory Duel}\footnote{To allow a more compact notation, we have here exchanged the criterias for when the safety player wins as a guard and when the escape attempt ends, as compared to the textual description of the game given in the introduction.}, defined on a pair of parameters $(n,m)$. The game consists of $N=2n+3$ states, namely $\{v_1^1,v_2^1,\dots, v_n^1,v_1^2,v_2^2,\dots, v_n^2, v_s,\top,\bot\}$ and all but $v_s$ are deterministic. 
To simplify the definition of the game, let $v_0^1=v_{n+1}^2=\bot$ and $v_0^2=v_{n+1}^1=\top$.
The states $\top$ and $\bot$ are absorbing.
For each $i\in \{1,2\}$ and $j\in\{1,\dots,n\}$, the state $v^i_j$ is such that $A_{v^i_j}^1=A_{v^i_j}^2=\{1,2,\dots, m\}$ and for each $a_1,a_2$ we have that
\[\delta(v^i_j,a_1,a_2)=\begin{cases}
v_s & \text{if }a_1>a_2\\
v_0^i & \text{if }a_1<a_2\\
v_{j+1}^i & \text{if }a_1=a_2\enspace .
\end{cases}
\]
Finally, $A_{v_s}^1=A_{v_s}^2=\{a\}$ and $\delta(v_s,a,a)=\U(v_1^1,v_1^2)$.
Furthermore, $S^1=\{\top\}$. There is an illustration of the Purgatory Duel with $m=n=2$ in Figure~\ref{fig:pd}.

\begin{figure}
\begin{center}
\begin{tikzpicture}[node distance=3cm,-{stealth},shorten >=2pt]
\ma{top}[$\top$]{1}{1};

\ma[shift={($(top)+(0cm,-3cm)$)}]{v21}[$v_2^1$]{2}{2};
\ma[shift={($(v21)+(0cm,-4.5cm)$)}]{v11}[$v_1^1$]{2}{2};

\ma[shift={($(v11)+(3cm,-3cm)$)}]{vs}[$v_s$]{1}{1};

\ma[shift={($(v11)+(6cm,0)$)}]{v12}[$v_1^2$]{2}{2};
\ma[shift={($(v12)+(0,4.5cm)$)}]{v22}[$v_2^2$]{2}{2};

\ma[shift={($(v22)+(0,3cm)$)}]{bot}[$\bot$]{1}{1};
\nloop{top};
\nloop{bot};

\draw (v21-2-2.center) -- (top);
\draw (v21-1-1.center) -- (top);
\draw (v21-1-2.center) -- (bot);
\draw (v21-2-1.center) to[out=-70,in=90] (vs);

\draw (v11-2-2.center) -- (v21);
\draw (v11-1-1.center) -- (v21);
\draw (v11-1-2.center) -- (bot);
\draw (v11-2-1.center) to[out=-90,in=135] (vs);

\draw (v22-2-2.center) -- (bot);
\draw (v22-1-1.center) -- (bot);
\draw (v22-1-2.center) to[out=110,in=-25] (top);
\draw (v22-2-1.center) to[out=-110,in=45] (vs);

\draw (v12-2-2.center) -- (v22);
\draw (v12-1-1.center) -- (v22);
\draw (v12-1-2.center) to[out=120,in=-45] (top);
\draw (v12-2-1.center) to[out=-90,in=0] (vs);

\draw[dashed] (vs-1-1.center) to[out=90,in=-45]  (v11);
\draw[dashed] (vs-1-1.center) to[out=90,in=-135] (v12);

\end{tikzpicture}
\end{center}
\caption{An illustration of the Purgatory Duel with $m=n=2$. The two dashed edges have probability~$\frac{1}{2}$ each.\label{fig:pd}}
\end{figure}
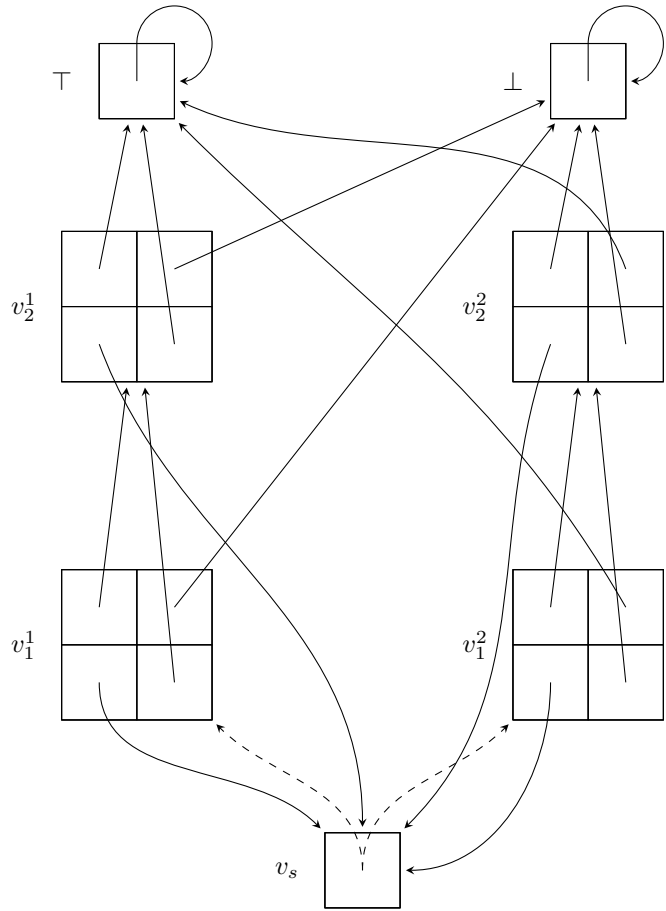

\smallskip\noindent{\bf The game Purgatory.} We will also use the game {\em Purgatory} as defined by~\cite{HIM11} (and also in~\cite{HKM09} for the case of $m=2$). Purgatory is similar to the Purgatory Duel and hence the similarity in names. Purgatory is also defined on a pair of parameters $(n,m)$. The game consists of $N=n+2$ states, namely, $\{v_1,v_2,\dots, v_n,\top,\bot\}$ and each state is deterministic. 
To simplify the definition of the game, let $v_{n+1}=\top$.
For each 
$j\in\{1,\dots,n\}$, the state $v_j$ is such that $A_{v_j}^1=A_{v_j}^2=\{1,2,\dots, m\}$ and for each $a_1,a_2$ we have that
\[\delta(v_j,a_1,a_2)=\begin{cases}
v_1 & \text{if }a_1>a_2\\
\bot & \text{if }a_1<a_2\\
v_{j+1} & \text{if }a_1=a_2\enspace .
\end{cases}
\]
The states $\top$ and $\bot$ are absorbing.
Furthermore, $S^1=\{\top\}$. There is an illustration of Purgatory with $m=n=2$ in Figure~\ref{fig:p}.

\begin{figure}
\center
\begin{tikzpicture}[node distance=3cm,-{stealth},shorten >=2pt]
\ma{top}[$\top$]{1}{1};

\ma[shift={($(top)+(0cm,-3cm)$)}]{v2}[$v_2$]{2}{2};
\ma[shift={($(v2)+(0cm,-4.5cm)$)}]{v1}[$v_1$]{2}{2};

\ma[shift={($(v2)+(3cm,0)$)}]{bot}[$\bot$]{1}{1};
\nloop{top};
\nloop{bot};

\draw (v2-2-2.center) -- (top);
\draw (v2-1-1.center) -- (top);
\draw (v2-1-2.center) to[out=0,in=160] (bot);
\draw (v2-2-1.center) to[out=-100,in=90] ($(v1-1-1.north west)!0.5!(v1-1-1.north)$);

\draw (v1-2-2.center) -- (v21);
\draw (v1-1-1.center) -- (v21);
\draw (v1-1-2.center) -- (bot);
\nloop{v1-2-1}[xscale=-1,yscale=-1]

\end{tikzpicture}
\caption{An illustration of Purgatory with $m=n=2$.\label{fig:p}}
\end{figure}
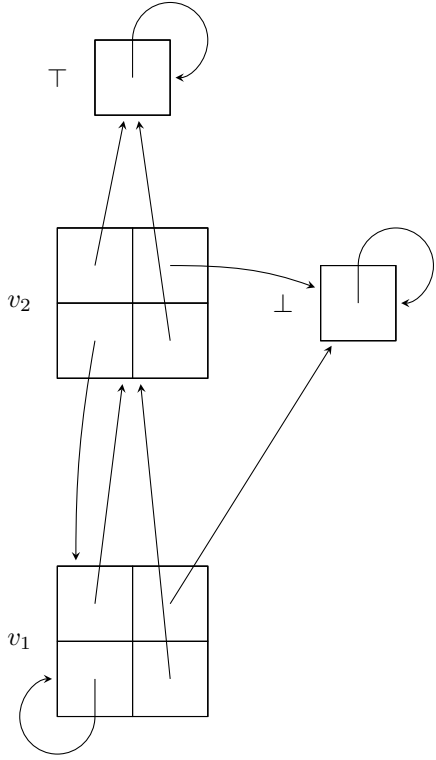

\subsection{Analysis of matrix games}
In this section we present some refined analysis of some 
simple matrix games, which we use in the later sections to find optimal strategies 
for the players and the values of the states in the Purgatory Duel.
\begin{definition}
Given a positive integer $m$ and reals $x$, $\y$ and $\z$, let $M^{x,\y,\z,m}$ be the $(m\times m)$-matrix with  $x$ below the diagonal, $\y$ in the diagonal and $\z$ above the diagonal, i.e., 
\[M^{x,\y,\z,m}=\M{x}{\y}{\z} \enspace .
\]
\end{definition}

We first explain the significance of the matrix game $M^{x,\y,\z,m}$ in relation to
Purgatory Duel. 
Consider the Purgatory Duel defined on parameters $(n,m)$, for some $n$.
We will later establish that for any $j$, 
let $v$ (resp., $v'$) be state $v^1_j$ (resp., $v^2_j$) of the Purgatory Duel, then we have that $A^v=M^{0,\val(v^1_{j+1}),\val(v_s),m}$
(resp., $A^{v'}=M^{1,\val(v^2_{j+1}),\val(v_s),m}$). 
In this section we show that for $0<\z<\y$ we have that $M=M^{0,\y,\z,m}$ is such that $\val(M)>\z$ 
and each optimal strategy for either player is totally mixed. 
Similarly, for $1>\z'>\y'$ we show that  $M'=M^{1,\y',\z',m}$ is such that $\val(M')<\z$ 
and each optimal strategy for either player is totally mixed. 
We also compute the value and the patience of each optimal strategy in the matrix game 
$M^{0,\frac{1}{2}+\eps,\frac{1}{2},m}$ 
(since we will establish in the next section, using the results of this section, 
that $\val(v_s)=\frac{1}{2}$ and $\val(v^1_j)>\val(s)$ for all $j$).

\begin{lemma}\label{lem:mat_val_above}
For all positive integers $m$ and reals $\y$ and $\z$ such that $0<\z<\y$, the matrix game $M=M^{0,\y,\z,m}$ has value strictly above $\z$.
\end{lemma}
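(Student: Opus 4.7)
The plan is to exhibit an explicit mixed strategy $\sigma_1$ for player~1 whose guaranteed payoff against every pure column strategy of player~2 is strictly larger than $z$; since $\val(M) \geq \inf_{\sigma_2} u(M,\sigma_1,\sigma_2)$, this already implies the claim. The structure of $M = M^{0,y,z,m}$ is nearly upper-triangular: player~1 receives $y$ by matching player~2's column, $z$ by choosing a strictly smaller row, and $0$ by choosing a strictly larger row. Intuitively player~1 should therefore bias toward small indices in order to ``undercut'' player~2.

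The natural ansatz is a geometric profile $\sigma_1(i) \propto r^{i-1}$ for some $r \in (0,1)$, with $r$ chosen so that the expected payoff
$$f(j) \;=\; z \sum_{i=1}^{j-1} \sigma_1(i) \;+\; y\,\sigma_1(j)$$
against a pure column $j$ is the \emph{same} for every $j$. Writing $f(j+1)-f(j)$ and cancelling the common factor $r^{j-1}$ yields the linear condition $z + y(r-1) = 0$, forcing $r = 1 - z/y$. The hypothesis $0 < z < y$ is exactly what makes this $r$ lie in $(0,1)$, so the geometric sequence normalizes to a genuine probability distribution with constant $c = (1-r)/(1-r^m)$. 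A short telescoping calculation (or direct evaluation at $j = 1$) then shows $f(j) = yc$ for every $j \in \{1,\dots,m\}$.

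Substituting $r = 1 - z/y$ back into $yc$ gives $yc = z/(1 - (1-z/y)^m)$. Because $r \in (0,1)$ forces $r^m < 1$, this value is strictly greater than $z$, and I conclude $\val(M) \geq yc > z$. There is no substantive obstacle beyond guessing the geometric ansatz from the near-triangular shape of $M$; the verification that this $\sigma_1$ equalises $f(j)$ across $j$ and the final strict inequality $r^m < 1$ are routine consequences of $0 < z < y$.
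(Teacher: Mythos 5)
Your proof is correct and follows essentially the same route as the paper: both arguments exhibit an explicit, essentially geometric mixed strategy for player~1 and verify that its payoff against every pure column strictly exceeds $\z$. The only difference is cosmetic — the paper leaves the geometric ratio as a free parameter $\eps$ chosen small enough that $\y(1-\eps)>\z$, whereas you pin it down to the equalizing value $r=1-\z/\y$, which additionally yields the exact guarantee $\z/(1-r^m)$.
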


\begin{proof}
Let $\eps>0$ be some number to be defined later.
Consider the probability distribution $\sigma_1^\eps$ given by \[
\sigma_1^\eps(a)=\begin{cases}
\eps^{a-1}-\eps^{a} & \text{if }1\leq a\leq m-1\\
\eps^{m-1} & \text{if }a=m \enspace .
\end{cases}
\]
If player~2 plays column~$a$ against $\sigma_1$, for $a\leq m-1$, then the payoff $u(M,\sigma_1,a)$ is $\y\cdot (\eps^{a-1}-\eps^{a})+\y\cdot (1-\eps^{a-1})$; 
and if player~2 plays column~$m$, then the payoff $u(M,\sigma_1,m)$ is $\y\cdot (\eps^{m-1})+\z\cdot (1-\eps^{m-1})$. 
For any $\eps$ such that $\y\cdot (1-\eps)>\z$, the payoff is strictly greater than $\z$ implying that the value of $M$ is strictly greater than $\z$.
\end{proof}

\begin{lemma}\label{lem:mat_p1_totally_mixed}
For all positive integers $m$ and reals $\y$ and $\z$ such that $0<\z<\y$, each optimal strategy for player~1 in the matrix game $M^{0,\y,\z,m}$ is totally mixed.
\end{lemma}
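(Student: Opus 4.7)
The plan is to assume, for contradiction, that some optimal strategy $\sigma_1$ of player~1 in $M=M^{0,y,z,m}$ satisfies $\sigma_1(a)=0$ for some row $a\in\{1,\dots,m\}$, and to exhibit a pure column for player~2 whose expected payoff against $\sigma_1$ is at most $z$. Combined with Lemma~\ref{lem:mat_val_above}, which already gives $\val(M)>z$, this witnesses $\inf_{\sigma_2} u(M,\sigma_1,\sigma_2)\leq z<\val(M)$, contradicting the optimality of $\sigma_1$. Hence every optimal strategy must be totally mixed.

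The natural refuting column is column~$a$ itself. In $M^{0,y,z,m}$, the entries of column $a$ are $z$ in rows $i<a$, $y$ in row $i=a$, and $0$ in rows $i>a$, so
\[
u(M,\sigma_1,a)\ =\ z\cdot\sum_{i<a}\sigma_1(i)\ +\ y\cdot\sigma_1(a)\ +\ 0\cdot\sum_{i>a}\sigma_1(i).
\]
Under the assumption $\sigma_1(a)=0$, the middle term vanishes and the first term is bounded by $z$ because $\sum_{i<a}\sigma_1(i)\leq 1$. Therefore $u(M,\sigma_1,a)\leq z<\val(M)$, the required contradiction. Note that this one-line argument handles all of $a=1$, $1<a<m$, and $a=m$ uniformly; no case split on the position of the hole is needed.

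The main (mild) obstacle is simply recognizing that column~$a$ is the right pure response to a hypothesized hole in $\sigma_1$ at row $a$: it is the column that simultaneously eliminates the diagonal contribution (because $\sigma_1(a)=0$) and zeroes out every row strictly below $a$, which are exactly the rows that would have carried the larger value $y$ in other columns. Once this observation is in hand, the proof reduces to the displayed inequality together with an invocation of Lemma~\ref{lem:mat_val_above}.
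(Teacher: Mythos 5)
Your proposal is correct and is essentially the paper's own proof: the paper likewise takes a non-totally-mixed $\sigma_1$ with $\sigma_1(a)=0$, has player~2 play column $a$, observes that every outcome with positive probability is then $z$ or $0$ so the payoff is at most $z$, and concludes via Lemma~\ref{lem:mat_val_above}. No difference in approach worth noting.
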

\begin{proof}
Consider some strategy $\sigma_1$ for player~1 in $M^{0,\y,\z,m}$ which is not totally mixed. Thus there exists some row $a$, where $\sigma_1(a)=0$. Consider the pure strategy $\sigma_2$ that plays column $a$ with probability~1. Playing $\sigma_1$ against $\sigma_2$ ensures that each outcome is either $\z$ or $0$, i.e., the payoff is at most $z$ which is strictly less than the value by Lemma~\ref{lem:mat_val_above}.
\end{proof}

\begin{lemma}\label{lem:mat_p2_totally_mixed}
For all positive integers $m$ and reals $\y$ and $\z$ such that $0<\z<\y$, each optimal strategy for player~2 in the matrix game $M=M^{0,\y,\z,m}$ is totally mixed.
\end{lemma}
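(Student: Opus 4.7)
The plan is to combine Lemma~\ref{lem:mat_p1_totally_mixed} with the standard indifference/best-response condition at equilibrium. Let $\sigma_2^*$ be an arbitrary optimal strategy for player~2, and let $\sigma_1^*$ be an optimal strategy for player~1 (which exists by the minimax theorem). By Lemma~\ref{lem:mat_p1_totally_mixed}, $\sigma_1^*$ is totally mixed, so every row $i\in\{1,\dots,m\}$ lies in its support. Since $(\sigma_1^*,\sigma_2^*)$ is a saddle point, every row in the support of $\sigma_1^*$ must be a best response against $\sigma_2^*$; hence $f(i):=u(M,i,\sigma_2^*)=\val(M)$ for every $i$.

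Next I would compute $f(i)$ explicitly. Since $M_{i,j}=0$ for $i>j$, $M_{i,i}=\y$ and $M_{i,j}=\z$ for $i<j$, we have $f(i)=\y\,\sigma_2^*(i)+\z\sum_{j>i}\sigma_2^*(j)$. Subtracting consecutive equations $f(i)=f(i+1)=\val(M)$ gives
\[
0 \;=\; f(i)-f(i+1) \;=\; \y\,\sigma_2^*(i)-(\y-\z)\,\sigma_2^*(i+1)
\]
for $i=1,\dots,m-1$, which rearranges to the linear recurrence $\sigma_2^*(i+1)=\tfrac{\y}{\y-\z}\,\sigma_2^*(i)$.

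Finally, I would use $0<\z<\y$ to conclude. The ratio $\tfrac{\y}{\y-\z}$ is well-defined, positive, and strictly greater than~$1$. If $\sigma_2^*(1)=0$, the recurrence forces $\sigma_2^*(i)=0$ for every $i$, contradicting that $\sigma_2^*$ sums to~$1$. Hence $\sigma_2^*(1)>0$, and then the recurrence yields $\sigma_2^*(i)>0$ for all $i$, so $\sigma_2^*$ is totally mixed.

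I do not foresee any real obstacle here: once we invoke Lemma~\ref{lem:mat_p1_totally_mixed} to turn the optimality of $\sigma_2^*$ into the set of equalities $f(i)=\val(M)$ for \emph{all} $i$, the rest is a one-line telescoping that yields a strictly geometric sequence, and the hypothesis $0<\z<\y$ simultaneously ensures the common ratio is positive (so no sign cancellations can occur) and nonzero (so we cannot start from $\sigma_2^*(1)=0$).
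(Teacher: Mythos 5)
Your proof is correct, but it takes a genuinely different route from the paper's. The paper fixes a totally mixed optimal $\sigma_1$ (from Lemma~\ref{lem:mat_p1_totally_mixed}) and, for any $\sigma_2$ that attains the value against $\sigma_1$ but is not totally mixed, exhibits an explicit profitable deviation for player~1: it shifts the probability mass from the row indexed by an unplayed column onto the row indexed by the first played column, and the strict inequality $\y>\z$ makes the shifted strategy do strictly better against some column that $\sigma_2$ plays with positive probability, contradicting optimality of $\sigma_2$. You instead invoke the standard equalization (complementary slackness) property: since $\sigma_1^*$ is totally mixed, every row payoff $u(M,i,\sigma_2^*)$ equals $\val(M)$, and telescoping these equalities yields the recurrence $\sigma_2^*(i+1)=\tfrac{\y}{\y-\z}\,\sigma_2^*(i)$, which together with normalization forces every entry to be strictly positive. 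Both arguments rest on Lemma~\ref{lem:mat_p1_totally_mixed}, and all your steps check out (the equalization fact you use is the row analogue of the property the paper records in its preliminaries on matrix games). Your version is shorter and, as a byproduct, determines player~2's optimal strategy uniquely as an explicit geometric distribution --- essentially the same computation the paper carries out separately (for player~1) in the proof of Lemma~\ref{lem:precise} to bound the patience. The paper's mass-shifting argument is more local and avoids writing the recurrence, but gains nothing extra for this particular lemma.
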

\begin{proof}
Given a strategy $\sigma_1$ for player~1 and two rows $a'$ and $a''$, let the strategy $\sigma_1[a'\rightarrow a'']$ be the strategy where the probability mass on $a'$ is moved to $a''$, i.e.,
\[
\sigma_1[a'\rightarrow a''](a)=\begin{cases}
\sigma_1(a) & \text{if }a'\neq a\neq a''\\
0 & \text{if }a=a'\\
\sigma_1(a')+\sigma_1(a'') & \text{if }a=a''\enspace .
\end{cases}
\]

Consider some optimal totally mixed strategy $\sigma_1$ for player~1, which exists by Lemma~\ref{lem:mat_p1_totally_mixed} and let $v$ be the value of $M$. Consider some strategy $\sigma_2$ for player~2 such that $u(M,\sigma_1,\sigma_2)=v$, but $\sigma_2$ is not totally mixed. We will argue that $\sigma_2$ is not optimal. This shows that any optimal strategy $\sigma_2^*$ is totally mixed, since any optimal strategy $\sigma_2$ is such that $u(M,\sigma_1,\sigma_2)=v$.

 Let $b'$ be the first column such that $\sigma_2'(b)=0$. 
There are two cases, either $b'=1$ or $b'>1$. If $b'=1$ let $b''$ be the first action such that $\sigma_2(b'')>0$. Let $\sigma_1'=\sigma_1[b'\rightarrow b'']$. 
The payoff $u(M,\sigma_1',\sigma_2)$ of playing  $\sigma_1'$ against $\sigma_2$ is strictly more than the payoff $u(M,\sigma_1,\sigma_2)$ of playing $\sigma_1$ against $\sigma_2$. This is because the payoff $u(M,\sigma_1',b'')$ is such that  
\begin{align*}
u(M,\sigma_1',b'')&=\sigma_1'(b'')\cdot \y +\z\cdot \sum_{a=1}^{b''-1}\sigma_1'(a)\\
&=\sigma_1'(b'')\cdot \y +\z\cdot \sum_{a=2}^{b''-1}\sigma_1'(a)\\
&=(\sigma_1(b'')+\sigma_1(1))\cdot \y +\z\cdot \sum_{a=2}^{b''-1}\sigma_1'(a) \\
&> \sigma_1(b'')\cdot \y +\z\cdot \sum_{a=1}^{b''-1}\sigma_1(a) \\
&= u(M,\sigma_1,b'')\enspace ,
\end{align*}
where the second equality comes from that $\sigma_1'(1)=0$. The inequality comes from that $\y>\z$.
 Also, the payoff $u(M,\sigma_1',b)$, for $b>b''$ is such that  
\begin{align*}
u(M,\sigma_1',b)&=\sigma_1'(b)\cdot \y +\z\cdot \sum_{a=1}^{b-1}\sigma_1'(a)\\
&=\sigma_1(b)\cdot \y +\z\cdot \sum_{a=1}^{b-1}\sigma_1(a)=u(M,\sigma_1,b)\enspace ,
\end{align*} 
because $\sigma_1'$ is not different from $\sigma_1$ on those actions.
We can then find the payoff $u(M,\sigma_1',\sigma_2)$ as follows 
\begin{align*}
& u(M,\sigma_1',\sigma_2)=\sum_{b=1}^m \sigma_2(b)\cdot u(M,\sigma_1',b) \\
& =\sum_{b=b''}^m \sigma_2(b)\cdot u(M,\sigma_1',b) \\
& =\sigma_2(b'')\cdot u(M,\sigma_1',b'') + \sum_{b=b''+1}^m \sigma_2(b)\cdot u(M,\sigma_1',b) \\
& >\sigma_2(b'')\cdot u(M,\sigma_1,b'') + \sum_{b=b''+1}^m \sigma_2(b)\cdot u(M,\sigma_1,b) \\
&=u(M,\sigma_1,\sigma_2) \enspace ,
\end{align*} 
where the second equality comes from that  $b''$ is the first action $\sigma_2$ plays with positive probability. 
Since the payoff $u(M,\sigma_1,\sigma_2)$ is the value, by definition of $\sigma_2$, and the payoff $u(M,\sigma_1',\sigma_2)$ is strictly more, the strategy $\sigma_2$ cannot be optimal.
This completes the case where $b'=1$.

The case where $b'\neq 1$ follows similarly but considers $\sigma_1''=\sigma_1[b'\rightarrow 1]$ instead of $\sigma_1'$.
\end{proof}

\begin{lemma}\label{lem:precise}
For all positive integers $m$ and $0<\eps\leq \frac{1}{2}$, 
the matrix game $M=M^{0,\frac{1}{2}+\eps,\frac{1}{2},m}$ has the following properties:
\begin{itemize}
\item{\em \bf Property 1.} The patience of any optimal strategy is (i)~at least $(2\eps)^{-m+1}$ and (ii)~decreasing in $\eps$.
\item{\em \bf Property 2.} The value is (i)~at most $\frac{1}{2}+\eps\cdot (2\eps)^{m-1}$ and (ii)~increasing in $\eps$.
\item{\em \bf Property 3.} Any optimal strategy $\sigma_1$ for player~1 (resp., $\sigma_2$ for player~2) is such that $\sigma_1(1)>\frac{1}{2}$ (resp., $\sigma_2(m)>\frac{1}{2}$).
\item{\em \bf Property 4.} For $\eps=\frac{1}{2}$, the value is $\val(M)=\frac{1}{2}+\frac{1}{2^{m+1}-2}$  and the patience of any optimal strategy is $2^{m}-1$.

\end{itemize}
\end{lemma}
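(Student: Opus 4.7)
The plan is to leverage Lemmas \ref{lem:mat_p1_totally_mixed} and \ref{lem:mat_p2_totally_mixed}, which guarantee that every optimal strategy of either player in $M = M^{0, 1/2+\eps, 1/2, m}$ is totally mixed. By the standard indifference principle, total mixedness of $\sigma_1$ forces the linear system
\[ (1/2+\eps)\sigma_1(j) + (1/2)\sum_{i<j}\sigma_1(i) = \val(M) \qquad (j=1,\ldots,m). \]
Subtracting the $j$th equation from the $(j+1)$st yields the geometric recurrence $\sigma_1(j+1) = r \cdot \sigma_1(j)$ with $r := 2\eps/(1+2\eps)\in(0,1)$, and combining with $\sum_j \sigma_1(j) = 1$ pins down $\sigma_1(j) = r^{j-1}(1-r)/(1-r^m)$ uniquely. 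An analogous row-indifference argument for $\sigma_2$ yields the reversed profile $\sigma_2(j) = r^{m-j}(1-r)/(1-r^m)$, and the indifference equation at $j=1$ gives $\val(M) = (1/2+\eps)\sigma_1(1) = 1/(2(1-r^m))$, after using $(1/2+\eps)(1-r) = 1/2$.

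With these closed forms in hand, all four properties reduce to algebraic checks that pivot on a single identity, $(1+2\eps)^m - (2\eps)^m \geq 1$, which is immediate from the binomial expansion since the $k=0$ term contributes exactly $1$. I would isolate this as a preliminary claim. For Property~1, a short calculation gives patience $1/\sigma_1(m) = ((1+2\eps)^m - (2\eps)^m)/(2\eps)^{m-1}$, so the identity yields the $(2\eps)^{-m+1}$ lower bound, and monotonicity follows by rewriting $\sigma_1(m) = 1/(1 + r^{-1} + \cdots + r^{-m+1})$: $r$ is increasing in $\eps$, hence the denominator is decreasing in $r \in (0,1)$, hence the patience is decreasing in $\eps$. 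For Property~2, $\val(M) = 1/(2(1-r^m))$ is plainly increasing in $r$ and thus in $\eps$, and $\val(M) - 1/2 = r^m/(2(1-r^m))$; the identity rewritten as $r^m/(1-r^m) \leq (2\eps)^m$ yields the upper bound. For Property~3, $\sigma_1(1) = 1/\sum_{k=0}^{m-1} r^k$; since $\eps \leq 1/2$ forces $r \leq 1/2$, the geometric sum is strictly less than $2$, giving $\sigma_1(1) > 1/2$, and $\sigma_2(m) = \sigma_1(1)$ handles the other player. Property~4 is obtained by substituting $\eps = 1/2$ (hence $r = 1/2$) into the closed forms for value and patience.

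The main obstacle I anticipate is purely bookkeeping: keeping track of the two symmetric-but-not-identical systems for $\sigma_1$ and $\sigma_2$, and making the monotonicity arguments transparent without resorting to calculus on $\eps$. Expressing every quantity directly in terms of $r$ and isolating the identity $(1+2\eps)^m - (2\eps)^m \geq 1$ at the outset should reduce each of the four properties to a short computation, avoiding any duplicated algebra.
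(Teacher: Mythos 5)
Your proposal is correct and follows essentially the same route as the paper: both derive the indifference system from the totally-mixed lemmas, obtain the geometric ratio $r=2\eps/(1+2\eps)$ between consecutive probabilities, and read off all four properties from the resulting closed forms (the paper bounds $\sigma_1(m)$ via a positive-coefficient polynomial where you use the binomial identity $(1+2\eps)^m-(2\eps)^m\ge 1$, but these are the same computation). One small attribution slip: the indifference of $\sigma_1$ across all columns follows from the total mixedness of player~2's optimal strategies (Lemma~\ref{lem:mat_p2_totally_mixed}), not from the total mixedness of $\sigma_1$ itself — but since you invoke both lemmas at the outset, nothing is actually missing.
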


\begin{proof}
Let $\sigma_i$ be an optimal strategy for player~$i$ in $M$, for each $i$.
By Lemma~\ref{lem:mat_p1_totally_mixed} and Lemma~\ref{lem:mat_p2_totally_mixed} the strategy $\sigma_i$ is totally mixed for each $i$. 
We can therefore consider the vector $\overline{v}$. Recall that $\overline{v}_j=u(M,\sigma_1,j)$ and that for each $j$ such that $\sigma_2(j)>0$ we have that $\overline{v}_j=\val(M)$. Hence, since $\sigma_2$ is totally mixed, all entries of $\overline{M}$ are $\val(M)$.  
For any row $a'<m$, that $\overline{v}_{a'}=\overline{v}_{a'+1}$ implies that 
\begin{align*}
(\frac{1}{2}+\eps)\cdot \sigma_1(a') +\frac{1}{2}\cdot \sum_{a=1}^{a'-1}\sigma_1(a)&
\\=(\frac{1}{2}+\eps)\cdot \sigma_1(a'+1) +\frac{1}{2}\cdot \sum_{a=1}^{a'}\sigma_1(a) & \Rightarrow \\
\eps\cdot \sigma_1(a') =(\frac{1}{2}+\eps)\cdot \sigma_1(a'+1) & \Rightarrow\\
\sigma_1(a') =\frac{\frac{1}{2}+\eps}{\eps}\cdot \sigma_1(a'+1) & \enspace ,
\end{align*}
indicating that $\sigma_1(a')>\sigma_1(a'+1)$ and thus the patience is $1/\sigma_1(m)$.
Also, since $\sigma_1$ is a probability distribution 
\begin{align*}
1&=\sum_{a=1}^m\sigma_1(a)\\
&=\sigma_1(m)\cdot \sum_{a=1}^m\left(\frac{\frac{1}{2}+\eps}{\eps}\right)^{m-a}
\end{align*}
We then get that \[
\sigma_1(m)=\frac{1}{\sum_{a=1}^m\left(\frac{\frac{1}{2}+\eps}{\eps}\right)^{m-a}}\]
We have that $\frac{\frac{1}{2}+\eps}{\eps}=1+\frac{1}{2\eps}$ is decreasing in $\eps$.
This indicates that $\sigma_1(m)$ is increasing in $\eps$ and thus the patience is decreasing in $\eps$. This shows (ii) of Property~1 for player~1.
We also have that $\val(M)=\overline{v}_{m}$ indicating that \begin{align*}
\val(M)&=\overline{v}_{m} \\
&=\sigma_1(m)\cdot(\frac{1}{2}+\eps)+\frac{1}{2}\cdot\sum_{a=1}^{m-1}\sigma_1(a)\\
&=\eps\cdot \sigma_1(m)+\frac{1}{2}
\end{align*}
and thus, the value is increasing in $\eps$ (because $\eps$ and $\sigma_1(m)$ both are). This shows (ii) of Property~2.

Also, we get that, \begin{align*}
\sigma_1(m)&=\frac{1}{\sum_{a=1}^m\left(\frac{\frac{1}{2}+\eps}{\eps}\right)^{m-a}} \\
&=\frac{\eps^{m-1}}{
\sum_{a=1}^m (\frac{1}{2}+\eps)^{m-a}\cdot \eps^{a-1}} \\
&=\frac{\eps^{m-1}}{
\left(\frac{1}{2}\right)^{m-1}+\eps\cdot p(\eps)} \enspace ,
\end{align*}
where $p$ is some polynomial of degree $m-1$ in which all terms have a positive sign ($p$ is found by multiplying out $\sum_{a=1}^m (\frac{1}{2}+\eps)^{m-a}\cdot \eps^{a-1}$). 
Hence, we have that $\sigma_1(m)$ is at most \[\sigma_1(m)=\frac{\eps^{m-1}}{
\left(\frac{1}{2}\right)^{m-1}+\eps\cdot p(\eps)}<(2\eps)^{m-1}\enspace .\]
Thus, the patience is at least $(2\eps)^{-m+1}$. This shows (i) of Property~1 for player~1.
Using that $\val(M)=\eps\cdot \sigma_1(m)+\frac{1}{2}$ from above, we get that $\val(M)<\frac{1}{2}+\eps\cdot (2\eps)^{m-1}$. This shows (i) of Property~2.

Furthermore, we can also consider the vector $\overline{v'}$ such that $\overline{v'}_j=u(M,j,\sigma_2)$ for all $j$ (which like $\overline{v}$ has all entries equal to $\val(M)$). Since the expression, when $\sigma_2$ is taken to be an unknown vector, for the $j$'th entry of $\overline{v'}$ is the same as for the $m+1-j$'th entry of $\overline{v}$, when $\sigma_1$ is taken to be an unknown vector, we see that $\sigma_1(a)=\sigma_2(m+1-a)$, implying that the patience of player~2's optimal strategies is also at least $(2\eps)^{-m+1}$ and that it is decreasing in $\eps$. This shows Property~1 for player~2. 

Observe that since the value is above $\frac{1}{2}$, by Lemma~\ref{lem:mat_val_above}, we have that $\sigma_1(1)>\frac{1}{2}$ (because otherwise, if player~2 plays $1$ with probability~1, the payoff will not be above $\frac{1}{2}$) and thus also $\sigma_2(m)>\frac{1}{2}$. This shows Property~3.

Also, for $\eps=\frac{1}{2}$ we see that \begin{align*}
\sigma_1(m)&=\frac{1}{\sum_{a=1}^m\left(\frac{\frac{1}{2}+\eps}{\eps}\right)^{m-a}}  \\
&=\frac{1}{\sum_{a=1}^m 2^{m-a}}  \\
&=\frac{1}{2^{m}-1} \enspace .
\end{align*}
Similarly to above, we also get that $\sigma_2(m)=\frac{1}{2^{m}-1}$ and that $\val(M)=\frac{1}{2}+\frac{1}{2^{m+1}-2}$. This shows Property~4 and completes the proof.
\end{proof}

\begin{lemma}\label{lem:mat_wh}
Given a positive integer $m$ and reals $\y$ and $\z$ such that $1>\z>\y$, the matrix game $M=M^{1,\y,\z,m}$ has the following properties:
\begin{itemize}
\item The value $\val(M)<\z$.
\item Each optimal strategy $\sigma_i$ for player~$i$ is such that there exists an optimal strategy $\sigma_{\widehat{i}}$ for player~$\widehat{i}$ in $M^{0,1-\y,1-\z,m}$ where $\sigma_{i}(j)=\sigma_{\widehat{i}}(m-j+1)$.
\end{itemize}
\end{lemma}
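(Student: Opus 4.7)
The plan is to relate $M = M^{1,\y,\z,m}$ to $M' = M^{0,1-\y,1-\z,m}$ via an anti-diagonal complementation. My first step is to verify, by a three-case analysis (on whether $i<j$, $i=j$, or $i>j$), the pointwise identity
\[
M_{ij} + M'_{m+1-j,\, m+1-i} \;=\; 1 \qquad \text{for all } i,j\in\{1,\ldots,m\}.
\]
Since $1>\z>\y$ implies $0<1-\z<1-\y$, I would then apply Lemma~\ref{lem:mat_val_above} to $M'$ to conclude $\val(M')>1-\z$.

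Next, given any optimal strategy $\sigma_1$ for player~1 in $M$, I define the reversed distribution $\tilde\sigma_2$ over the columns of $M'$ by $\tilde\sigma_2(k):=\sigma_1(m+1-k)$. Substituting the identity into $u(M',i,\tilde\sigma_2)$ and then reindexing by $k=m+1-j$ gives
\[
u(M',i,\tilde\sigma_2)\;=\;1-u(M,\sigma_1,m+1-i).
\]
Since $\sigma_1$ is optimal, $u(M,\sigma_1,c)\geq\val(M)$ for every column $c$, so $u(M',i,\tilde\sigma_2)\leq 1-\val(M)$ for every row $i$. This simultaneously proves $\val(M')\leq 1-\val(M)$ and that $\tilde\sigma_2$ is an optimal strategy for player~2 in $M'$. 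A symmetric argument starting from an optimal $\sigma_2$ for player~2 in $M$ and setting $\tilde\sigma_1(k):=\sigma_2(m+1-k)$ yields the reverse inequality $\val(M')\geq 1-\val(M)$ together with optimality of $\tilde\sigma_1$ for player~1 in $M'$.

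Combining the two inequalities gives $\val(M)=1-\val(M')<1-(1-\z)=\z$, which is the first item of the lemma. The two constructions also exhibit, for every optimal $\sigma_i$ in $M$, an optimal $\sigma_{\widehat{i}}$ in $M'$ satisfying $\sigma_i(j)=\sigma_{\widehat{i}}(m-j+1)$, which is the second item. The only real work is the pointwise identity; everything afterwards is a mechanical translation of optimality through that identity, so I do not anticipate any significant obstacle beyond the routine case check.
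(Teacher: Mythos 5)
Your proposal is correct and takes essentially the same route as the paper: the paper realizes your anti-diagonal complementation $M_{ij}+M'_{m+1-j,\,m+1-i}=1$ as an explicit three-step chain (transpose and negate to swap the players, reverse the indices, then add $1$ to every entry), arriving at $M^{0,1-\y,1-\z,m}$ with value $1-\val(M)$ and the same index-reversed strategy correspondence, and likewise closes by applying Lemma~\ref{lem:mat_val_above}. The only cosmetic point is that optimality of $\tilde\sigma_2$ in $M'$ is not established ``simultaneously'' with the inequality $\val(M')\le 1-\val(M)$ but only once the symmetric argument supplies the reverse inequality; since you do combine the two, the argument is complete.
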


\begin{proof}
Let a positive integer $m$ and reals $\y$ and $\z$ such that $1>\z>\y$ be given.
Consider $M$ and let $v$ be the value of $M$. Exchange the roles of the players by exchanging the rows and columns and multiply the matrix by~$-1$. We get the matrix \[M^1=
\M{-\z}{-\y}{-1} \enspace .
\]
We then have that each optimal strategy $\sigma_1$ in $M$ is an optimal strategy for player~2 in $M^1$ and similarly, each optimal strategy $\sigma_2$ for player~2 in $M$ is an optimal strategy for player~1 in $M^1$ (and vice versa).  Also, the value $v_1$ of $M^1$ is $v_1:=-v$.

Let $M^2$ be the matrix where $M^2_{a,b}=M^1_{m+1-a,m+1-b}$, i.e., 
\[M^2=\M{-1}{-\y}{-\z} \enspace .
\]
For each $i$, and for any optimal strategy $\sigma_i$ for player~$i$ in $M^1$ the strategy $\sigma_i'$ is optimal for player~$i$ in $M^2$, where $\sigma_i'(a)=\sigma_i(m+1-a)$ for each $a$ (and vice versa). Also, the value $v_2$ of $M^2$ is $v_2:=v_1=-v$. 

Next, let $M^3$ be the matrix $M^2$ where we add 1 to each entry, i.e.,
\[M^3=\M{0}{1-\y}{1-\z} \enspace .
\]
For each $i$, it is clear that an optimal strategy in $\sigma_i$ for player~$i$ in $M^2$ is an optimal strategy for player~$i$ in $M^3$ and that the value $v_3$ is $v_3:=1+v_2=1-v$. Also, we see that $M^3=M^{0,1-\y,1-\z,m}$ and that $0<1-\z<1-\y$.

We then get that $1-v>1-\z$ from Lemma~\ref{lem:mat_val_above} and thus $v<\z$. 
\end{proof}

\subsection{The patience of optimal strategies} 
In this section we present an approximation of the values of the states 
and the patience of the optimal strategies in the Purgatory Duel. 
We first show that the values of the states (besides $\top$ and $\bot$) are strictly between 0 and 1. 

\begin{lemma}\label{lem:pos val}
Each state \[v\in \{v_1^1,v_2^1,\dots, v_n^1,v_1^2,v_2^2,\dots, v_2^1, v_s\}\] is such that $\val(v)\in [\frac{1}{m^{n+2}},1-\frac{1}{m^{n+2}}]$
\end{lemma}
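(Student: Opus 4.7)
The plan is to exhibit, for each bound, a uniform stationary witness strategy. Let $\sigma_1^\ast$ (resp.\ $\sigma_2^\ast$) denote the strategy that plays each of the $m$ actions with probability $1/m$ at every state of the form $v_j^i$. The crucial observation is that from any $v_j^i$, against \emph{any} action of the opponent, a uniformly randomizing player hits the ``equal'' transition (to $v_{j+1}^i$) with probability exactly $1/m$, since precisely one action matches.

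For the lower bound I fix $\sigma_1^\ast$ and analyze $p(s) := \inf_{\sigma_2} u(s, \sigma_1^\ast, \sigma_2)$; since $\sigma_1^\ast$ is stationary, $p$ is the value of a player-2 MDP, attained by a pure stationary response. Writing the one-step expected payoff at $v_j^i$ as a linear function of player~2's action $k$, the coefficient of $k$ works out to $-p(v_s)/m$ at $v_j^1$ and $(1-p(v_s))/m$ at $v_j^2$, so (assuming $0 < p(v_s) < 1$) the adversarial best response is $k=m$ at each $v_j^1$ (diverting non-equal mass to $\bot$) and $k=1$ at each $v_j^2$ (diverting non-equal mass from $\top$ into $v_s$). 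The resulting Bellman equations, together with $p(\top)=1$, $p(\bot)=0$, and $p(v_s) = \tfrac{1}{2}(p(v_1^1) + p(v_1^2))$, solve to
\[
p(v_j^1) = \frac{1}{m^{n-j+1}}, \qquad p(v_s) = \frac{1}{m^n+1}, \qquad p(v_j^2) = p(v_s)\!\left(1 - \frac{1}{m^{n-j+1}}\right).
\]
The hypothesis $0 < p(v_s) < 1$ is then self-verifying, and elementary arithmetic (the binding case being $p(v_n^2) = (m-1)/(m(m^n+1))$, which reduces to $m^n(m^2-m-1) \geq 1$) confirms each of these is at least $1/m^{n+2}$ for $m \geq 2$ and $n \geq 1$.

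For the upper bound I run the mirror analysis with $\sigma_2^\ast$: setting $q(s) := \sup_{\sigma_1} u(s, \sigma_1, \sigma_2^\ast)$, the adversarial player-1 responses flip to $a_1 = m$ at $v_j^1$ and $a_1 = 1$ at $v_j^2$, and the analogous Bellman equations yield $q(v_s) = m^n/(m^n+1)$, $q(v_j^1) = q(v_s) + (1 - q(v_s))/m^{n-j+1}$, and $q(v_j^2) = 1 - 1/m^{n-j+1}$. The complements $1 - q(s)$ then satisfy exactly the same inequalities as $p(s)$ above, giving $\val(v) \leq q(v) \leq 1 - 1/m^{n+2}$. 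The only subtle point is justifying the adversarial pure best responses; this follows from signing the linear coefficients of the opponent's action in the one-step payoff, which uses only $0 < p(v_s), q(v_s) < 1$---a condition that falls out a posteriori from the closed-form solutions.
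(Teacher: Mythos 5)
Your proof is correct and rests on the same idea as the paper's: fix the uniform strategy at every $v_j^i$ as a witness and bound the value of the induced MDP (the paper does this with a one-line path argument, observing that regardless of the opponent each step of the chain $v_j^2 \to v_s \to v_1^1 \to \dots \to \top$ succeeds with probability at least $1/m$, so $\top$ is reached within $n+2$ steps with probability at least $m^{-(n+2)}$). Your exact solution of the Bellman equations is a valid, if considerably more laborious, way to carry out the same plan, and your closed-form values and the binding case $p(v_n^2)=(m-1)/(m(m^n+1))\geq m^{-(n+2)}$ check out.
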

\begin{proof}
Fix $v\in \{v_1^1,v_2^1,\dots, v_n^1,v_1^2,v_2^2,\dots, v_2^1, v_s\}$.
The fact that $\val(v)\geq \frac{1}{m^{n+2}}$ follows from that if player~1 plays uniformly at random all actions in every state $v^i_j$ for all $i,j$, 
then against all strategies for player~2 there is a probability of at least $\frac{1}{m}$ to go 
(1)~from $v_j^1$ to $v_{j+1}^1$, for all $j$; and 
(2)~from $v_s$ to $v_1^1$; and 
(3)~from $v_j^2$ to $v_s$, for all $j$. 
By following such steps for at most $n+2$ steps, the state $v_{n+1}^1=\top$ is reached.
Similarly that  $\val(v)\leq 1-\frac{1}{m^{n+2}}$ follows from player~2 playing uniformly at random all actions 
in every state $v^i_j$ for all $i,j$ (and using that $\top$ cannot be reached from $\bot$).
\end{proof}

Next we show that every optimal stationary strategy for player~2 must be totally mixed.

\begin{lemma}\label{lem:opt_totally_mixed}
Let $\sigma_2$ be an optimal stationary strategy for player~2. The distribution $\sigma_2(v_j^i)$ is totally mixed and $\val(v^1_j)>\val(v_s)>\val(v^2_j)$, for all $i,j$.
\end{lemma}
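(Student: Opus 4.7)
The plan is to first prove the value inequality $\val(v^1_j) > \val(v_s) > \val(v^2_j)$ by backward induction on $j$ from $n$ down to $1$, and then to invoke the correspondence recalled earlier in the excerpt (an optimal stationary strategy restricts to an optimal strategy in each local matrix game $A^s$) together with the preceding matrix-game lemmas, to conclude that $\sigma_2(v^i_j)$ is totally mixed for every $i,j$.

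First I identify the local matrix games from the transition structure of the Purgatory Duel: the matrix game $A^{v^1_j}$ has the same value as $M^{0, \val(v^1_{j+1}), \val(v_s), m}$, and $A^{v^2_j}$ has the same value as $M^{1, \val(v^2_{j+1}), \val(v_s), m}$ (possibly after transposition, which preserves the value and puts the two players' optimal strategies in bijection by swapping row/column roles). By Lemma~\ref{lem:pos val}, $0 < \val(v_s) < 1$, so the hypotheses of Lemmas~\ref{lem:mat_val_above} and~\ref{lem:mat_wh} can be verified inductively: the base case $j = n$ uses $\val(v^1_{n+1}) = \val(\top) = 1 > \val(v_s) > 0 = \val(\bot) = \val(v^2_{n+1})$, and the inductive step reuses the induction hypothesis at $j+1$ to obtain the strict inequalities required by those lemmas, yielding $\val(v^1_j) > \val(v_s) > \val(v^2_j)$.

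Given the value inequalities, $\sigma_2(v^i_j)$ is optimal in $A^{v^i_j}$. For $i = 1$ the matrix is of the form $M^{0, y, z, m}$ with $0 < z = \val(v_s) < y = \val(v^1_{j+1})$, so Lemma~\ref{lem:mat_p2_totally_mixed} (or its transposed analogue obtained from Lemma~\ref{lem:mat_p1_totally_mixed}) directly implies that $\sigma_2(v^1_j)$ is totally mixed. For $i = 2$ the matrix is of the form $M^{1, y, z, m}$ with $1 > z = \val(v_s) > y = \val(v^2_{j+1})$; here the second bullet of Lemma~\ref{lem:mat_wh} puts player~2's optimal strategies in bijection, via the action reversal $\sigma_i(j) = \sigma_{\widehat{i}}(m - j + 1)$, with optimal strategies in a matrix of the form $M^{0, 1-y, 1-z, m}$, which are totally mixed by Lemmas~\ref{lem:mat_p1_totally_mixed} and~\ref{lem:mat_p2_totally_mixed}; since the reversal preserves being totally mixed, $\sigma_2(v^2_j)$ is totally mixed as well.

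The step I expect to require the most bookkeeping is the initial one: correctly identifying $A^{v^i_j}$ up to transposition and tracking which player's optimal-strategy lemma applies after the row/column swap. Once that is handled consistently, the remainder is a clean backward induction plus a direct application of the matrix-game lemmas already established in the preceding subsection.
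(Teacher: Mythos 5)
Your proposal is correct and follows essentially the same route as the paper: identify $A^{v^1_j}$ and $A^{v^2_j}$ with $M^{0,\val(v^1_{j+1}),\val(v_s),m}$ and $M^{1,\val(v^2_{j+1}),\val(v_s),m}$ (up to the value-preserving action-reversal symmetry, which for these diagonal-structured matrices coincides with transposition), run a backward induction from $j=n$ using Lemma~\ref{lem:pos val} and Lemma~\ref{lem:mat_val_above} (resp.\ Lemma~\ref{lem:mat_wh}) for the value inequalities, and then invoke the local-optimality characterization together with Lemmas~\ref{lem:mat_p1_totally_mixed} and~\ref{lem:mat_p2_totally_mixed} for total mixedness. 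Your explicit handling of the transposition/relabeling is, if anything, more careful than the paper's.
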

\begin{proof}
Let $v=v_j^i$ for some $i,j$. We will use that $\val(v)=\val(A^v)$.
For $i=1$ we have that $A^v=M^{0,\val(v_{j+1}^1),\val(v_s),m}$ and for $i=2$ we have that $A^v=M^{1,\val(v_{j+1}^2),\val(v_s),m}$. 

Consider first $i=1$.
We will show using induction in $j$ (with base case $j=n$ and proceeding downwards), that $\val(v_{j}^1)>\val(v_s)$ and that the distribution $\sigma_2(v_j^1)$ is totally mixed.

{\bf Base case, $j=n$:}
We have that $A^v=M^{0,1,\val(v_s),m}$. By Lemma~\ref{lem:pos val} we have that $1>\val(v_s)>0$ and thus, that $\val(v)>\val(v_s)$ follows from Lemma~\ref{lem:mat_val_above}. That $\sigma_2(v)$ is totally mixed follows from Lemma~\ref{lem:mat_p2_totally_mixed}.

{\bf Induction case, $j\leq n-1$:}
We have that $A^v=M^{0,\val(v_{j+1}^1),\val(v_s),m}$. By Lemma~\ref{lem:pos val} we have that $\val(v_s)>0$ and by induction we have that $\val(v_{j+1}^1)>\val(v_s)$ and thus, that $\val(v)>\val(v_s)$ follows from Lemma~\ref{lem:mat_val_above}. That $\sigma_2(v)$ is totally mixed follows from Lemma~\ref{lem:mat_p2_totally_mixed}.

The argument for $i=2$ is similar but uses Lemma~\ref{lem:mat_wh} together with Lemma~\ref{lem:mat_p1_totally_mixed}, instead of Lemma~\ref{lem:mat_p2_totally_mixed} and Lemma~\ref{lem:mat_val_above}.
\end{proof}

Next, we show that if either player follows a stationary strategy that is totally mixed on at least one side (that is, if there is an $i'$, such that for each $j$ the stationary strategy plays totally mixed in $v^{i'}_j$), then eventually either $\top$ or $\bot$ is reached with probability~1.

\begin{lemma}\label{lem:partial_mixed}
For any $i$ and $i'$, let $\sigma_i$ be a stationary strategy for player~$i$, such that $\sigma_i(v^{i'}_j)$ is totally mixed for all $j$. Let $\sigma_{\widehat{i}}$ be some positional strategy for the other player. 
Then, each closed recurrent set in the Markov chain defined by the game and $\sigma_i$ and $\sigma_{\widehat{i}}$ consists of only the state $\top$ or only the state $\bot$.  
\end{lemma}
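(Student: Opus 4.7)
The plan is to show that from every non-absorbing state of the induced Markov chain there is a path of transitions, each of positive probability, ending in $\top$ or $\bot$. Together with the observation that $\{\top\}$ and $\{\bot\}$ are themselves closed recurrent singletons (they are absorbing), this precludes any non-absorbing state from lying in a closed recurrent set, so $\{\top\}$ and $\{\bot\}$ are the only closed recurrent sets.

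The heart of the argument concerns a state $v^{i'}_j$ on the ``mixed side''. The hypothesis says $\sigma_i(v^{i'}_j)$ is totally mixed while $\sigma_{\widehat i}(v^{i'}_j)$ is a pure action, call it $b_j$. Two cases arise. If $b_j$ is not the extreme action for $\widehat i$ (i.e.\ not $1$ when $\widehat i=2$ and not $m$ when $\widehat i=1$), then some action in the support of $\sigma_i(v^{i'}_j)$ loses the comparison to $b_j$ and thus transitions in one step to $v^{i'}_0\in\{\top,\bot\}$. Otherwise, the matching action $a=b_j$ is in the support and transitions to $v^{i'}_{j+1}$. Applying this dichotomy iteratively at $v^{i'}_j,v^{i'}_{j+1},\dots$, we either drop directly into $\{\top,\bot\}$ at some intermediate index $\ell\le n$, or ride matching transitions all the way to $v^{i'}_{n+1}\in\{\top,\bot\}$; each link has positive probability because $\sigma_i$ is totally mixed at every $v^{i'}_\ell$.

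For the remaining states $v^{\widehat{i'}}_j$ and $v_s$ we have no mixing assumption. The transition from $v^{\widehat{i'}}_j$ in the induced chain is supported on $\{v_s,\,v^{\widehat{i'}}_0,\,v^{\widehat{i'}}_{j+1}\}$. If $v^{\widehat{i'}}_0\in\{\top,\bot\}$ is in the support we are done; if $v_s$ is in the support then $v_s\to v^{i'}_1$ with probability $\tfrac12$, and the preceding paragraph supplies a positive-probability continuation to an absorbing state; otherwise the transition is the deterministic match to $v^{\widehat{i'}}_{j+1}$, and iterating this for $j,j+1,\dots,n$ either triggers one of the two earlier cases at some intermediate index or reaches $v^{\widehat{i'}}_{n+1}\in\{\top,\bot\}$. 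Finally $v_s$ itself transitions to $v^{i'}_1$ with probability $\tfrac12$ and so inherits a positive-probability path to absorption from the first step.

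The main technical difficulty is the second step: since no mixing is imposed on the $v^{\widehat{i'}}$-side, we cannot force absorption with a single totally-mixed action, and must trace the induced chain through its possibly-deterministic transitions, crucially using the uniform $\tfrac12/\tfrac12$ split at $v_s$ as a bridge back to the mixed side $v^{i'}_1$, where the first step's argument supplies the remainder of the path.
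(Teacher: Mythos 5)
Your proposal is correct and follows essentially the same route as the paper's proof: the key observation in both is that total mixing on side $i'$ guarantees the matching transition $v^{i'}_\ell\to v^{i'}_{\ell+1}$ has positive probability at every level, yielding a positive-probability path from $v^{i'}_1$ up to the absorbing state $v^{i'}_{n+1}$ (or an early drop to $v^{i'}_0$), while $v_s$ bridges to that side with probability $\tfrac12$ and the other side either absorbs or falls back to $v_s$ within $n+1$ steps. Your treatment of the unmixed side is just a more explicit case analysis of what the paper compresses into one sentence.
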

\begin{proof}
In the Markov chain defined by the game and $\sigma_i$ and $\sigma_{\widehat{i}}$, we have that there are at most two closed recurrent sets, namely, the one consisting of only $\top$ and the one consisting of only $\bot$. The reasoning is as follows:
If either $\top$ or $\bot$ is reached, then the respective state will not be left. Also, for each $j$, since $\sigma_i$ is totally mixed there is a positive probability to go to either $v^{i'}_0$ or $v^{i'}_{j+1}$ from $v^{i'}_j$ (the remaining probability goes to $v_s$). The probability to go from $v_s$ to $v^{i'}_{1}$ in one step is $\frac{1}{2}$. Also if neither $\top$ nor $\bot$ has been reached, then $v_s$ is visited after at most $n+1$ steps.
Hence, in every $n+1$ steps there is a positive probability that in the next $n+1$ steps either $\top$ or $\bot$ is reached (i.e., from $v_s$ there is a positive probability that the next states are either (i)~$v_{1}^{i'},\dots, v_{j}^{i'},v_{0}^{i'}$; or (ii)~$v_{1}^{i'},\dots, v_{n}^{i'},v_{n+1}^{i'}$). This shows that eventually either $\top$ or $\bot$ is reached with probability~1.
\end{proof}

\begin{remark}
Note that Lemma~\ref{lem:partial_mixed} only requires that the strategy $\sigma_i$ is totally mixed on one ``side'' of the Purgatory Duel. For the purpose of this section, we do not use that it only requires one side to be totally mixed, since we only use the result for optimal strategies for player~2, which are totally mixed by Lemma~\ref{lem:opt_totally_mixed}. However the lemma will be reused in the next section, where the one sidedness property will be useful.  
\end{remark}

The following definition basically \emph{``mirrors''} a strategy $\sigma_i$ for player~$i$, for each $i$ and gives it to the other player. 
We show (in Lemma~\ref{lem:mirror}) that if $\sigma_2$ is optimal for player~2, then the mirror strategy is optimal for player~1. 
We also show that if $\sigma_2$ is an $\eps$-optimal strategy for player~2, for $0<\eps<\frac{1}{3}$, then so is the mirror strategy 
for player~1 (in Lemma~\ref{thm:epsilon_mirror}). 
\begin{definition}[Mirror strategy]
Given a stationary strategy $\sigma_i$ for player~$i$, for either $i$, let the mirror strategy $\sigma_{\widehat{i}}^{\sigma_i}$ for player~$\widehat{i}$ 
be the stationary strategy where $\sigma_{\widehat{i}}^{\sigma_i}(v_j^{\widehat{i'}})=\sigma_i(v_j^{i'})$ for each $i'$ and $j$.
\end{definition}

We next show that player~1 has optimal stationary strategies in the Purgatory Duel 
and give expressions for the values of states.

\begin{lemma}\label{lem:mirror}
Let $\sigma_2$ be some optimal stationary strategy for player~2. 
Then the mirror strategy $\sigma_1^{\sigma_2}$ is optimal for player~1. 
We have $\val(v_s)=\frac{1}{2}$ and $\val(v^i_j)=1-\val(v^{\widehat{i}}_j)$, 
for all $i,j$.
\end{lemma}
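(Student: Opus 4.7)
The plan is to exploit a player-swap symmetry of the Purgatory Duel. Define the involution $\phi\colon S\to S$ that swaps $v_j^1\leftrightarrow v_j^2$ for each $j$, swaps $\top\leftrightarrow\bot$, and fixes $v_s$. The first step is to verify the \textbf{symmetry identity}: for every $s,s'\in S$ and $a,b\in A$,
\[
\delta(\phi(s),a,b)(\phi(s')) \;=\; \delta(s,b,a)(s').
\]
For $v_s$ and the absorbing states this is immediate, and for $v_j^i$ it follows because the transition rule (\emph{matcher} climbs one rung, \emph{higher guesser} falls to $v_s$, \emph{lower guesser} escapes to $v_0^i$) is the same on both sides, with the matcher/guesser roles of the two players interchanged between $v_j^1$ and $v_j^2$. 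Informally, the game is invariant under simultaneously relabelling states by $\phi$ and swapping the two players.

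Next I would establish a \textbf{mirror identity}: for any stationary strategies $\sigma_1,\sigma_2$, writing $\sigma_1' := \sigma_1^{\sigma_2}$ and $\sigma_2' := \sigma_2^{\sigma_1}$ for their mirrors, provided both Markov chains $(\sigma_1,\sigma_2)$ started at $s$ and $(\sigma_1',\sigma_2')$ started at $\phi(s)$ terminate almost surely in $\{\top,\bot\}$, one has
\[
u(s,\sigma_1,\sigma_2) \;=\; 1 - u(\phi(s),\sigma_1',\sigma_2').
\]
To see this, unfold the one-step transition at paired states: the probability of moving from $\phi(s)$ to $\phi(s')$ under $(\sigma_1',\sigma_2')$ is $\sum_{a,b}\sigma_1'(\phi(s))(a)\sigma_2'(\phi(s))(b)\delta(\phi(s),a,b)(\phi(s'))$, which after substituting the definition of the mirror strategies (which, by construction, trade the two players' choices under $\phi$) and applying the symmetry identity (then renaming $a\leftrightarrow b$) collapses to the probability of moving from $s$ to $s'$ in the first chain. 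Hence the two chains are isomorphic via $\phi$, and $\phi$ interchanges the absorbing states $\top$ and $\bot$.

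Now take $\sigma_2^*$ to be an optimal stationary strategy for player~2, which is totally mixed on every $v_j^i$ by Lemma~\ref{lem:opt_totally_mixed}, and set $\sigma_1^* := \sigma_1^{\sigma_2^*}$. For any positional player~2 strategy $\sigma_2''$ (which suffices when fixing a stationary player~1 strategy, by the remark following the definition of $\eps$-optimality), Lemma~\ref{lem:partial_mixed} gives almost-sure termination of both chains $(\sigma_1^*,\sigma_2'')$ from $s$ and $(\sigma_1^{\sigma_2''},\sigma_2^*)$ from $\phi(s)$: in the first, $\sigma_1^*$ is totally mixed on every $v_j^i$ as the mirror of a totally mixed strategy, and in the second, $\sigma_2^*$ is totally mixed. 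The mirror identity then gives
\[
u(s,\sigma_1^*,\sigma_2'') \;=\; 1 - u(\phi(s),\sigma_1^{\sigma_2''},\sigma_2^*) \;\geq\; 1-\val(\phi(s)),
\]
the inequality using optimality of $\sigma_2^*$. Taking the infimum over positional $\sigma_2''$ yields $\val(s) \geq 1-\val(\phi(s))$. Applying the same bound with $\phi(s)$ in place of $s$ (using $\phi\circ\phi=\mathrm{id}$) gives the reverse inequality, so $\val(s) = 1-\val(\phi(s))$. This specialises to $\val(v_s) = 1/2$ (since $\phi(v_s) = v_s$) and $\val(v_j^i) = 1-\val(v_j^{\widehat{i}})$; and $\sigma_1^*$ attains $\val(s)$ from every $s$, so it is optimal.

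The main obstacle is the careful bookkeeping behind the two identities: verifying the symmetry identity for both $v^1$ and $v^2$ sides, and then matching the mirror strategy's side-swap $i'\mapsto\widehat{i'}$ with $\phi$ in the derivation of the mirror identity. Once that routine verification is complete, the conclusion follows by a single invocation of the optimality of $\sigma_2^*$, together with the already-established termination Lemma~\ref{lem:partial_mixed}.
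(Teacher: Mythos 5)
Your setup (the state involution $\phi$, the symmetry identity, the mirror identity via the chain isomorphism, and the lower bound $\val(s)\ge 1-\val(\phi(s))$ obtained by pitting the mirror $\sigma_1^*=\sigma_1^{\sigma_2^*}$ of an optimal totally mixed $\sigma_2^*$ against positional replies) is correct and is essentially the paper's approach. But the final step has a genuine gap: the inequality $\val(s)\ge 1-\val(\phi(s))$ is equivalent to $\val(s)+\val(\phi(s))\ge 1$, which is \emph{symmetric} under $s\mapsto\phi(s)$. Substituting $\phi(s)$ for $s$ therefore reproduces the identical inequality $\val(\phi(s))\ge 1-\val(s)$, i.e.\ again $\val(s)+\val(\phi(s))\ge 1$ --- it does not give the reverse bound $\val(s)+\val(\phi(s))\le 1$. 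Since the optimality of $\sigma_1^*$ also rests on the equality $\val(s)=1-\val(\phi(s))$, that conclusion is left unproved as well.

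The missing half requires exhibiting, for player~1's side of the game, a punishing reply rather than re-using the guarantee you already have. The paper does this by taking an arbitrary (in general, $\eps$-optimal) stationary strategy $\sigma_1'$ for player~1 and playing it against \emph{its own} mirror $\sigma_2^{\sigma_1'}$: the resulting Markov chain is isomorphic to itself under $\phi$ combined with the player swap (because mirroring is an involution), so $\Pr[\mbox{reach }\top\mbox{ from }s]=\Pr[\mbox{reach }\bot\mbox{ from }\phi(s)]$, whence
\[
u(s,\sigma_1',\sigma_2^{\sigma_1'})+u(\phi(s),\sigma_1',\sigma_2^{\sigma_1'})\;\le\;1
\]
(with equality iff the chain terminates almost surely; at $s=v_s$ this is the paper's ``payoff $\tfrac{1-p}{2}$'' computation). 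Taking $\sigma_1'$ to be stationary and $\eps$-optimal from both $s$ and $\phi(s)$ gives $\val(s)+\val(\phi(s))\le 1+2\eps$ for every $\eps>0$, which is the needed upper bound. With that added, your argument closes: equality holds, $\val(v_s)=\tfrac12$, $\val(v_j^i)=1-\val(v_j^{\widehat{i}})$, and $\sigma_1^*$ is optimal.
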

\begin{proof}
Consider some optimal stationary strategy $\sigma_2$ for player~2. It is thus totally mixed, by Lemma~\ref{lem:opt_totally_mixed}. Let $\sigma_1=\sigma_1^{\sigma_2}$ 
be the mirror strategy for player~1. 

Playing $\sigma_1$ against $\sigma_2$ and starting in $v_s$ we see that we have probability~$\frac{1}{2}$ to reach $\top$ and probability~$\frac{1}{2}$ to reach $\bot$, by symmetry and Lemma~\ref{lem:partial_mixed}. This shows that the value is at least $\frac{1}{2}$ because $\sigma_2$ is optimal. On the other hand, consider some stationary strategy $\sigma_1'$ for player~1, and the mirror strategy  $\sigma_2'=\sigma_2^{\sigma_1'}$ for player~2. If player~2 plays $\sigma_2'$ against $\sigma_1'$, then the probability to eventually reach $\bot$ is equal to the probability to eventually reach $\top$ and then there is some probability~$p$ (perhaps~0) that neither will be reached. The payoff $u(v_s,\sigma_1',\sigma_2',1)$ is then $\frac{1-p}{2}\leq \frac{1}{2}$. This shows that player~1 cannot ensure value strictly more than $\frac{1}{2}$, which is then the value of $v_s$.
Finally, we argue that $\sigma_1$ is optimal. If not, then consider $\sigma_2^*$ such that 
$u(v_s,\sigma_1,\sigma_2^*,1) <1/2$, and then the mirror strategy $\sigma_1^*=\sigma_1^{\sigma_2^*}$ ensures
that $u(v_s,\sigma_1^*,\sigma_2,1)>1/2$ contradicting optimality of $\sigma_2$.

Similarly, for any $i,j$, playing $\sigma_1$ against $\sigma_2$ and starting in $v_j^i$ we see that the probability with which we reach $\top$ is equal to the probability of reaching $\bot$ starting in $v_j^{\widehat{i}}$ and vice versa, by symmetry. 
Also, by Lemma~\ref{lem:partial_mixed} the probability to eventually reach either $\bot$ or $\top$ is~1. 
Observe that the probability to reach $\bot$ starting in $v_j^{\widehat{i}}$ is at least $1-\val(v_j^{\widehat{i}})$, by optimality of $\sigma_2$ and that with probability~1 either $\bot$ is reached or $\top$ is reached. Also, again because $\sigma_2$ is optimal, the probability to reach $\top$ starting in $v_j^i$ is at most $\val(v_j^i)$.
This shows that $\val(v_j^i)\geq 1-\val(v_j^{\widehat{i}})$. Using an argument like the one above, we obtain that  $\val(v_j^i)= 1-\val(v_j^{\widehat{i}})$ and that $\sigma_1$ is optimal if the play starts in $v_j^i$.
\end{proof}

Finally, we give an approximation of the values of states in the Purgatory Duel and a lower bound on the patience of any optimal strategy of $2^{(m-1)^2 m^{n-2}}$.

\begin{theorem}
For each $j$ in $\{1,\dots,n\}$, the value of state $v^1_j$ in the Purgatory Duel is less than $\frac{1}{2}+ 2^{(1-m)\cdot m^{n-j}-1}$ and for any optimal stationary strategy $\sigma_i$ for either player~$i$, the patience of $\sigma_i(v^1_j)$ is at least $2^{(m-1)^2 m^{n-j-1}}$.
\end{theorem}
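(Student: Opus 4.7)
I would prove both the value bound and the patience bound by a single downward induction on $j$ from $j=n$ to $j=1$, tracking the slack $\alpha_j := \val(v^1_j)-\tfrac{1}{2}$. The key setup is that by Lemma~\ref{lem:mirror} we have $\val(v_s)=\tfrac12$, and since $\val(v^1_{n+1})=\val(\top)=1$, the matrix game at $v^1_j$ is $A^{v^1_j}=M^{0,\val(v^1_{j+1}),1/2,m}$. Lemma~\ref{lem:opt_totally_mixed} gives $\val(v^1_{j+1})>\tfrac12$, so Lemma~\ref{lem:precise} applies to $A^{v^1_j}$ with parameter $\eps = \alpha_{j+1}$.

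For the value bound, the base case $j=n$ instantiates Lemma~\ref{lem:precise} with $\eps=\alpha_{n+1}=\tfrac12$: Property~4 then gives the explicit value $\alpha_n = \tfrac{1}{2^{m+1}-2} < 2^{-m} = 2^{(1-m) m^{0}-1}$ for $m\ge 2$. For the inductive step ($j<n$), Property~2(i) yields
\[
\alpha_j \;\le\; \alpha_{j+1}\,(2\alpha_{j+1})^{m-1} \;=\; 2^{m-1}\,\alpha_{j+1}^{m},
\]
and substituting the inductive hypothesis $\alpha_{j+1} < 2^{(1-m) m^{n-j-1}-1}$ gives $\alpha_j < 2^{m-1+m[(1-m)m^{n-j-1}-1]} = 2^{(1-m) m^{n-j}-1}$, which is the claimed bound.

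For the patience bound, Property~1(i) of Lemma~\ref{lem:precise} says that every optimal strategy of either player in $A^{v^1_j}$ has patience at least $(2\alpha_{j+1})^{-(m-1)}$. Substituting the value bound on $\alpha_{j+1}$ already proved, this gives patience strictly greater than $2^{-(m-1)(1-m)m^{n-j-1}} = 2^{(m-1)^2 m^{n-j-1}}$. The last step is to transfer this matrix-game patience back to $\sigma_i(v^1_j)$ inside the Purgatory Duel: for player~2, the paper's discussion of the matrix game $A^s$ in the value-iteration section states directly that $\sigma_2(v^1_j)$ must be optimal in $A^{v^1_j}$; for player~1, Lemma~\ref{lem:mirror} guarantees that an optimal stationary strategy exists at all, and a one-state deviation argument (if $\sigma_1(v^1_j)$ were suboptimal in $A^{v^1_j}$, player~2 could strictly improve against $\sigma_1$ by modifying $\sigma_2$ only at $v^1_j$, contradicting optimality of $\sigma_1$) forces $\sigma_1(v^1_j)$ to be optimal in $A^{v^1_j}$ as well.

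The main obstacle is this last transfer to player~1, since in a general concurrent reachability game the reachability player need not have \emph{any} optimal stationary strategy; here, Lemma~\ref{lem:mirror} is what makes the statement meaningful, and the one-state deviation argument then closes the gap. The rest of the proof is routine exponent bookkeeping driven by Properties~1, 2, and~4 of Lemma~\ref{lem:precise}.
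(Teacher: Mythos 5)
Your induction is essentially the paper's proof: the same downward induction on $j$, the same base case via Property~4 of Lemma~\ref{lem:precise}, and the same inductive step combining Properties~1 and~2 with identical exponent bookkeeping (the paper passes through the auxiliary matrix $M'=M^{0,\frac12+\eps,\frac12,m}$ with $\eps$ the inductive value bound and uses the monotonicity clauses, which is what your direct substitution into the monotone expressions amounts to). The one place you genuinely diverge is the transfer of the matrix-game patience bound to player~1's strategies in the game: the paper dispatches this with ``using Lemma~\ref{lem:mirror}, a similar result holds for optimal strategies for player~1,'' i.e., via the mirror symmetry, whereas you argue local optimality of any optimal stationary player-1 strategy by a deviation argument. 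Your route is valid and arguably more self-contained, with one small caveat: the punishing deviation should be phrased as ``play the witnessing pure action $b$ at the \emph{first visit} to $v^1_j$ and then switch to an optimal strategy'' (a non-stationary deviation), rather than a stationary strategy modified only at $v^1_j$ --- since $v^1_j$ can be revisited via $v_s$, the stationary modification does not immediately inherit the continuation values $\val(s')$ at the successors, while the first-visit-then-optimal deviation cleanly yields reaching probability at most $u(A^{v^1_j},\sigma_1(v^1_j),b)<\val(v^1_j)$ and the desired contradiction.
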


\begin{proof}
Consider some optimal stationary strategy $\sigma_2$ for player~2.
We will show using induction in $j$ that $\val(v^1_j)$ is less than $\frac{1}{2}+ 2^{(1-m)\cdot m^{n-j}-1}$ and that the patience of $\sigma_2(v^1_j)$ is at least $2^{(m-1)^2 m^{n-j-1}}$. Note that using Lemma~\ref{lem:mirror}, a similar result holds for optimal strategies for player~1. Let $v=v^i_j$.

{\bf Base case, $j=n$:} We see that the matrix $A^v$ is $M^{0,1,\frac{1}{2},m}$ and thus, by Lemma~\ref{lem:precise} (Property~1 and~2) we have that the value
\begin{align*}
\val(v)&=\val(A^v)\\
&=\frac{1}{2}+\frac{1}{2^{m+1}-2}\\
&<\frac{1}{2}+2^{-m}\\
&=\frac{1}{2}+2^{(1-m)\cdot m^0-1} \enspace ,
\end{align*}
 and $\sigma_2(v)$ has patience $2^{m}-1>2^{(m-1)^2\cdot m^{-1}}$.

{\bf Induction case, $j\leq n-1$:} We see that the matrix $A^v$ is $M=M^{0,\val(v^i_{j+1}),\frac{1}{2},m}$. By induction we have that $\val(v^i_{j+1})<\frac{1}{2}+ 2^{(1-m)\cdot m^{n-j-1}-1}$. Let  $\eps=2^{(1-m)\cdot m^{n-j-1}-1}$ and consider $M'=M^{0,\frac{1}{2}+\eps,\frac{1}{2},m}$. By Lemma~\ref{lem:precise} (Property~1 and~2) we get that $\val(M')\geq \val(M)$ and that the patience of $M'$ is smaller than the one for $M$. Also, we get that
\begin{align*}
\val(M')&<\frac{1}{2}+\eps\cdot (2\eps)^{m-1}\\
&=\frac{1}{2}+2^{m-1}\cdot 2^{(1-m)\cdot m^{n-j}-m}\\
&=\frac{1}{2}+2^{(1-m)\cdot m^{n-j}-1}\enspace ,
\end{align*}
and that the patience of $M'$ (and thus $M$) is at least 
\begin{align*}
(2\eps)^{-m+1}&=2^{m-1}\cdot 2^{(1-m)^2 \cdot m^{n-j-1}-m+1}\\
&=2^{(1-m)^2 \cdot m^{n-j-1}}\enspace .
\end{align*}
This completes the proof.
\end{proof}

\begin{remark}
It can be seen using induction that the value of each state in the Purgatory Duel is a rational number. First notice that $v^1_n$ and $v^2_n$ are the value of a matrix game with numbers in $\{0,\frac{1}{2},1\}$ and hence are rational. Similarly, using induction in $i$, we see that for $j\in \{1,2\}$ the number $v^j_i$ is rational, since it is the value of a matrix game with numbers in $\{v^j_0,\frac{1}{2},v^{j}_{i+1}\}$ (recall that $v^1_0=0$ and $v^2_0=1$).
\end{remark}

\subsection{The patience of $\eps$-optimal strategies} 
\label{sec:eps_patience}
In this section we consider the patience of $\eps$-optimal strategies for $0<\eps<\frac{1}{3}$. 
First we argue that each such strategy for player~2 is totally mixed on one side.

\begin{lemma}
\label{lem:eps_opt_partial_mixed}
For all $0<\eps<\frac{1}{2}$, each $\eps$-optimal stationary strategy $\sigma_2$ for player~2 is such that $\sigma_2(v^2_j)$ is totally mixed, for all $j$.
\end{lemma}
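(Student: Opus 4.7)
The plan is to argue by contradiction. Assume that $\sigma_2$ is $\eps$-optimal with $\eps<\tfrac{1}{2}$ but $\sigma_2(v_j^2)(a^*)=0$ for some $j\in\{1,\dots,n\}$ and some action $a^*$. I will exhibit a strategy for player~1 whose payoff from a suitably chosen start state exceeds $\val(s)+\eps$, contradicting $\eps$-optimality. Denote by $W(s):=\sup_{\sigma_1}u(s,\sigma_1,\sigma_2)$ the best-response value of player~1 against $\sigma_2$; by $\eps$-optimality $W(s)\le\val(s)+\eps$, and by the minimax inequality $W(s)\ge\val(s)$. In particular, $W(v_s)\ge\val(v_s)=\tfrac{1}{2}$ and $\val(v_j^2)=1-\val(v_j^1)<\tfrac{1}{2}$, combining Lemma~\ref{lem:mirror} with Lemma~\ref{lem:mat_val_above}.

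Consider the player-1 strategy that plays action $a^*$ deterministically at $v_j^2$ and best-responds to $\sigma_2$ at every other state. Since $\sigma_2(v_j^2)(a^*)=0$, playing $a^*$ at $v_j^2$ forces the game out of $v_j^2$ in one step: to $\top$ with probability $S_{>a^*}:=\sum_{b>a^*}\sigma_2(v_j^2)(b)$ and to $v_s$ with probability $S_{<a^*}$, where $S_{<a^*}+S_{>a^*}=1$. This yields
\[
W(v_j^2)\;\ge\;S_{>a^*}+S_{<a^*}\,W(v_s)\;\ge\;1-\tfrac{1}{2}S_{<a^*},
\]
and combining with $W(v_j^2)\le\val(v_j^2)+\eps$ rearranges to $S_{<a^*}\ge 2(1-\val(v_j^2)-\eps)$. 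For $a^*=1$ we have $S_{<1}=0$ while $\val(v_j^2)<\tfrac{1}{2}$ and $\eps<\tfrac{1}{2}$ make the right-hand side strictly positive, an immediate contradiction. So $\sigma_2(v_j^2)(1)>0$.

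For $a^*>1$ the row-$a^*$ inequality alone is insufficient when $\val(v_j^2)$ is close to $\tfrac{1}{2}$. Here I would additionally exploit the aggressive row~$1$ at $v_j^2$, which yields $W(v_j^2)\ge 1-\sigma_2(v_j^2)(1)\bigl(1-W(v_{j+1}^2)\bigr)$, together with the global Bellman fixed-point equation $W(v_s)=\tfrac{1}{2}(W(v_1^1)+W(v_1^2))$ and the strict inequality $W(v_1^1)\ge\val(v_1^1)>\tfrac{1}{2}$. The idea is that a flaw of $\sigma_2$ at $v_j^2$ propagates back through the chain $v_s\to v_1^2\to\cdots\to v_j^2$ via the Bellman recursion and amplifies $W(v_s)$ strictly above $\tfrac{1}{2}+\eps$. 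As an indicative special case, in the $n=1,m=2$ example an explicit fixed-point computation shows that whenever $\sigma_2(v_1^2)$ is not totally mixed one obtains $W(v_s)=1$, giving a gap of $\tfrac{1}{2}$ that is incompatible with any $\eps<\tfrac{1}{2}$.

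The main obstacle is making this fixed-point amplification work uniformly in $m$, $n$, $j$, and $a^*$. I expect the cleanest route is a downward induction on $j$ with base case $j=n$ (handled directly, since at $v_n^2$ the diagonal value is $\val(\bot)=0$, simplifying the row analysis); at the inductive step the hypothesis that $\sigma_2(v_k^2)$ is totally mixed for every $k>j$ forces $W(v_{j+1}^2)$ within $\eps$ of $\val(v_{j+1}^2)<\tfrac{1}{2}$, which sharpens the row-$1$ bound enough to rule out any missing action at $v_j^2$ for $\eps<\tfrac{1}{2}$.
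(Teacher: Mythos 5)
Your handling of the case $a^*=1$ is correct and complete: $S_{<1}=0$ forces $1\le\val(v_j^2)+\eps<1$, a contradiction. The gap is the case $a^*>1$, which is the substance of the lemma, and the sketch you give there does not close. The two local inequalities you propose --- the row-$a^*$ bound $S_{<a^*}\ge 2\bigl(1-\val(v_j^2)-\eps\bigr)$ and the row-$1$ bound $\sigma_2(v_j^2)(1)\ge\frac{1-\val(v_j^2)-\eps}{1-W(v_{j+1}^2)}$ --- are both lower bounds of roughly $1-2\eps$, because $\val(v_j^2)$ and $\val(v_{j+1}^2)$ sit within a doubly-exponentially small distance of $\frac12$ and $1-W(v_{j+1}^2)\le 1-\val(v_{j+1}^2)$. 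The lemma must cover every $\eps<\frac12$; for, say, $\eps=0.4$ these bounds only assert $\sigma_2(v_j^2)(1)\ge 0.2$ and $S_{<a^*}\ge 0.2$, which is perfectly consistent with $\sigma_2(v_j^2)(a^*)=0$. The downward induction does not help: its hypothesis only re-derives $W(v_{j+1}^2)\le\val(v_{j+1}^2)+\eps$, which already holds unconditionally by $\eps$-optimality. So the claimed ``fixed-point amplification'' of $W(v_s)$ above $\frac12+\eps$ is never actually derived; the $n=1,m=2$ computation works precisely because there $W(v_1^2)\ge W(v_s)$ can be chained with a single $1$-side row into $W(v_s)\ge 1$, and that chaining is what is missing in general.

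The missing idea is that the hole at $v_j^2$ must be exploited repeatedly rather than in a single visit, and that the winning mechanism has to live on the $1$-side. The paper fixes $\eta$ with $0<\eta<\frac12-\eps$ and hands player~1 an explicit strategy: play action $1$ at every $v_{j'}^2$ with $j'\ne j$, play the missing action $a^*$ at $v_j^2$, and play an $\eta$-optimal strategy for \emph{Purgatory} (with the same parameters) on the states $v_{j'}^1$. Against $\sigma_2$ this makes $\bot$ unreachable through the entire $2$-side: below $v_j^2$ each step goes to $\top$ or to $v_{j'+1}^2$, and at $v_j^2$ the missing action sends the play to $\top$ or back to $v_s$, so $v_{j+1}^2,\dots,\bot$ are never entered. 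The play then decomposes into excursions from $v_s$; $2$-side excursions never produce $\bot$, while on $1$-side excursions the $\eta$-optimality in Purgatory bounds the ratio of reaching $\bot$ versus $\top$ before returning to $v_s$. Since the play terminates almost surely, $\top$ is reached with probability at least $1-\eta>\frac12+\eps$, contradicting $W(v_s)\le\frac12+\eps$. Your best-response value $W$ of course dominates the value of this strategy, but the local Bellman inequalities you extract are too weak to detect this global effect; you would need to import the $\eta$-optimal Purgatory strategy (or some equivalent global argument on the induced Markov chain) to finish.
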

\begin{proof}
Fix $0<\eps<\frac{1}{2}$ and fix some stationary strategy $\sigma_2$ such that there exists $j$ such that $\sigma_2(v^2_j)$ is not totally mixed. We will show that $\sigma_2$ is not $\eps$-optimal.

Let $\eta$ be such that $0<\eta<\frac{1}{2}-\eps$. Let $a$ be an action such that $\sigma_2(v^2_j)(a)=0$.
Let $\sigma_1^\eta$ be an $\eta$-optimal strategy in {\em Purgatory} (not the Purgatory Duel) (with the same parameters $n$ and $m$).
Let $\sigma_1$ be the strategy such that (i)~$\sigma_1(v_{j'}^2)(1)=1$ for each $j'$; and (ii)~$\sigma_1(v_{j}^2)(a)=1$; and (iii)~$\sigma_1(v_{j}^1)=\sigma_1^\eta(v_j)$.
Consider a play starting in $v_s$. Whenever the play is in state $v^2_{j'}$, for some $j'\neq j$ in each step there is a probability of either going back to $v_s$ or going to $v^2_{j'+1}$. Thus, the play either reaches $v_j^2$ or has gone back to $v_s$. If it reaches $v_j^2$, then the next state is either $v_s$ or $\top$ (i.e., $v_{j+1}^2$ cannot be reached). If the play is in $v_1^1$, then there is a positive probability to reach $\top$ before going back to $v_s$, which is at least $\frac{1-\eta}{\eta}$ times the probability to reach $\bot$ before going back to $v_s$, since $\sigma_1$ follows an $\eta$-optimal strategy in Purgatory. Hence, the probability to eventually reach $\top$ is at least $1-\eta>\frac{1}{2}+\eps$ and thus $\sigma_2$ is not $\eps$-optimal, since the value of $v_s$ is $\frac{1}{2}$ by Lemma~\ref{lem:pos val}. 
\end{proof}

We now show that if we mirror an $\eps$-optimal strategy, then we get an $\eps$-optimal strategy.

\begin{lemma}\label{thm:epsilon_mirror}
For all $0<\eps<\frac{1}{3}$, each $\eps$-optimal stationary strategy $\sigma_2$ for player~2 in the Purgatory Duel, is such that the mirror strategy 
$\sigma_1^{\sigma_2}$ is $\eps$-optimal for player~1.
\end{lemma}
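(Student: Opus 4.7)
The plan is to reduce the claim to a direct application of $\eps$-optimality of $\sigma_2$ using the state involution $\phi$ of the Purgatory Duel, where $\phi$ sends $v^i_j \mapsto v^{\widehat i}_j$, swaps $\top$ and $\bot$, and fixes $v_s$. Write $\sigma_1 = \sigma_1^{\sigma_2}$. By the standard MDP reduction (every stationary $\sigma_1$ has a positional best response for the opponent), it suffices to fix an arbitrary positional player-$2$ strategy $\sigma_2'$ and establish $u(s,\sigma_1,\sigma_2') \ge \val(s) - \eps$ for every state $s$.

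The preliminary step is to check the reach-absorbing hypothesis for the two Markov chains in play: $\mathrm{MC}(\sigma_1^{\sigma_2},\sigma_2')$ and $\mathrm{MC}(\sigma_1^{\sigma_2'},\sigma_2)$. Since $\eps < 1/2$, Lemma~\ref{lem:eps_opt_partial_mixed} gives that $\sigma_2(v^2_j)$ is totally mixed for every $j$, hence $\sigma_1^{\sigma_2}$ is totally mixed on every $v^1_j$; Lemma~\ref{lem:partial_mixed} then applies to the first chain, and directly to the second since $\sigma_2$ itself is totally mixed on side~$2$. Both chains therefore reach $\{\top,\bot\}$ with probability $1$ from every starting state.

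The structural heart of the argument is an inequality of the form
\[
u(s,\sigma_1^{\sigma_2},\sigma_2') \;\ge\; 1 - u(\phi(s),\sigma_1^{\sigma_2'},\sigma_2),
\]
established for every state $s$, in the style of the symmetry argument in Lemma~\ref{lem:mirror}. Granted this displayed inequality, combining it with the $\eps$-optimality bound $u(\phi(s),\sigma_1^{\sigma_2'},\sigma_2) \le \val(\phi(s)) + \eps$ and the value-mirror identity $\val(\phi(s)) = 1 - \val(s)$ from Lemma~\ref{lem:mirror} (with $\val(v_s)=1/2$ being symmetric under $\phi$, and the cases $s\in\{\top,\bot\}$ immediate) yields $u(s,\sigma_1,\sigma_2') \ge \val(s) - \eps$, as required. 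The underlying reason the two chains are comparable is that $\phi$ carries the transition function $\delta$ to itself (ignoring objectives), so once the two players' distributions are suitably swapped at each non-$v_s$ state the chains become conjugate.

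The main obstacle is precisely the displayed inequality: whereas in the fully optimal case of Lemma~\ref{lem:mirror} the equality $\sigma_2(v^1_j) = \sigma_2(v^2_j)$ (from uniqueness of optima in the relevant matrix games of the form $M^{0,y,1/2,m}$) makes the mirror chain literally $\phi$-invariant, for a general $\eps$-optimal $\sigma_2$ this equality can fail, so the proof cannot proceed by direct invariance. I would instead argue state-by-state by pairing transition probabilities: at $v^i_j$ the chains $\mathrm{MC}(\sigma_1^{\sigma_2},\sigma_2')$ and $\mathrm{MC}(\sigma_1^{\sigma_2'},\sigma_2)$ share their common `advance' probability $\Pr[a_1=a_2]$ while the two `non-advance' probabilities swap roles of $v_s$ and the absorbing neighbour, and this swap has to be tracked through the recursive system defining the reach probabilities. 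The hypothesis $\eps < 1/3$ should enter naturally as the slack needed to absorb the resulting asymmetry while still recovering the $\eps$-bound (rather than some larger multiple of $\eps$).
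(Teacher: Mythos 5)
Your skeleton coincides with the paper's proof: reduce to positional responses, get absorption from Lemma~\ref{lem:eps_opt_partial_mixed} plus Lemma~\ref{lem:partial_mixed}, prove the displayed symmetry inequality, and combine it with $\eps$-optimality of $\sigma_2$ and $\val(\phi(s))=1-\val(s)$ from Lemma~\ref{lem:mirror}. The gap sits exactly in the one step you defer, and the route you sketch for it would not succeed. You have misdiagnosed the obstacle: the displayed inequality is not obtained by perturbing an invariance of a \emph{single} chain, so the possible failure of $\sigma_2(v^1_j)=\sigma_2(v^2_j)$ is irrelevant. What the paper uses is that the \emph{two} chains $\mathrm{MC}(\sigma_1^{\sigma_2},\sigma_2')$ and $\mathrm{MC}(\sigma_1^{\sigma_2'},\sigma_2)$ are exactly conjugate under $\phi$: at $\phi$-corresponding states the two players' distributions are simply exchanged, and $\phi$ intertwines the transition function with the exchange of the players' roles, so the two chains induce the same distribution on state sequences up to relabelling by $\phi$. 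Hence the inequality holds as an \emph{equality} between reaching $\top$ from $s$ in one chain and reaching $\bot$ from $\phi(s)$ in the other (given absorption), for arbitrary stationary $\sigma_2$ and $\sigma_2'$; no slack is consumed, and $\eps<\frac13$ plays no role in this step (the only quantitative hypothesis used is $\eps<\frac12$, needed for Lemma~\ref{lem:eps_opt_partial_mixed}).

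Your fallback --- keeping the common advance probability $\Pr[a_1=a_2]$ while letting the $v_s$-probability and the absorbing-probability ``swap roles'' between the two chains, and hoping the recursion together with $\eps<\frac13$ absorbs the resulting asymmetry --- is precisely the wrong pairing, and it cannot be repaired by an $\eps$-budget: if the restart and absorbing masses genuinely were exchanged at corresponding states, the two chains' absorption probabilities would differ by an amount governed by the strategies, not by $\eps$. The reason you perceive such a swap is that the mirror as literally defined does not line the triples up; the correspondence must also reverse the order of actions ($a\mapsto m+1-a$, exactly as in Lemma~\ref{lem:mat_wh}), because the comparison $a_1\lessgtr a_2$ sends the play to $v_s$ versus the absorbing neighbour in opposite ways on the two sides. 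With that reversal the full transition triples $(\text{advance},\,v_s,\,\text{absorb})$ at $\phi$-corresponding states coincide and there is nothing left to track. Without supplying this exact conjugacy, your argument is incomplete at its central step.
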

\begin{proof}
Fix $0<\eps<\frac{1}{3}$ and let $\sigma_2$ be some $\eps$-optimal stationary strategy for player~2. Also, let $\sigma_1=\sigma_1^{\sigma_2}$ be the mirror strategy.

By Lemma~\ref{lem:eps_opt_partial_mixed} the strategy $\sigma_2$ is such that $\sigma_2(v^2_j)$ is totally mixed, for all $j$. We can then apply Lemma~\ref{lem:partial_mixed} and get that either $\top$ or $\bot$ is reached with probability~1. Hence, since $\sigma_2$ is $\eps$-optimal we reach $\bot$ with probability at least $1-\val(v)-\eps$ starting in $v$ against all strategies for player~1, for each $v$.
It is clear that any play $P$ of $\sigma_2$ against any given strategy $\sigma_1'$ for player~1 starting in $v$ corresponds, by symmetry, to a play $P'$ of $\sigma_2^{\sigma_1'}$ against $\sigma_1$ starting in $f(v)$, where \[f(v)=\begin{cases}v_s&\text{if }v=v_s\\
v_{j}^{\widehat{i}} &\text{if }v=v_j^i\\
\bot &\text{if }v=\top\\
\top &\text{if }v=\bot\enspace ,
\end{cases}
\]
such that in round $i$ we have that $P_i=f(P_i')$ and the plays are equally likely.
Thus, the probability to reach $f(\bot)=\top$, starting in state $f(v)$, for each $v$ is at least $1-\val(v)-\eps=\val(f(v))-\eps$, where the equality follows from Lemma~\ref{lem:mirror}. Hence, $\sigma_1$ is $\eps$-optimal for player~1.
\end{proof}

Next we give a definition and a lemma, which is similar to Lemma~6 in~\cite{IM12}. 
The purpose of the lemma is to identify certain cases where one can change the transition function of an MDP in a specific way and 
obtain a new MDP with larger values. 
We cannot simply obtain the result from Lemma~6 in~\cite{IM12}, since the direction is opposite (i.e., Lemma~6 in~\cite{IM12} considers some cases where one can change the transition function and obtain a new MDP with {\em smaller} values) and our lemma is also for a slightly more general class of MDPs. 

\begin{definition}
Let $G$ be an MDP with safety objectives. A {\em replacement set} is a set of triples of states, actions and distributions over the states $Q=\{(s_1,a_1,\delta_1),\dots,(s_\ell,a_\ell,\delta_\ell)\}$. Given the replacement set $Q$, the MDP $G[Q]$ is an MDP over the same states as $G$ and with the same set of safe states, but where the transition function $\delta'$ 
is \[
\delta'(s,a)=\begin{cases}
\delta_i & \text{if }s=s_i\text{ and }a=a_i\text{ for some }i\\
\delta(s,a) &\text{otherwise}
\end{cases}
\]
\end{definition}

\begin{lemma}\label{lem:change_child}
Let $G$ be an MDP with safety objectives.
Consider some replacement set  
\[
Q=\{(s_1,a_1,\delta_1),\dots,(s_\ell,a_\ell,\delta_\ell)\}\enspace ,\] such that for all $t$ and $i$ we have that 
\[\sum_{s\in S}(\delta(s_i,a_i)(s)\cdot \overline{v}^t_s)\leq 
\sum_{s\in S}(\delta_i(s)\cdot \overline{v}^t_s)\enspace .\] 
Let $\overline{v'}^t$ be the value vector for $G[Q]$ with finite horizon $t$.
(1)~For all states $s$ and time limits $t$ we have that \[\overline{v}^t_s\leq \overline{v'}^t_s\enspace .\]
(2)~For all states $s$, we have that \[\val(G,s)\leq \val(G[Q],s)\enspace .\]
\end{lemma}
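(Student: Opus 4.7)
My plan is to prove (1) by induction on the horizon $t$ and then derive (2) by passing to the limit. In the base case $t=0$, both $\overline{v}^0_s$ and $\overline{v'}^0_s$ depend only on whether $s$ lies in the safe set, and since the replacement operation leaves the safe set unchanged by definition of $G[Q]$, the two agree. For the inductive step, fix a state $s$ (the case where $s$ is unsafe is trivial, since both values are zero). The key observation is that because $G$ is an MDP, value iteration at $s$ reduces to a maximum over the actions, so I pick $a^*$ achieving $\overline{v}^t_s = \sum_{s'} \delta(s, a^*)(s') \, \overline{v}^{t-1}_{s'}$ and aim to show that the same action $a^*$, evaluated in $G[Q]$, already witnesses $\overline{v'}^t_s \geq \overline{v}^t_s$.

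The inductive step splits into two cases, according to whether $(s, a^*)$ is one of the replaced pairs $(s_i, a_i)$. If it is not, then $\delta'(s, a^*) = \delta(s, a^*)$, and the inductive hypothesis $\overline{v}^{t-1} \leq \overline{v'}^{t-1}$ (componentwise) combined with monotonicity of expectation under a fixed distribution immediately yields the bound. If it is, say $(s, a^*) = (s_i, a_i)$, then I first apply the hypothesis of the lemma instantiated at time index $t-1$, which asserts that swapping $\delta(s_i, a_i)$ for $\delta_i$ does not decrease the expected value of $\overline{v}^{t-1}$; then I apply the inductive hypothesis to upgrade $\overline{v}^{t-1}$ to $\overline{v'}^{t-1}$. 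Chaining the two inequalities gives $\overline{v}^t_s \leq \sum_{s'} \delta_i(s') \, \overline{v'}^{t-1}_{s'} = \sum_{s'} \delta'(s, a^*)(s') \, \overline{v'}^{t-1}_{s'} \leq \overline{v'}^t_s$, completing the induction.

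For part (2), recall that in a safety MDP the true value is the monotone (decreasing in $t$) limit of the finite-horizon values $\overline{v}^t_s$ as $t \to \infty$. Since (1) gives $\overline{v}^t_s \leq \overline{v'}^t_s$ for every $t$, the inequality is preserved under the limit, proving $\val(G,s) \leq \val(G[Q],s)$. The only delicate point is bookkeeping: one must invoke the lemma's hypothesis at the correct time index (namely $t-1$) so that it matches the vector appearing in the value-iteration update, and one must observe that $G$ and $G[Q]$ share the same state space, action sets, and safe set so that the maximum in the update is taken over the same family of actions. Beyond this, the argument is a routine monotonicity-under-coupling calculation, so I do not anticipate a substantial obstacle.
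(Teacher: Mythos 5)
Your overall strategy coincides with the paper's: prove part (1) by induction on the horizon $t$ using monotonicity of the one-step update, and obtain part (2) by letting $t\to\infty$, since the finite-horizon values converge to the true values (by Everett's result) and a componentwise inequality survives the limit. The base case and the limit argument are fine.

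However, there is a directional error in the inductive step. In this paper $\overline{v}^t_s$ is the finite-horizon \emph{reachability} value (it equals $1$ on the target set $S^1$, not $0$), and the MDP here is controlled by the safety player, who \emph{minimizes} it; the update is therefore $\overline{v}^t_s=\min_a\sum_{s'}\delta(s,a)(s')\,\overline{v}^{t-1}_{s'}$ (the matrix game $A^s[\overline{v}^{t-1}]$ has a single row), and the finite-horizon values are \emph{increasing} in $t$, not decreasing. Your argument picks $a^*$ attaining the optimum for $\overline{v}^t_s$ in $G$ and concludes $\overline{v'}^t_s\geq \overline{v}^t_s$ because $a^*$ "witnesses" a large value in $G[Q]$ --- but exhibiting one action with a large one-step value does not lower-bound a \emph{minimum} over actions, so the final step of your chain fails under the correct convention. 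The repair is mechanical and is what the paper does: show the per-action inequality $\sum_{s'}\delta(s,a)(s')\overline{v}^{t-1}_{s'}\leq\sum_{s'}\delta'(s,a)(s')\overline{v}^{t-1}_{s'}\leq\sum_{s'}\delta'(s,a)(s')\overline{v'}^{t-1}_{s'}$ for \emph{every} action $a$ (equality of transition functions for unreplaced pairs, the lemma's hypothesis at index $t-1$ for replaced pairs, then the induction hypothesis), and only then take $\min_a$ on both sides, which preserves $\leq$. Alternatively, choose the witnessing action on the $G[Q]$ side rather than the $G$ side. With that one correction your proof matches the paper's.
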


\begin{proof}
We first present the proof of first item.
We will show, using induction in $t$, that 
$\overline{v}^t_s\leq \overline{v'}^t_s$ for all $s$. Let $\delta'$ be the transition function for $G[Q]$.

{\bf Base case, $t=0$:} Consider some state $s$. Clearly we have that $\overline{v}^t_s= \overline{v'}^t_s$ because we have not changed the safe states.

{\bf Induction case, $t\geq 1$:} 
The induction hypothesis state that $\overline{v}^{t-1}_s\leq \overline{v'}^{t-1}_s$ for all $s$.
Consider some state $s$. Consider any action $a'$ such that there is an $i$ such that $s=s_i$ and $a=a_i$. 
We have that 
\[\sum_{s'} (\delta(s,a')(s')\cdot \overline{v}^{t-1}_{s'})\leq \sum_{s'} (\delta'(s,a')(s')\cdot \overline{v}^{t-1}_{s'})\] by definition for such $a'$ (the statement is true for all time limits and thus also for $t-1$). 
For all other actions $a''$ we have that 
\[\sum_{s'} (\delta(s,a'')(s')\cdot \overline{v}^{t-1}_{s'})= \sum_{s'} (\delta'(s,a'')(s')\cdot \overline{v}^{t-1}_{s'})\enspace ,\]
since $\delta(s,a'')=\delta'(s,a'')$.
Hence, \[\min_a\sum_{s'} (\delta(s,a)(s')\cdot \overline{v}^{t-1}_{s'})\leq \min_a\sum_{s'} (\delta'(s,a)(s')\cdot \overline{v}^{t-1}_{s'})\]

We then have, using the recursive definition of $\overline{v}^{t}_s$, that
\begin{align*}
\overline{v}^{t}_s&=
\min_a \sum_{s'} (\delta(s,a)(s')\cdot \overline{v}^{t-1}_{s'})\\
&\leq\min_a \sum_{s'} (\delta'(s,a)(s')\cdot \overline{v}^{t-1}_{s'})\\
&\leq\min_a\sum_{s'} (\delta'(s,a)(s')\cdot \overline{v'}^{t-1}_{s'}) \\
 &=\overline{v'}^{t}_s \enspace .
\end{align*}
where we just argued the first inequality; and the second inequality comes from the induction hypothesis and that each factor is positive. (Note that the optimal strategy for player~2 in a matrix game $A^s[\overline{v}^{t-1}]$  of 1~row is to pick one of the columns with the smallest entry with probability~1 and thus $\overline{v}^{t}_s=\val(A^s[\overline{v}^{t-1}])=\min_a \sum_{s'} (\delta(s,a)(s')\cdot \overline{v}^{t-1}_{s'})$ and similarly for $\overline{v'}^{t}_s$).
This completes the proof of the first item.
The second item follows from the first item and since the value of a time limited game goes 
to the value of the game without the time limit as the time limit grows to $\infty$, as shown by~\cite{Eve57}.
\end{proof}


We next show that for player~1, the patience of $\eps$-optimal strategies is high.
\begin{lemma}\label{lem:player_1_eps_patience}
For all $0<\eps<\frac{1}{3}$, each $\eps$-optimal stationary strategy $\sigma_1$ for player~1 in the Purgatory Duel has patience at least $2^{m^{\Omega(n)}}$. For $N=5$ the patience is $2^{\Omega(m)}$.
\end{lemma}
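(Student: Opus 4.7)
The plan is to adapt the downward-induction argument from the preceding theorem (for optimal strategies) to $\eps$-optimal strategies, tracking an additive $\eps$-slack through the induction.

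\smallskip\noindent\emph{Setup.} I would fix an $\eps$-optimal stationary $\sigma_1$ and define the pessimistic value $u^{\ast}(v) := \inf_{\sigma_2} u(v, \sigma_1, \sigma_2, 1)$. Since $u^{\ast}(v) \leq \val(v)$ trivially and $u^{\ast}(v) \geq \val(v) - \eps$ by $\eps$-optimality, we have $u^{\ast}(v) \in [\val(v) - \eps, \val(v)]$; in particular $u^{\ast}(v_s) \in [1/2 - \eps, 1/2]$ using $\val(v_s) = 1/2$ from Lemma~\ref{lem:mirror}. The induced matrix game at $v^1_j$ is $A^{v^1_j}[u^{\ast}] = M^{0, u^{\ast}(v^1_{j+1}), u^{\ast}(v_s), m}$; since $\sigma_1(v^1_j)$ is only a feasible (not necessarily optimal) row-player strategy there, one has $u^{\ast}(v^1_j) \leq \val(A^{v^1_j}[u^{\ast}])$, which combined with $u^{\ast}(v^1_j) \geq \val(v^1_j) - \eps$ shows that $\sigma_1(v^1_j)$ is $\eps$-optimal as a row-player strategy in this matrix game.

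\smallskip\noindent\emph{Quantitative stability.} The key technical step is a quantitative refinement of Lemma~\ref{lem:precise} Property~1: for any $0 < z < y \leq 1$ and $\delta$ sufficiently small compared to $y - z$, every row-player $\delta$-optimal strategy $\tau$ in $M^{0, y, z, m}$ has patience $\Omega((2(y-z))^{-(m-1)})$, with the $\Omega$ constant depending on the ratio $\delta/(y - z)$. I would prove this by mimicking the equation chain of Lemma~\ref{lem:precise}'s proof: the exact equalities $u(M, \tau, j) = u(M, \tau, j+1)$ relax to $|u(M, \tau, j) - u(M, \tau, j+1)| \leq 2\delta$, and the resulting approximate geometric recurrence $\tau(a) = (y/z)\tau(a+1) + O(\delta/z)$ iterated $m - 1$ times maintains the leading-order patience bound.

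\smallskip\noindent\emph{Induction and the $N = 5$ case.} I then proceed by downward induction on $j$. At the base $j = n$, with $u^{\ast}(v^1_{n+1}) = 1$ exactly, the gap $y - z = 1 - u^{\ast}(v_s) \geq 1/2$ is a positive constant; for $\eps < 1/3$, the stability refinement directly yields $\sigma_1(v^1_n)$ patience $\geq 2^{\Omega(m)}$, with the constant in $\Omega$ depending on $\eps$, settling the $N = 5$ (i.e.\ $n = 1$) case since then $v^1_1 = v^1_n$. For $j < n$, the gap $u^{\ast}(v^1_{j+1}) - u^{\ast}(v_s)$ is approximately $\val(v^1_{j+1}) - 1/2$, which by the preceding theorem is less than $2^{(1-m)\cdot m^{n-j-1} - 1}$, and stability yields patience $2^{(m-1)^2 m^{n-j-1} - O(1)}$; setting $j = 1$ produces the desired $2^{m^{\Omega(n)}}$ bound. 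The main obstacle is that for $\eps$ comparable to (or larger than) the intrinsic matrix gap $\val(v^1_{j+1}) - 1/2$---which occurs at small $j$ for fixed $\eps$---the stability precondition $\delta \ll y - z$ fails and the $\Omega$-constant in the induction degrades. Tracking how this degradation propagates through the $n$ inductive steps is the technical crux; it ultimately forces the $\Omega$-constant in $m^{\Omega(n)}$ to depend on $\eps$, but the doubly-exponential bound still follows because the cumulative effect of the geometric decay dominates the additive $\eps$-slack across levels where the intrinsic gap still exceeds $\eps$, and already those levels are doubly-exponentially many in $n$.
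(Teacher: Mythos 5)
Your reduction to per-state matrix-game analysis has a genuine gap, and it is precisely the gap that forces the paper to abandon the local/inductive approach for $\eps$-optimal strategies. The constraint you correctly derive — that $\sigma_1(v^1_j)$ is an $\eps$-optimal row strategy in $A^{v^1_j}[u^{\ast}]=M^{0,u^{\ast}(v^1_{j+1}),u^{\ast}(v_s),m}$ — is vacuous whenever the gap $y-z=u^{\ast}(v^1_{j+1})-u^{\ast}(v_s)$ is below $\eps$. Indeed, in $M^{0,y,z,m}$ the pure row that can never produce the outcome $0$ guarantees payoff $z\geq \val(M)-(y-z)-O((y-z)^m)$, so for $y-z<\eps$ there are $\eps$-optimal row strategies of patience $1$; no ``quantitative stability'' refinement of Lemma~\ref{lem:precise} can hold in that regime. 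Since the gap at level $j$ is roughly $2^{(1-m)m^{n-j-1}}$, it exceeds a fixed $\eps$ only for the $O(\log_m\log(1/\eps))$ largest values of $j$; your closing claim that the levels where the precondition holds are ``doubly-exponentially many in $n$'' is not right — there are only $n$ levels in total, and for constant $\eps$ only constantly many survive. Those surviving levels yield a patience bound of the form $2^{\mathrm{poly}(m,\log(1/\eps))}$, which settles the $N=5$ case but cannot give $2^{m^{\Omega(n)}}$. The deeper issue is that an $\eps$-optimal strategy for a reachability objective need only be \emph{locally} $\eps$-optimal at each state, and the doubly-exponential patience of $\eps$-optimal strategies is a consequence of the global dynamics (errors compound multiplicatively over the unboundedly many returns to $v_s$), not of local optimality in any single matrix game.

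The paper's proof is therefore global rather than inductive: fixing $\sigma_1$ yields an MDP for player~2, and Lemma~\ref{lem:change_child} is used to replace the transitions at $v^2_n$ by moving the probability mass on $\bot$ to $v_s$, which can only increase the value. In the modified MDP player~2's best response on the $2$-side is to always play action~$1$ (returning to $v_s$), so the MDP collapses to the one obtained by playing the restriction of $\sigma_1$ to the states $v^1_1,\dots,v^1_n$ in \emph{Purgatory}. Since the value of the modified MDP is still at least $\frac{1}{2}-\eps$, that restricted strategy is $(\frac{1}{2}+\eps)$-optimal in Purgatory, and the known lower bound of~\cite{HIM11} gives patience $2^{m^{\Omega(n)}}$. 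If you want to salvage a self-contained argument, you would need to reprove something like the~\cite{HIM11} bound from scratch; the per-state induction that works for \emph{optimal} strategies (where exact local optimality in each $A^s$ is guaranteed) does not transfer.
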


\begin{proof}
Consider some $\eps$-optimal stationary strategy $\sigma_1$ for player~1 in the Purgatory Duel. Fixing $\sigma_1$ for player~1 in the Purgatory Duel we obtain an MDP $G'$ for player~2. Let $\overline{v}^t$ be the value vector for $G'$ with finite horizon (time-limit) $t$ and let $\delta$ be the transition function for $G'$.
For each $i$, let \[
\delta_i(s)=\begin{cases}
\delta(v_n^2,i)(s) & \text{if }v_s\neq s\neq \bot\\
\delta(v_n^2,i)(\bot)+\delta(v_n^2,i)(v_s) & \text{if }v_s=s\\
0 & \text{if }\bot=s\\
\end{cases}
\]
(Note that $\delta_i$ is the same probability distribution as $\delta(v_n^2,i)$, except that the probability mass on $\bot$ is moved to $v_s$.)
Consider the replacement set $Q=\{(v_n^2,1,\delta_1),\dots,(v_n^2,m,\delta_m)\}$ and the MDP $G'[Q]$. We have for all $t$ and $i$ that \[\sum_{s\in S}(\delta(v_n^2,i)(s)\cdot \overline{v}^t_s)\leq 
\sum_{s\in S}(\delta_i(s)\cdot \overline{v}^t_s)\] because  \[\overline{v}^t_{\bot}=\overline{v}^t_{v^2_{n+1}}=0\leq \overline{v}^t_{v_s}\] for all $t$ and the only difference between $\delta(v_n^2,i)$ and $\delta_i$ is that the probability mass on $\bot$ is moved to $v_s$. We then get from Lemma~\ref{lem:change_child}(2) that $\val(G',v_s)\leq \val(G'[Q],v_s)$. Let $\sigma_2$ be an optimal positional strategy in $G'[Q]$. It is easy to see that $\sigma_2$ plays action~1 in $v_j^2$ for all $j$, because the best player~2 can hope for is to get back to $v_s$ since $\bot$ cannot be reached from $v_j^2$ in $G'[Q]$ for any $j$ and if he plays some action which is not~1, then there is a positive probability that $\top$ will be reached in one step. Thus, the MDP $G'[Q]$ corresponds to the MDP one gets by fixing the strategy $\sigma_1'$ where $\sigma_1'(v_i)=\sigma_1(v_i^1)$ for player~1 in Purgatory.
But the probability to reach $\top$ in $G'[Q]$ is at least $\frac{1}{2}-\eps$ and hence $\sigma_1'$ is $(\frac{1}{2}+\eps)$-optimal in Purgatory (note that this is Purgatory and not Purgatory Duel).
As shown by~\cite{HIM11} any such strategy requires patience $2^{m^{\Omega(n)}}$. Thus, any $\eps$-optimal stationary strategy for player~1 in the Purgatory Duel 
requires patience $2^{m^{\Omega(n)}}$.

It was shown by~\cite{HIM11} that the patience of $\eps$-optimal strategies for Purgatory with $n=1$ Purgatory state is $2^{\Omega(m)}$, and thus similarly for the Purgatory Duel with $N=5$. 
\end{proof}

We are now ready to prove the main theorem of this section.
\begin{theorem}\label{thm:big n}
For all $0<\eps<\frac{1}{3}$, every $\eps$-optimal stationary strategy, for either player, 
in the Purgatory Duel (that has $N=2n+3$ states and at most $m$ actions for each player at 
all states) has patience $2^{m^{\Omega(n)}}$. 
For $N=5$ the patience is $2^{\Omega(m)}$.
\end{theorem}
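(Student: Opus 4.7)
The plan is to assemble the main theorem as a short corollary of the two principal lemmas of this subsection. For player~1 there is nothing to do: Lemma~\ref{lem:player_1_eps_patience} already states precisely the claimed patience lower bound, namely $2^{m^{\Omega(n)}}$ in general and $2^{\Omega(m)}$ in the special case $N=5$, for every $\eps$-optimal stationary strategy of player~1 in the Purgatory Duel.

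For player~2, the plan is to reduce to the player~1 bound via the mirror construction. Let $\sigma_2$ be an arbitrary $\eps$-optimal stationary strategy for player~2. By Lemma~\ref{thm:epsilon_mirror}, the mirror strategy $\sigma_1^{\sigma_2}$ is $\eps$-optimal for player~1. By the definition of the mirror strategy, we have $\sigma_1^{\sigma_2}(v_j^{\widehat{i}})=\sigma_2(v_j^i)$ for all $i$ and $j$, while at $v_s$ there is only one action so the distribution there is forced. Consequently the multiset of distributions appearing in $\sigma_1^{\sigma_2}$ is identical to the multiset of distributions appearing in $\sigma_2$, so the two strategies have exactly the same patience. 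Applying Lemma~\ref{lem:player_1_eps_patience} to $\sigma_1^{\sigma_2}$ then transfers the lower bound $2^{m^{\Omega(n)}}$ (and $2^{\Omega(m)}$ when $N=5$) to $\sigma_2$, which finishes the theorem.

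The main obstacle is not in the final assembly, which is essentially routine once the two ingredient lemmas are in place, but rather in the two ingredient lemmas themselves. The hardest step is Lemma~\ref{lem:player_1_eps_patience}: one fixes an $\eps$-optimal $\sigma_1$, obtains a one-player (safety) MDP $G'$ for player~2, and argues via the replacement-set machinery (Lemma~\ref{lem:change_child}) that rerouting the mass flowing from $v_n^2$ into $\bot$ to $v_s$ instead only increases the value. This enforces that the optimal player~2 response never gains anything from leaving the $v^2$-column, so the resulting MDP is exactly the one induced by the restriction $\sigma_1'(v_i):=\sigma_1(v_i^1)$ viewed as a strategy in plain Purgatory, and the known doubly-exponential lower bound of \cite{HIM11} applies. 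The delicate point is to verify that the $\eps$-optimality window for the Purgatory Duel indeed translates to the $(\tfrac12+\eps)$-optimality regime in Purgatory for which the lower bound of \cite{HIM11} is stated; this is precisely why the theorem requires $\eps<\tfrac13$.
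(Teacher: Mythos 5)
Your proposal is correct and follows essentially the same route as the paper: player~1's bound is read off directly from Lemma~\ref{lem:player_1_eps_patience}, and player~2's bound is transferred via the mirror construction of Lemma~\ref{thm:epsilon_mirror}, using that mirroring preserves patience. Your explicit observation that the mirror strategy carries the same multiset of distributions (so exactly the same patience) is a small but welcome justification that the paper leaves implicit.
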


\begin{proof}
The statement for strategies for player~1 follows from Lemma~\ref{lem:player_1_eps_patience}. By Lemma~\ref{thm:epsilon_mirror}, for each $\eps$-optimal strategy for player~2, there is an $\eps$-optimal strategy for player~1 (i.e., the mirror strategy) with the same patience. Thus the result follows for strategies for player~2.
\end{proof}

\section{Zero-sum Concurrent Stochastic Games: Patience Lower Bound for Three States}
In this section we show that the patience of all $\eps$-optimal strategies, 
for all $0<\eps<\frac{1}{3}$, for both players in a concurrent reachability game $G$ with 
three states of which two are absorbing, and the non-absorbing state has $m$ actions 
for each player, can be as large as $2^{\Omega(m)}$.
The proof consists of two phases, first we show the lower bound in a game with at most 
$m^2$ actions for each player; and second, we show that all but $2m-1$ actions can be removed 
for both players in the game without changing the patience.

The first game, the {\em 3-state Purgatory Duel}, is intuitively speaking the Purgatory Duel for $N=5$, where we replace the states $v_1^1$, $v_1^2$ and $v_s$ with a state $v_s'$ while in essence keeping the same set of $\eps$-optimal strategies. The idea is to ensure that one step in the 3-state Purgatory Duel corresponds to two steps in the Purgatory Duel with $N=5$, by having the players pick all the actions they might use in the next two steps at once. 
The game is formally defined as follows:

The 3-state Purgatory Duel consists of $N=3$ states, named $v_s',\top'$ and $\bot'$ respectively.
The states $\top'$ and $\bot'$ are absorbing.
The state $v_s'$ is such that \[A_{v_s'}^1=A_{v_s'}^2=\{(i,j)\mid 1\leq i,j\leq m\}\enspace .\] 
Also, let $\delta'$ be the transition function for the Purgatory Duel with $N=5$. Let $p$ be the function that given a state in $\{v_s,\bot,\top\}$ in the Purgatory Duel for $i=1$ outputs the primed state (which is then a state in the 3-state Purgatory Duel). Recall that $\U(s,s')$ is the uniform distribution over $s$ and $s'$. Observe that the deterministic distributions $\delta'(v_1^1,a_1,a_2)$ and $\delta'(v_1^2,a_1,a_2)$ are in $\{v_s,\top,\bot\}$ for all $a_1$ and $a_2$.
For each pair of actions $(a_1^1,a_1^2)\in A_{v_s'}^1$ and $(a_2^1,a_2^2)\in A_{v_s'}^2$ in the 3-state Purgatory Duel, we have that 
\begin{align*}
&\delta(v_s',(a_1^1,a_1^2),(a_2^1,a_2^2))=\\
&\U(p(\delta'(v_1^1,a_1^1,a_2^1)),p(\delta'(v_1^2,a_1^2,a_2^2)))\enspace .
\end{align*}
To make the game easier to understand on its own, we now give a more elaborate description of the transition function 
$\delta$ without using the transition function for the Purgatory Duel. 
To make the pattern as clear as possible we write $\U(s,s)$ instead of $s$ for all $s$.
\begin{align*}
&\delta(v_s',(a_1^1,a_1^2),(a_2^1,a_2^2))=\\
&\begin{cases}
\U(\bot',\top') & \text{if }a_1^1>a_2^1\text{ and } a_1^2>a_2^2\\
\U(\bot',\bot') & \text{if }a_1^1>a_2^1\text{ and } a_1^2=a_2^2\\
\U(\bot',v_s') & \text{if }a_1^1>a_2^1\text{ and } a_1^2<a_2^2\\
\U(\top',\top') & \text{if }a_1^1=a_2^1\text{ and } a_1^2>a_2^2\\
\U(\top',\bot') & \text{if }a_1^1=a_2^1\text{ and } a_1^2=a_2^2\\
\U(\top',v_s') & \text{if }a_1^1=a_2^1\text{ and } a_1^2<a_2^2\\
\U(v_s',\top') & \text{if }a_1^1<a_2^1\text{ and } a_1^2>a_2^2\\
\U(v_s',\bot') & \text{if }a_1^1<a_2^1\text{ and } a_1^2=a_2^2\\
\U(v_s',v_s') & \text{if }a_1^1<a_2^1\text{ and } a_1^2<a_2^2\enspace .
\end{cases}
\end{align*}
Furthermore, $S^1=\{\top'\}$.
We will use $\tau_i$ for strategies in the 3-state Purgatory Duel to distinguish them from strategies in the Purgatory Duel. There is an illustration of the Purgatory Duel with $N=5$ and $m=2$ in Figure~\ref{fig:p5} and the corresponding 3-state Purgatory Duel in Figure~\ref{fig:3sp}.

\begin{figure}
\center
\begin{tikzpicture}[node distance=3cm,-{stealth},shorten >=2pt]
\ma{top}[$\top$]{1}{1};

\ma[shift={($(top)+(0cm,-3cm)$)}]{v11}[$v_1^1$]{2}{2};

\ma[shift={($(v11)+(3cm,-3cm)$)}]{vs}[$v_s$]{1}{1};

\ma[shift={($(v11)+(6cm,0)$)}]{v12}[$v_1^2$]{2}{2};

\ma[shift={($(v12)+(0,3cm)$)}]{bot}[$\bot$]{1}{1};
\nloop{top};
\nloop{bot};

\draw (v11-2-2.center) -- (top);
\draw (v11-1-1.center) -- (top);
\draw (v11-1-2.center) -- (bot);
\draw (v11-2-1.center) to[out=-90,in=135] (vs);

\draw (v12-2-2.center) -- (bot);
\draw (v12-1-1.center) -- (bot);
\draw (v12-1-2.center) to[out=120,in=-45] (top);
\draw (v12-2-1.center) to[out=-90,in=0] (vs);

\draw[dashed] (vs-1-1.center) to[out=90,in=-45]  (v11);
\draw[dashed] (vs-1-1.center) to[out=90,in=-135] (v12);
\end{tikzpicture}
\caption{An illustration of the Purgatory Duel with $N=5$ and $m=2$. The two dashed edge have probability~$\frac{1}{2}$ each.\label{fig:p5}}
\center
\begin{tikzpicture}[node distance=3cm,-{stealth},shorten >=2pt]
\ma{top}[$\top'$]{1}{1};

\ma[shift={($(top)+(0cm,-5cm)$)}]{vsb}[]{4}{4};
\fill[lightgray] (vsb-1-2.north west) rectangle (vsb-3-4.south east);
\ma[shift={($(top)+(0cm,-5cm)$)}]{vs}[$v_s'$]{4}{4};

\ma[shift={($(vs)+(0,-5cm)$)}]{bot}[$\bot'$]{1}{1};
\nloop[dashed]{top};
\nloop[dashed]{bot}[yscale=-1];
\draw (vs-1-1.center) to[out=180,in=-160] (top);
\draw (vs-1-1.center) to[out=180,in=160] (bot);

\draw (vs-2-2.center) to[out=180,in=-150] (top);
\draw (vs-2-2.center) to[out=180,in=140] (bot);

\draw (vs-3-3.center) to[out=0,in=-30] (top);
\draw (vs-3-3.center) to[out=0,in=40] (bot);

\draw (vs-4-4.center) to[out=0,in=-10] (top);
\draw (vs-4-4.center) to[out=0,in=30] (bot);

\draw (vs-1-2.center) to (vs-1-2.north) arc (0:180:0.5cm);
\draw (vs-1-2.center) to[out=90,in=-110] (top);

\draw[xscale=-1,rotate=90] (vs-3-4.center) to (vs-3-4.east) arc (0:180:0.5cm);
\draw (vs-3-4.center) to[out=0,in=-20] (top);


\nloop{vs-1-3}[yscale=-1,rotate=90];
\draw (vs-1-3.center) to[out=0,in=20] (bot);

\draw[rotate=-90] (vs-2-4.center) to (vs-2-4.east) arc (0:180:0.43cm);

\draw (vs-2-4.center) to[out=0,in=10] (bot);

\nloop[dashed]{vs-1-4};

\draw[dashed](vs-2-1.center) to[out=80,in=-120] (top);

\draw[dashed](vs-4-3.center) to[out=110,in=-100] (top);


\draw[xscale=-1] (vs-2-3.center) to[out=90,in=-90] (vs-1-2.north east) arc (0:180:0.5cm);
\draw (vs-2-3.center) to[out=90,in=-90] (top);

\draw[dashed] (vs-3-1.center) to[out=-80,in=110] (bot);

\draw[dashed] (vs-4-2.center) to[out=-80,in=100]  (bot);

\draw[yscale=-1] (vs-3-2.center) to[out=90,in=-90] (vs-4-2.south east) arc (0:180:0.45cm);
\draw (vs-3-2.center) to[out=-90,in=80](bot);

\draw(vs-4-1.center) to[out=180,in=-170] (top);
\draw(vs-4-1.center) to[out=180,in=150] (bot);
\end{tikzpicture}
\caption{An illustration of the 3-state Purgatory Duel $m=2$. The {\em non-dashed} edges have probability~$\frac{1}{2}$ each. The order of the actions is $(1,1),(1,2),(2,1),(2,2)$. The actions (i.e., $(2,2)$ for player~1 and $(1,1)$ for player~2) with white background cannot be played in a restricted strategy.\label{fig:3sp}}
\end{figure}
\begin{comment}
1 2 3 4
5 1 6 3
7 8 1 2
9 7 5 1
\end{comment}

Given a strategy $\tau_i$ for player~$i$ in the 3-state Purgatory Duel we define the strategy $\sigma_i$ in the Purgatory Duel with $N=5$ 
which is the projection of $\tau_i$ and vice versa (note that the other direction maps to a set of strategies). 

\begin{definition}
Given a strategy $\tau_i$ for player~$i$ in the 3-state Purgatory Duel, let $\sigma_i^{\tau_i}$ be the stationary strategy for 
player~$i$ in the Purgatory Duel with $N=5$ where 
\[
\sigma_i^{\tau_i}(v_1^1)(a_1^1)=\sum_{a_1^2}\tau_i(v_s')(a_1^1,a_1^2)
\]
and 
\[
\sigma_i^{\tau_i}(v_1^2)(a_1^2)=\sum_{a_1^1}\tau_i(v_s')(a_1^1,a_1^2) \enspace .
\]
Also, for any stationary strategy $\sigma_i$ in the Purgatory Duel with $N=5$, let $\Tau_i^{\sigma_i}$ be the set of stationary strategies in the 3-state Purgatory Duel such that $\tau_i\in\Tau_i^{\sigma_i}$ implies that $\sigma_i^{\tau_i}=\sigma_i$.
\end{definition}

\begin{lemma}\label{lem:projection}
Consider any $\eps\geq 0$.
Let $G$ be the Purgatory Duel with $N=5$ and $G'$ be the 3-state Purgatory Duel.
For any $\eps$-optimal stationary strategy $\tau_i$ for player~$i$ in $G'$, we have that $\sigma_i^{\tau_i}$ is $\eps$-optimal starting in $v_s$ in $G$. Similarly, for any $\eps$-optimal stationary strategy  $\sigma_i$ in $G$ starting in $v_s$ each strategy in $\Tau_i^{\sigma_i}$ is $\eps$-optimal in $G'$.
Also, $\val(v_s')=\frac{1}{2}$.
\end{lemma}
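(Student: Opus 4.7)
The plan is to reduce $G'$ to $G$ by showing that a single transition in $G'$ from $v_s'$ simulates, in distribution, two consecutive transitions in $G$ starting from $v_s$. Concretely, I would first establish the following coupling identity: for any stationary strategies $\tau_1,\tau_2$ in $G'$ and every $z\in\{v_s,\top,\bot\}$,
\[
\delta\bigl(v_s',\tau_1(v_s'),\tau_2(v_s')\bigr)(p(z)) \;=\; \sum_{k=1}^{2}\tfrac{1}{2}\sum_{a_1,a_2}\sigma_1^{\tau_1}(v_1^k)(a_1)\,\sigma_2^{\tau_2}(v_1^k)(a_2)\,\delta'(v_1^k,a_1,a_2)(z).
\]
The verification is pure expansion: using $\delta(v_s',\cdot,\cdot)=\U(p\circ\delta'(v_1^1,\cdot,\cdot),p\circ\delta'(v_1^2,\cdot,\cdot))$, weighting by $\tau_1(v_s')(a_1^1,a_1^2)\,\tau_2(v_s')(a_2^1,a_2^2)$, and summing out the unused coordinates for each $k$ turns the joint $\tau_i(v_s')$ into its $k$-th marginal $\sigma_i^{\tau_i}(v_1^k)$. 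The right-hand side is then exactly the two-step transition probability from $v_s$ in $G$ under $(\sigma_1^{\tau_1},\sigma_2^{\tau_2})$ (first uniformly to $v_1^1$ or $v_1^2$, then deterministically to some state in $\{v_s,\top,\bot\}$).

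Because $\top,\bot$ in $G$ and $\top',\bot'$ in $G'$ are absorbing and $p$ respects them, the coupling identifies the Markov chain on $\{v_s',\top',\bot'\}$ driven by $(\tau_1,\tau_2)$ with the Markov chain on $\{v_s,\top,\bot\}$ driven by $(\sigma_1^{\tau_1},\sigma_2^{\tau_2})$ observed at returns to $\{v_s,\top,\bot\}$. Taking absorption probabilities yields
\[
u_{G'}(v_s',\tau_1,\tau_2) \;=\; u_G(v_s,\sigma_1^{\tau_1},\sigma_2^{\tau_2}) \qquad (\dagger)
\]
for all stationary $\tau_1,\tau_2$. For $\val(v_s')=\tfrac{1}{2}$, note that each stationary $\sigma_i$ in $G$ equals $\sigma_i^{\tau_i}$ for the product $\tau_i(v_s')(a,b):=\sigma_i(v_1^1)(a)\,\sigma_i(v_1^2)(b)\in\Tau_i^{\sigma_i}$, and each stationary $\tau_i$ in $G'$ projects to a stationary $\sigma_i^{\tau_i}$ in $G$; combining this correspondence with $(\dagger)$ and the standing fact that positional opponents suffice against a stationary player, the stationary minimax value of $G'$ at $v_s'$ equals that of $G$ at $v_s$, which is $\tfrac{1}{2}$ by Lemma~\ref{lem:mirror}.

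The two $\eps$-optimality claims now follow from $(\dagger)$ by transferring opponent strategies through the product/projection correspondence. If $\tau_i$ is $\eps$-optimal in $G'$, take any positional opponent $\sigma_{\widehat{i}}'$ in $G$ (sufficient by the positional-opponent property) and form the product $\tau_{\widehat{i}}'(v_s')(a,b):=\sigma_{\widehat{i}}'(v_1^1)(a)\,\sigma_{\widehat{i}}'(v_1^2)(b)$, which satisfies $\sigma_{\widehat{i}}^{\tau_{\widehat{i}}'}=\sigma_{\widehat{i}}'$; then $(\dagger)$ transports the $\eps$-optimality bound from $(\tau_i,\tau_{\widehat{i}}')$ in $G'$ to $(\sigma_i^{\tau_i},\sigma_{\widehat{i}}')$ in $G$. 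Conversely, if $\sigma_i$ is $\eps$-optimal in $G$ from $v_s$ and $\tau_i\in\Tau_i^{\sigma_i}$, then for any positional opponent $\tau_{\widehat{i}}'$ in $G'$ the projection $\sigma_{\widehat{i}}^{\tau_{\widehat{i}}'}$ is positional in $G$, and $(\dagger)$ transports the bound back to $G'$. The main obstacle is bookkeeping the marginals when verifying $(\dagger)$ and invoking the positional-opponent property on the correct side in each direction; once $(\dagger)$ is in hand, the rest is mechanical.
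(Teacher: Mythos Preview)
Your argument is correct and rests on the same core observation as the paper: one transition in $G'$ from $v_s'$ equals, in distribution, two transitions in $G$ from $v_s$. The presentations differ only in how this observation is packaged. The paper fixes player~$i$'s strategy, obtains MDPs $H$ and $H'$, and argues that value iteration satisfies $\overline{v}^{2t}_{v_s}=\overline{v'}^{t}_{v_s'}$; this directly equates the opponent's best-response values in the two games and yields the $\eps$-optimality transfer and $\val(v_s')=\tfrac12$ in one stroke, without ever constructing the product strategy for the opponent. You instead prove the stronger two-strategy payoff identity $(\dagger)$ via coupling and then recover the best-response equality by explicitly lifting positional opponents through the product map (forward direction) and the projection (converse direction). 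Your route is more self-contained and spells out the $\eps$-optimality transfer in detail; the paper's route is terser and avoids the bookkeeping because the separability of the opponent's choice is absorbed into the $\min$ step of value iteration. Both are fine, and neither introduces any new idea beyond the one-step/two-step correspondence.
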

\begin{proof}
Consider some pair of strategies $\tau_i$ and $\sigma_i^{\tau_i}$ for player~$i$ in $G'$ and $G$, respectively. 
Fixing $\tau_i$ and $\sigma_i^{\tau_i}$ as the strategy for player~$i$ we get two MDPs $H'$ and $H$, respectively. 
We will argue that $\val(H',v_s')=\val(H,v_s)$. Let $\overline{v'}^t$ and $\overline{v}^t$ be the vector of values for the value iteration algorithm in iteration $t$ when run on $H'$ and $H$ respectively (i.e., the values of $H'$ and $H$ with time limit $t$). We have that $\overline{v}^{2t}_{v_s}=\overline{v'}^{t}_{v_s'}$ by definition of the value-iteration algorithm and the transition function in the 3-state Purgatory Duel. Hence, since $\overline{v}^{2t}_{v_s}$ and $\overline{v'}^{t}_{v_s'}$ converges to the value of state $v_s$ and $v_s'$ in $H$ and $H'$ respectively, they have the same value. We know that the value of $v_s$ is $\frac{1}{2}$ and thus that is also the value of $v_s'$.
\end{proof}

\begin{corollary}
The patience of $\eps$-optimal stationary strategies for both players, 
for $0<\eps<\frac{1}{3}$, in the 3-state Purgatory Duel is at least 
$2^{\Omega(m)}$, where $m^2$ is the number of actions in state $v_s$.
\end{corollary}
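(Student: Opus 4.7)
The plan is to reduce the statement to the $N=5$ case of the Purgatory Duel via the projection of Lemma~\ref{lem:projection} and then invoke the exponential patience lower bound already established there. Given any $\eps$-optimal stationary strategy $\tau_i$ for player~$i$ in the 3-state Purgatory Duel (i.e., $\tau_i$ achieves value within $\eps$ of $\val(v_s')=\frac{1}{2}$ from the only non-absorbing state $v_s'$), Lemma~\ref{lem:projection} produces a stationary strategy $\sigma_i^{\tau_i}$ in the 5-state Purgatory Duel that is $\eps$-optimal starting from $v_s$.

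Next I would derive a lower bound on the patience of $\sigma_i^{\tau_i}$. For $i=1$, the proof of Lemma~\ref{lem:player_1_eps_patience} only relies on $\eps$-optimality at $v_s$, since its argument just exploits that the probability of reaching $\top$ from $v_s$ in the modified MDP $G'[Q]$ is at least $\frac{1}{2}-\eps$; this already gives $\sigma_1^{\tau_1}$ patience at least $2^{\Omega(m)}$. For $i=2$, I would use the symmetry of the Purgatory Duel, which swaps the two players together with the labels $\top\leftrightarrow \bot$ and $v_j^1\leftrightarrow v_j^2$: under this symmetry an $\eps$-optimal player~$2$ strategy at $v_s$ becomes an $\eps$-optimal player~$1$ strategy at $v_s$ with the same patience, so Lemma~\ref{lem:player_1_eps_patience} again applies.

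Finally I would transfer the patience bound back to $\tau_i$. If $\tau_i(v_s')$ has patience $p$, then every positive entry of $\tau_i(v_s')$ is at least $1/p$. For any action $a_1^1$ in the support of $\sigma_i^{\tau_i}(v_1^1)$, the probability
\[
\sigma_i^{\tau_i}(v_1^1)(a_1^1)=\sum_{a_1^2}\tau_i(v_s')(a_1^1,a_1^2)
\]
is a sum of non-negative terms containing at least one strictly positive term, hence is itself at least $1/p$; the same holds at $v_1^2$, and $v_s$ is a single-action state, so the patience of $\sigma_i^{\tau_i}$ is at most $p$. Combining the three steps, the patience of $\tau_i$ is at least $2^{\Omega(m)}$.

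The main obstacle I anticipate is verifying that the $N=5$ patience lower bound really applies given only $\eps$-optimality at $v_s$, rather than at every state as in the statement of Theorem~\ref{thm:big n}. A careful inspection of the proofs of Lemma~\ref{lem:eps_opt_partial_mixed} and Lemma~\ref{lem:player_1_eps_patience}, together with the symmetric version for player~$2$, should confirm that each critical estimate only uses behaviour at $v_s$, so this amounts to a consistency check rather than a new argument.
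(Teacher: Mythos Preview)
Your proposal is correct and follows essentially the same route as the paper: project via Lemma~\ref{lem:projection} to the $N=5$ Purgatory Duel, invoke the $2^{\Omega(m)}$ patience bound there (Theorem~\ref{thm:big n}), and transfer back using the observation that marginals have patience at most that of $\tau_i$. Your care about the $v_s$-only optimality issue is well placed and the checks you outline go through; note that the ``symmetry'' you invoke for player~2 is precisely the content of Lemma~\ref{thm:epsilon_mirror} (which in turn rests on Lemma~\ref{lem:eps_opt_partial_mixed} and Lemma~\ref{lem:partial_mixed} to guarantee absorption in $\top$ or $\bot$), so the game-structure involution alone is not enough without that termination argument.
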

\begin{proof}
The patience of $\eps$-optimal strategies, for $0<\eps<\frac{1}{3}$, in the Purgatory Duel with $N=5$ is $2^{\Omega(m)}$ from Theorem~\ref{thm:big n}. 
Thus, by Lemma~\ref{lem:projection}, the patience of the 3-state Purgatory Duel is $2^{\Omega(m)}$.
\end{proof}

\smallskip\noindent{\bf The restricted 3-state Purgatory Duel.} 
The above corollary only shows that the for the 3-state Purgatory Duel, in which one state have 
$m^2$ actions and others have 1, the patience is at least $2^{\Omega(m)}$. 
We now show how to decrease the number of actions from quadratic down to 
linear, while keeping the same patience.

From Lemma~\ref{lem:precise} and Lemma~\ref{lem:mat_wh} we see that for any optimal strategy $\sigma_1$ for player~1 (resp., $\sigma_2$ for player~2) in the Purgatory Duel with $N=5$, we have that $\sigma_1(v_1^1)(1)>\frac{1}{2}$ and that $\sigma_1(v_1^2)(1)>\frac{1}{2}$ (resp., $\sigma_2(v_1^1)(m)>\frac{1}{2}$ and that $\sigma_2(v_1^2)(m)>\frac{1}{2}$). Hence, there exists an optimal strategy for player~1 in the 3-state Purgatory Duel that only plays actions on the form $(1,a_1^2)$ and $(a_1^1,1)$ with positive probability. More precisely, the strategy $\tau_1$ where (1)~$\tau_1(v_s)((1,a_1^2))=\sigma_1(v_1^2)(a_1^2)$; and (2)~$\tau_1(v_s)((a_1^1,1))=\sigma_1(v_1^1)(a_1^1)$; and (3)~has the remaining probability mass on $(1,1)$ is optimal in the 3-state Purgatory Duel, since $\sigma_1^{\tau_1}$ is $\sigma_1$. Similarly for player~2 and the actions $(m,a_2^2)$ and $(a_2^1,m)$. Let \[R_1=\{(i,j)\mid i=1\vee j=1, 1 \leq i,j \leq m\}\] and \[R_2=\{(i,j)\mid i=m\vee j=m, 1 \leq i,j \leq m\}\enspace .\]
Observe that $|R_1|=|R_2|=2m-1$.
We say that a strategy for player~$i$, for each $i$, is {\em restricted} if the strategy uses only actions in $R_i$.
The sub-matrix corresponding to the restricted 3-state Purgatory Duel for $m=2$ is depicted as the grey sub-matrix in Figure~\ref{fig:3sp}.
This suggests the definition of the {\em restricted 3-state Purgatory Duel}, which is like the 3-state Purgatory Duel, 
except that the strategies for the players are restricted. 
We next show that $\eps$-optimal strategies in the restricted 3-state Purgatory Duel also have high patience (note, that while this is perhaps not surprising, it does not follow directly from the similar result for the 3-state Purgatory Duel, since it is possible that the restriction removes the optimal best reply to some strategy which would otherwise not be $\eps$-optimal). 
The key idea of the proof is as follows: (i)~we show that the patience of player~$i$ 
in the 3-state Purgatory Duel remains unchanged even if only the opponent is enforced to
use restricted strategies; and (ii)~each player has a restricted strategy that 
is optimal in the 3-state Purgatory Duel as well as in the restricted 3-state Purgatory Duel. 

\begin{lemma}
The value of state $v_s'$ in the restricted 3-state Purgatory Duel is $\frac{1}{2}$
\end{lemma}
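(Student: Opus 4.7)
The plan is to sandwich the value of $v_s'$ in the restricted game from above by $\tfrac12$ and from below by $\tfrac12$, using restricted strategies that are already known to be optimal in the unrestricted 3-state Purgatory Duel. The hard intellectual content is already buried in the preceding paragraph (that optimal play in Purgatory Duel with $N=5$ concentrates more than half the mass on extreme actions); what remains is a short packaging argument.

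First, I would record the existence of restricted stationary optimal strategies in the \emph{unrestricted} 3-state Purgatory Duel. By Lemma~\ref{lem:mirror} player~1 has an optimal stationary strategy $\sigma_1^*$ in the Purgatory Duel with $N=5$, and by Lemma~\ref{lem:precise} together with Lemma~\ref{lem:mat_wh} any such $\sigma_1^*$ satisfies $\sigma_1^*(v_1^1)(1)>\tfrac12$ and $\sigma_1^*(v_1^2)(1)>\tfrac12$. As explained in the paragraph preceding the statement, this lets us build a strategy $\tau_1^*\in\Tau_1^{\sigma_1^*}$ supported on $R_1$, and then Lemma~\ref{lem:projection} implies $\tau_1^*$ is optimal from $v_s'$ in the 3-state Purgatory Duel. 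The symmetric construction (using the fact that optimal strategies for player~2 concentrate mass on action~$m$) gives a restricted stationary strategy $\tau_2^*$ that is optimal for player~2 from $v_s'$.

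Next, I would observe that optimality of $\tau_1^*$ in the unrestricted 3-state Purgatory Duel means that against \emph{every} strategy $\tau_2$ for player~2 (in particular every restricted one) the payoff from $v_s'$ is at least $\val(v_s')=\tfrac12$, which was established in Lemma~\ref{lem:projection}. Hence the lower value of $v_s'$ in the restricted game is at least $\tfrac12$. Symmetrically, $\tau_2^*$ keeps the payoff from $v_s'$ at most $\tfrac12$ against every restricted strategy of player~1, so the upper value in the restricted game is at most $\tfrac12$.

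Finally, the restricted 3-state Purgatory Duel is itself a concurrent reachability game (with smaller action sets), hence determined by~\cite{Eve57}; combining this with the two bounds above yields $\val(v_s')=\tfrac12$ in the restricted game. I do not expect any genuine obstacle here, since the argument is purely a ``best-reply is a best-reply against a smaller class'' observation; the only care needed is in citing the right earlier results to guarantee that the restricted optimal strategies $\tau_1^*,\tau_2^*$ actually exist.
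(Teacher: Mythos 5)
Your proposal is correct and is essentially the paper's own argument: the paper's proof also observes that each player has a restricted strategy that is optimal in the unrestricted 3-state Purgatory Duel guaranteeing $\frac{1}{2}$, and that such strategies remain optimal (and still guarantee $\frac{1}{2}$) once the opponent is also restricted. Your write-up just makes explicit the construction of $\tau_1^*,\tau_2^*$ and the sandwich of the lower and upper values, which the paper leaves implicit.
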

\begin{proof}
Each player has a restricted strategy which is optimal in the 3-state Purgatory Duel and ensures value $\frac{1}{2}$. 
Thus, these strategies must still be optimal in the restricted 3-state Purgatory Duel and still ensure value $\frac{1}{2}$.
\end{proof}

The next lemma is conceptually similar to Lemma~\ref{lem:eps_opt_partial_mixed} for $N=5$ (however, it does not follow from 
Lemma~\ref{lem:eps_opt_partial_mixed}, since the strategies for player~1 are restricted here).

\begin{lemma}\label{lem:res._3-state_partial_mixed}
Let $\tau_2$ be an $\eps$-optimal stationary strategy for player~2 in the restricted 3-state Purgatory Duel, 
for $0<\eps<\frac{1}{2}$. Then, $\sum_{i=1}^m\tau_2(v_s')(i,j)>0$, for each $j$.
\end{lemma}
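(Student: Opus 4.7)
The plan is to follow the template of Lemma~\ref{lem:eps_opt_partial_mixed}: suppose toward contradiction that $\sum_i \tau_2(v_s')(i,j_0)=0$ for some $j_0$, and exhibit a restricted pure strategy $\tau_1$ for player~1 such that $u(v_s',\tau_1,\tau_2)>\tfrac{1}{2}+\eps$, contradicting $\eps$-optimality of $\tau_2$ since $\val(v_s')=\tfrac{1}{2}$. The first observation to exploit is that because the support of $\tau_2$ lies in $R_2$, the hypothesis sharpens to $\tau_2(v_s')(i,j_0)=0$ for \emph{every} $i$. My aim is then to find $\tau_1$ whose one-step transition from $v_s'$ under $\tau_1\times\tau_2$ puts probability zero on $\bot'$ and strictly positive probability on $\top'$; the resulting Markov chain reaches $\top'$ almost surely and yields $u(v_s',\tau_1,\tau_2)=1$.

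Finding such $\tau_1$ reduces to a short case analysis read off the transition table of $v_s'$. If $j_0=m$, restrictedness of $\tau_2$ collapses its support to $\{(m,j):j<m\}$, and $\tau_1=(1,m)\in R_1$ yields the pattern $(<,>)=\U(v_s',\top')$ against every action in the support. If $j_0<m$, I first try $\tau_1=(1,j_0)\in R_1$: against every action of $\tau_2$ the first coordinate is $\geq 1$ and the second coordinate is $\neq j_0$, so the resulting row-pattern is one of $(=,<),(=,>),(<,<),(<,>)$, none of which produces $\bot'$. If the $\top'$-probability is already positive, we are done. Otherwise, the support of $\tau_2$ is pinned down to $\{(i,m):2\leq i\leq m\}\cup\{(m,j):j_0<j<m\}$. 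If the first piece is nonempty, letting $i^*$ be its smallest index I play $\tau_1=(i^*,1)\in R_1$: minimality of $i^*$ rules out every $(>,\cdot)$ pattern, while $(i^*,m)$ is matched as $(=,<)=\U(\top',v_s')$ and contributes $\tfrac{1}{2}\tau_2(v_s')(i^*,m)>0$ to $\top'$. If the first piece is empty, then $\tau_2$ is supported on $\{(m,j):j_0<j<m\}$ (necessarily nonempty, forcing $j_0\leq m-2$), and $\tau_1=(m,1)\in R_1$ yields $(=,<)=\U(\top',v_s')$ against every action in the support.

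The main delicacy, and the reason this lemma does not reduce to Lemma~\ref{lem:eps_opt_partial_mixed} through Lemma~\ref{lem:projection}, is a realizability constraint for restricted lifts: a pair of distributions $(\sigma_1(v_1^1),\sigma_1(v_1^2))$ comes from some restricted $\tau_1$ in the 3-state game only if $\sigma_1(v_1^1)(1)+\sigma_1(v_1^2)(1)\geq 1$. The natural strategy from the $N=5$ proof---an $\eta$-optimal Purgatory strategy in $v_1^1$ together with $\delta_{j_0}$ in $v_1^2$---violates this constraint as soon as $j_0\neq 1$, since such Purgatory strategies do not concentrate mass on action $1$. My plan therefore abandons lifting $N=5$ strategies altogether and argues directly inside the 3-state game, leveraging the fact that restrictedness of player~2 itself forces the support of $\tau_2$ to be so sparse that simple pure restricted actions of player~1 suffice; no $\eta$-optimal Purgatory ingredient is needed.
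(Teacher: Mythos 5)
Your proof is correct, but it takes a genuinely different route from the paper's. The paper splits on whether the missing second-component index (your $j_0$, its $a'$) equals $1$: for $a'>1$ it plays the pure restricted action $(1,a')$ and wins outright, while for $a'=1$ it fixes player~1's second component to $1$ and imports an $\eta$-optimal strategy for \emph{Purgatory} on the first component, obtaining payoff only $1-\eta$; notably, the paper's argument never uses that $\tau_2$ is restricted, and so in fact establishes the statement for an arbitrary stationary $\tau_2$ of the (unrestricted) 3-state game with a vanishing second-component marginal. You instead exploit the hypothesis $\supp(\tau_2(v_s'))\subseteq R_2$, which in exactly the cases where the single action $(1,j_0)$ fails pins the support down to $\{(i,m):2\leq i\leq m\}\cup\{(m,j):j_0<j<m\}$; from there your short case analysis always produces a \emph{pure} restricted action for player~1 whose one-step distribution from $v_s'$ avoids $\bot'$ and hits $\top'$ with positive probability, giving payoff exactly $1>\frac{1}{2}+\eps$. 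I checked each of your cases against the explicit transition table of $v_s'$ and they are sound (the case $m=1$ is vacuous). The trade-off: your argument is elementary and self-contained, dispensing with the $\eta$-optimal Purgatory ingredient altogether, whereas the paper's is insensitive to whether $\tau_2$ is restricted, a mild extra generality. Your closing observation about the realizability constraint $\sigma_1(v_1^1)(1)+\sigma_1(v_1^2)(1)\geq 1$ for restricted lifts is accurate and correctly explains why the lemma cannot simply be pulled back from Lemma~\ref{lem:eps_opt_partial_mixed} via Lemma~\ref{lem:projection}.
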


\begin{proof}
Fix $0<\eps<\frac{1}{2}$. 
Let $\tau_2$ be a stationary strategy in the 3-state Purgatory Duel (note, we do not require that $\tau_2$ is restricted), such that there exists an $a_2$ for which $\sum_{a_1}\tau_2(v_s')((a_1,a_2))=0$. Let $a'$ be smallest such $a_2$.

Fix $0<\eta<\frac{1}{2}-\eps$.
We show that there exists a restricted stationary strategy $\tau_1$ for player~1, ensuring that the payoff is at least $1-\eta>\frac{1}{2}+\eps$.
There are two cases. Either (i)~$a'=1$ or (ii)~not. 

In case (i), let $\sigma_1(v_s')$ be an $\eta$-optimal strategy for player~1 in the {\em Purgatory} with parameters $(3,m)$. Then consider the strategy $\tau_1(v_s')$, where $\tau_1(v_s')((a,1))=\sigma_1(v_s')(a)$, for each $a$. Observe that $\tau_1$ is a restricted strategy. Consider what happens if $\tau_1$ is played against $\tau_2$: 
In each round~$i$, as long as $v_i=v_s'$, the next state is either defined by the first or the second component of the actions of the players. If it is defined by the second component, then the next state $v_{i+1}$ is always $v_s'$, because player~1's first component is $1$ and player~2's first component greater than~1. Consider the rounds where the next state is defined by the first component. In such rounds 
$\top$ is reached with probability $(1-\eta)\cdot p$, for some $p>0$ and $\bot$ is reached with probability at most $\eta\cdot p$, because player~1 follows an $\eta$-optimal strategy in Purgatory on the first component. 
But in expectation, in every second round the first component is used and thus $\top$ is reached with probability at least 
$1-\eta$, which shows that $\sigma_2$ is not $\eps$-optimal.

In case~(ii), consider the strategy $\tau_1$, such that $\tau_1(v_s')((1,a'))=1$. Observe that $\tau_1$ is a restricted strategy. Consider what happens if $\tau_1$ is played against $\tau_2$: In each round~$i$, as long as $v_i=v_s'$, the next state is either defined by the first or the second component of the players choice. If it is defined by the first component, then the next state $v_{i+1}$ is always $v_s'$ or $\top$, because the choice of player~1 is $1$. Consider the rounds where the next state is defined by the second component. In each such round either $\top$ or $v_s'$ is reached and $\top$ is reached with positive probability, since player~1 plays $a'>1$ and player~2 always plays something else and $1$ with positive probability. But in expectation, in every second round the second component is used and hence~$\top$ is reached with probability~1 eventually, which shows that $\sigma_2$ is not $\eps$-optimal.
\end{proof}

We will now define how to mirror strategies in the restricted 3-state Purgatory Duel.
\begin{definition}
Given a stationary strategy $\tau_i$ for player~$i$ in the restricted 3-state Purgatory Duel, for either $i$, 
let $\tau_{\widehat{i}}^{\tau_i}$ be the stationary strategy for player~$\widehat{i}$ (referred to as the mirror
strategy of $\tau_i$) in the restricted 3-state Purgatory Duel 
where $\tau_{\widehat{i}}^{\tau_i}(v_s')((a_1,a_2))=\tau_i(v_s')((a_2,a_1))$ for each $a_1$ and $a_2$.
\end{definition}

We next show that each $\eps$-optimal stationary strategy for player~2 can be mirrored to an $\eps$-optimal stationary for player~1. 
The statement and the proof idea are similar to Lemma~\ref{thm:epsilon_mirror}, but since the strategies for the players are restricted here, 
there are some differences.

\begin{lemma}\label{lem:change players}
For all $0<\eps<\frac{1}{2}$, each $\eps$-optimal stationary strategy $\tau_2$ for player~2 in the restricted 3-state Purgatory Duel is such that the mirror 
strategy $\tau_1^{\tau_2}$ is $\eps$-optimal for player~1 in the restricted 3-state Purgatory Duel.
\end{lemma}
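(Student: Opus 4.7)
My plan is to imitate the proof of Lemma~\ref{thm:epsilon_mirror}, adapted to the restricted 3-state game where the only non-absorbing state is $v_s'$ and $\val(v_s')=\tfrac12$. Fix $0<\eps<\tfrac12$ and an $\eps$-optimal stationary $\tau_2$, and set $\tau_1=\tau_1^{\tau_2}$. It suffices to show $u(v_s',\tau_1,\tau_2^*)\ge \tfrac12-\eps$ for every stationary restricted $\tau_2^*$ of player~2.

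I would first establish almost-sure termination of the play $(\sigma_1',\tau_2)$ starting at $v_s'$ for any stationary restricted $\sigma_1'$. Observing from the definition of $\delta(v_s',\cdot,\cdot)$ that the self-loop at $v_s'$ is realised exactly when both $a_1^1>a_2^1$ and $a_1^2>a_2^2$ hold simultaneously (in every other comparison pattern the uniform mixture charges an element of $\{\top',\bot'\}$), and since Lemma~\ref{lem:res._3-state_partial_mixed} gives $\sum_i\tau_2(v_s')(i,m)>0$, the event $a_2^2=m$ has positive probability under $\tau_2$, ruling out $a_1^2>a_2^2$ from occurring with probability~$1$. Hence the per-step self-loop probability is strictly less than~$1$ and the chain is absorbed almost surely. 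Combined with $\eps$-optimality of $\tau_2$ and $\val(v_s')=\tfrac12$, this yields $\Pr[\text{reach }\bot'\mid(\sigma_1',\tau_2),v_s']\ge \tfrac12-\eps$ for every $\sigma_1'$.

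Next I would apply the symmetry step mirroring that of Lemma~\ref{thm:epsilon_mirror}. Let $\phi$ be the involution on $\{v_s',\top',\bot'\}$ that swaps $\top'\leftrightarrow\bot'$ and fixes $v_s'$. The play of $(\sigma_1',\tau_2)$ at $v_s'$ corresponds, under $\phi$, to the play of $(\tau_1,\tau_2^{\sigma_1'})$ at $v_s'$, so
\[
\Pr[\text{reach }\top'\mid(\tau_1,\tau_2^{\sigma_1'}),v_s']
=\Pr[\text{reach }\bot'\mid(\sigma_1',\tau_2),v_s']\ge \tfrac12-\eps.
\]
Because $\sigma_1'\mapsto\tau_2^{\sigma_1'}$ is an involution on stationary strategies at $v_s'$, every $\tau_2^*$ is obtained as $\tau_2^{\sigma_1'}$ for some $\sigma_1'$, giving $u(v_s',\tau_1,\tau_2^*)\ge \tfrac12-\eps$ for all $\tau_2^*$, hence $\tau_1$ is $\eps$-optimal.

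The main obstacle is the symmetry step. One must verify, by case analysis over the nine comparison patterns $(c_1,c_2)\in\{<,=,>\}^2$ of the two action coordinates, that under the mirror identities $\tau_1(a_1,a_2)=\tau_2(a_2,a_1)$ and $\tau_2^{\sigma_1'}(a_1,a_2)=\sigma_1'(a_2,a_1)$, the push-forward of the transition kernel of $(\sigma_1',\tau_2)$ on $\{v_s',\top',\bot'\}$ under $\phi$ matches that of $(\tau_1,\tau_2^{\sigma_1'})$. The structure of $\delta(v_s',\cdot,\cdot)$ as the uniform mixture of the $v_1^1$- and $v_1^2$-transitions, which differ precisely by the swap $\top\leftrightarrow\bot$, makes this check ultimately mechanical once the action-component swap in the mirror is lined up with the exchange of the two embedded sub-games.
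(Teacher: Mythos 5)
Your proof is correct and follows essentially the same route as the paper's: almost-sure absorption of the induced Markov chain via Lemma~\ref{lem:res._3-state_partial_mixed}, the bound $\Pr[\text{reach }\top']\le\frac12+\eps$ from $\eps$-optimality of $\tau_2$ together with $\val(v_s')=\frac12$, and the mirror symmetry of the game to convert the resulting lower bound on reaching $\bot'$ into the desired lower bound for $\tau_1^{\tau_2}$. The only organizational difference is that the paper fixes an optimal best reply to the mirrored strategy and invokes absorption for that particular pair \emph{after} the symmetry step, whereas you invoke absorption for $(\sigma_1',\tau_2)$ \emph{before} it and then use that every restricted $\tau_2^*$ is the mirror image of some restricted $\sigma_1'$ (a bijectivity fact between the two restricted action sets that deserves the one-line check, just as your deferred nine-case verification of the transition-kernel equivariance does) --- both orderings are valid and rest on the same three ingredients.
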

\begin{proof}
Fix $\eps$, such that $0<\eps<\frac{1}{2}$. 
Consider some $\eps$-optimal stationary strategy $\tau_2^*$ for player~2 in the restricted 3-state Purgatory Duel. 
Let $\tau_1^*=\tau_1^{\tau_2^*}$ be the mirror strategy for player~1 given $\tau_2^*$ 
and let $\tau_2$ be an optimal best reply to $\tau_1^*$.
Let $\tau_1=\tau_1^{\tau_2}$ be the mirror strategy for player~1 given $\tau_2$.  
Observe that eventually either $\top$ or $\bot$ is reached with probability~1, when playing $\tau_1^*$ against $\tau_2$, 
by Lemma~\ref{lem:res._3-state_partial_mixed} and the construction of the game (since there is a positive probability 
that the second component matches in every round in which the play is in $v_s'$). 
We have that $u(v_s',\tau_1,\tau_2^*)\leq\frac{1}{2}+\eps$, since $\tau_2^*$ is $\eps$-optimal. 
This indicates that $\top'$ is reached with probability at most $\frac{1}{2}+\eps$ when playing $\tau_1$ 
against $\tau_2^*$. 
Hence, by symmetry $\bot'$ is reached with probability at most $\frac{1}{2}+\eps$ when playing $\tau_1^*$ against $\tau_2$. 
Thus, since $\bot'$ or $\top'$ is reached with probability~1, we have that $u(v_s',\tau_1^*,\tau_2)\geq\frac{1}{2}-\eps$, 
showing that $\tau_1^*$ is $\eps$-optimal.
\end{proof}

We next show that $\eps$-optimal stationary strategies for player~1 requires high (exponential) patience. 
The statement and the proof idea are similar to Lemma~\ref{lem:player_1_eps_patience}, 
but since the players strategies are restricted here, there are some differences.

\begin{lemma}\label{lem:p1_in_rest}
For all $0<\eps<\frac{1}{3}$, each $\eps$-optimal stationary strategy $\sigma_1$ for player~1 in the restricted 3-state Purgatory Duel has patience $2^{\Omega(m)}$.
\end{lemma}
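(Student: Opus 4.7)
The plan is to work directly with the $\eps$-optimality of $\tau_1$ against carefully chosen pure restricted player~2 strategies and extract a geometric recurrence on the tail sums of the first-coordinate marginal of $\tau_1$. Write $\pi_k=\sum_{b}\tau_1(v_s')(k,b)$ and $\rho_k=\sum_{a}\tau_1(v_s')(a,k)$ for the two coordinate marginals of $\tau_1$; since $\tau_1$ is restricted we have $\pi_m=\tau_1(v_s')(m,1)$ and $\rho_m=\tau_1(v_s')(1,m)$. Set $T_k=\sum_{\ell>k}\pi_\ell$, so $T_0=1$ and $T_m=0$.

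First I will rule out the degenerate case $\pi_m=0$: testing $\tau_1$ against $\tau_2=(m,m)\in R_2$ gives per-step transition probabilities of $\tfrac12\pi_m$ to $\top'$ and $\tfrac12\rho_m$ to $\bot'$, so if $\pi_m=0$ the play reaches $\top'$ with probability $0$, contradicting $\eps$-optimality. Next, for every $k\in\{1,\ldots,m\}$ the pure action $(k,m)$ lies in $R_2$; a short calculation from the definition of $\delta(v_s',\cdot,\cdot)$ shows that when player~2 plays $(k,m)$ forever, the one-step probabilities from $v_s'$ are $p=\tfrac12\pi_k$ to $\top'$ and $b=\tfrac12(T_k+\rho_m)$ to $\bot'$, so the long-run probability of reaching $\top'$ equals $\pi_k/(\pi_k+T_k+\rho_m)$. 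The $\eps$-optimality of $\tau_1$ forces this ratio to be at least $\tfrac12-\eps$, which rearranges to $\pi_k\geq c(T_k+\rho_m)$ with $c=(1-2\eps)/(1+2\eps)$; this constant is bounded below by $1/5$ whenever $\eps<1/3$.

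Combining $\pi_k\geq c(T_k+\rho_m)$ with the identity $T_{k-1}=T_k+\pi_k$ gives the linear recurrence $T_{k-1}\geq(1+c)T_k+c\rho_m$ anchored at $T_m=0$. Iterating this $m$ times and using $T_0=1$ yields $\rho_m\bigl((1+c)^m-1\bigr)\leq 1$. If $\rho_m>0$ this forces $\rho_m\leq 2^{-\Omega(m)}$, and since $(1,m)$ lies in the support of $\tau_1$ the patience of $\tau_1$ is at least $1/\rho_m\geq 2^{\Omega(m)}$. If instead $\rho_m=0$, the recurrence collapses to $T_{k-1}\geq(1+c)T_k$, which telescopes to $\pi_m=T_{m-1}\leq(1+c)^{-(m-1)}\leq 2^{-\Omega(m)}$; combined with $\pi_m>0$ from the first step, $(m,1)$ then witnesses the same patience bound.

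The main obstacle I anticipate is that the direct analogue of the argument used in Lemma~\ref{lem:player_1_eps_patience}, namely applying Lemma~\ref{lem:change_child} to the replacement that moves all $\bot'$-mass out of the transitions from $v_s'$, collapses every value in the modified MDP to either $0$ or $1$, so it only recovers the qualitative fact that each $\pi_k$ is strictly positive. Reading the per-step probabilities against the specific pure restricted responses $(k,m)$ directly from the transition function is what lets the geometric blow-up be extracted from the $\eps$-optimality inequality and converted into a quantitative $2^{\Omega(m)}$ patience lower bound.
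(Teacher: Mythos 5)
Your proof is correct, but it takes a genuinely different route from the paper. The paper fixes $\sigma_1$, views the result as an MDP for player~2, applies the perturbation lemma (Lemma~\ref{lem:change_child}(2)) to move the $\bot'$-mass generated by the second component back to $v_s'$, argues that the perturbed MDP is the one obtained from Purgatory with $N=3$ under the induced strategy, and then imports the $2^{\Omega(m)}$ patience bound for Purgatory from~\cite{HIM11}. You instead test $\tau_1$ directly against the $m$ pure restricted responses $(k,m)\in R_2$, read off the per-step absorption probabilities $\tfrac12\pi_k$ to $\top'$ and $\tfrac12(T_k+\rho_m)$ to $\bot'$ (which I verified against the explicit transition table for $\delta(v_s',\cdot,\cdot)$ — note this table, rather than the formal definition via the Purgatory Duel, is the operative one, since the paper's two descriptions disagree on the roles of $>$ and $<$), and convert $\eps$-optimality into $\pi_k\geq c\,(T_k+\rho_m)$ with $c=(1-2\eps)/(1+2\eps)>\tfrac15$. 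The resulting recurrence $T_{k-1}\geq(1+c)T_k+c\rho_m$ with $T_m=0$, $T_0=1$ then forces either $\rho_m=\tau_1(v_s')(1,m)$ or $\pi_m=\tau_1(v_s')(m,1)$ (both shown to lie in the support when positive, and $\pi_m>0$ always, via the response $(m,m)$) to be $2^{-\Omega(m)}$. All steps check out, including the case split on $\rho_m=0$. What the two approaches buy: the paper's reduction reuses machinery that scales to the general Purgatory Duel (Lemma~\ref{lem:player_1_eps_patience}), where a direct recurrence over $2n+1$ states would be unwieldy; your argument is self-contained for the 3-state case, avoids both Lemma~\ref{lem:change_child} and the external citation, and yields an explicit base $(6/5)^m$ in the exponent — it is essentially an inline re-derivation of the one-state Purgatory bound adapted to the restricted game.
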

\begin{proof}
Fix some $0<\eps<\frac{1}{3}$ and some $\eps$-optimal stationary strategy $\sigma_1$ for player~1 in the restricted 3-state Purgatory Duel. The restricted 3-state Purgatory Duel then turns into an MDP $M$ for player~2 and we can apply Lemma~\ref{lem:change_child}(2). 
We have that $p=\sum_{a_1^1}\sigma_1(v_s')(a_1^1,a_2^2)/2$ is the probability that player~1 plays an action with second component $a_2^2$ and the next state is defined by the second component. 
Let $\dist(a_1^2,a_2^2)$ be the probability distribution over successors if player~2 plays  $(a_1^2,a_2^2)$ in $v_s'$. 
Observe that the play would go to $\bot$ if both players played $a_2^2$ and the next state is defined by the second component 
and thus 
\[
\dist(a_1^2,a_2^2)(\bot)-p\geq 0\enspace . 
\]
Let 
\[
\dist'(a_1^2,a_2^2)(v)=\begin{cases}\dist(a_1^2,a_2^2)(v_s')+p&\text{if }v=v_s'\\
\dist(a_1^2,a_2^2)(\bot)-p&\text{if }v=\bot\\
\dist(a_1^2,a_2^2)(\top)&\text{if }v=\top\enspace . \\
\end{cases} 
\]
Consider the MDP $M'$, which is equal to $M$, except that it uses the distribution $\dist'(a_1^2,a_2^2)$ instead of $\dist(a_1^2,a_2^2)$. By Lemma~\ref{lem:change_child}(2) we have that \[\val(M')\geq \val(M)\geq \frac{1}{2}-\eps\geq \frac{1}{6}\enspace .\]
It is clear that player~2 has an optimal positional strategy in $M'$ that plays $(a_1^2,m)$ for some $a_1^2$ (this strategy is restricted), since playing $(a_1^2,a_2^2)$, for some $a_2^2<m$, just increases the probability to reach $\top$ in one step (because player~1 might play some action $a_2^1>a_2^2$ and otherwise the play will go back to $v_s'$).
But $M'$ corresponds to the MDP obtained by playing $\sigma_1$ in the Purgatory with $N=3$ (where $v_s'$ corresponds to $v_1$), except that with probability $\frac{1}{2}$ the play goes  from $v_s'$ back to $v_s'$ in the restricted 3-state Purgatory Duel no matter the choice of the players. This difference clearly does not change the value. Hence, $\sigma_1$ ensures payoff at least $\frac{1}{6}$ in the Purgatory with $N=3$ and hence has patience $2^{\Omega(m)}$ by~\cite{HIM11}.
\end{proof}

We are now ready for the main result of this section.

\begin{theorem}\label{thm:small n}
For all $0<\eps<\frac{1}{3}$, every $\eps$-optimal stationary strategy, for either player, 
in the restricted 3-state Purgatory Duel (that has three states, two of which are absorbing, and 
the non-absorbing state has $O(m)$ actions for each player) has patience $2^{\Omega(m)}$.
\end{theorem}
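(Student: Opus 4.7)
The plan is to assemble the theorem directly from the two preceding lemmas: Lemma~\ref{lem:p1_in_rest} already handles player~1, and Lemma~\ref{lem:change players} allows us to transfer the lower bound to player~2 via the mirror operation. So the proof is essentially a short combination argument, with no new technical content, but I should be careful that the mirror operation preserves patience.

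First, I would simply cite Lemma~\ref{lem:p1_in_rest} to conclude that any $\eps$-optimal stationary strategy $\tau_1$ for player~1 in the restricted 3-state Purgatory Duel has patience $2^{\Omega(m)}$, for $0<\eps<\frac{1}{3}$. This takes care of one side of the theorem.

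Next, for player~2, I would proceed by contradiction: suppose $\tau_2$ is an $\eps$-optimal stationary strategy for player~2 with patience $p$. Consider the mirror strategy $\tau_1^{\tau_2}$, which by definition satisfies $\tau_1^{\tau_2}(v_s')((a_1,a_2))=\tau_2(v_s')((a_2,a_1))$. Since mirroring is simply a permutation of the action set, $\tau_1^{\tau_2}$ is also a restricted stationary strategy (the set $R_1$ of actions with first or second coordinate equal to $1$ is mapped to itself under swapping coordinates, and similarly for the role of the maximum component — in fact, mirroring swaps $R_2$ and $R_1$ in the appropriate sense for each player), and it has exactly the same multiset of positive probabilities as $\tau_2$, hence the same patience $p$. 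By Lemma~\ref{lem:change players}, $\tau_1^{\tau_2}$ is $\eps$-optimal for player~1. Then Lemma~\ref{lem:p1_in_rest} forces $p=2^{\Omega(m)}$, which is the desired lower bound on the patience of $\tau_2$.

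The only subtle point is to verify that mirroring really maps restricted strategies for player~2 to restricted strategies for player~1, i.e.\ that it maps the support set $R_2=\{(i,j):i=m\vee j=m\}$ (the restriction for player~2) to $R_1=\{(i,j):i=1\vee j=1\}$ (the restriction for player~1) — which is not literally true, so I need to be careful here. The cleaner way is to just observe that the proof of Lemma~\ref{lem:p1_in_rest} only uses that player~1 plays some stationary strategy and obtains payoff at least $\frac12-\eps$; it does not really use the restriction structure for player~1 beyond the form of the actions. So I would state and apply the theorem for the 3-state Purgatory Duel with the mirror of $\tau_2$ viewed as an $\eps$-optimal strategy obtained via the mirror operation, and combine with the fact from Lemma~\ref{lem:change players} that the mirror preserves $\eps$-optimality. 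The main obstacle I expect is precisely this bookkeeping about which action subsets are symmetric under mirroring, but since the entire construction of the restricted game is symmetric by design (the grey sub-matrix in Figure~\ref{fig:3sp} is symmetric under swapping the two players together with swapping the order $(i,j)\leftrightarrow(j,i)$), the argument goes through.
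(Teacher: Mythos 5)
Your proof is correct and essentially identical to the paper's: the paper likewise disposes of player~1 by citing Lemma~\ref{lem:p1_in_rest}, and of player~2 by noting that the mirror operation of Lemma~\ref{lem:change players} produces an $\eps$-optimal player-1 strategy with the same patience. The subtlety you flag is genuine --- the map $(a_1,a_2)\mapsto(a_2,a_1)$ sends $R_2$ to itself rather than to $R_1$, so the ``restricted'' mirror is not literally supported on $R_1$ --- but this concerns the definition underlying Lemma~\ref{lem:change players} (which the theorem's proof takes as given, and which the paper itself glosses over), and your observation that the player-1 lower bound argument does not hinge on this bookkeeping is a reasonable way to discharge it.
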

\begin{proof}
By Lemma~\ref{lem:p1_in_rest}, the statement is true for every $\eps$-optimal stationary strategy for player~1. 
By Lemma~\ref{lem:change players}, every $\eps$-optimal stationary strategy for player~2 corresponds to an $\eps$-optimal stationary strategy for player~1, 
with the same patience, and thus every $\eps$-optimal stationary strategy for player~2 has patience $2^{\Omega(m)}$.
\end{proof}

\section{Zero-sum Concurrent Stochastic Games: Patience Upper Bound}
In this section we give upper bounds on the patience of optimal and
$\eps$-optimal stationary strategies in a zero-sum concurrent
reachability game $G$ for the safety player. Our exposition here
makes heavy use of the setup of Hansen~et~al.\ \cite{HKLMT11} and will
for that reason not be fully self-contained. We assume for
concreteness that the player 1 is the reachability player and player 2 the
safety player.

Hansen~et~al.\ showed \cite[Corollary~42]{HKLMT11} for the more
general class of Everett's recursive games~\cite{Eve57} that each
player has an $\eps$-optimal stationary strategy of doubly-exponential
patience. More precisely, if all probabilities have bit-size at most
$\tau$, then each player has an $\eps$-optimal strategy of patience bounded
by $(\frac{1}{\eps})^{\tau m^{O(N)}}$. For zero-sum concurrent
reachability games the safety player is guaranteed to have an optimal
stationary strategy~\cite{BAMS:Parthasarathy71,PAMS:HPRV76}. Using
this fact one may use directly the results of Hansen~et~al.\ to show
that the safety player has an optimal strategy of patience bounded by
$(\frac{1}{\eps})^{\tau m^{O(N^2)}}$. We shall below refine this
latter upper bound in terms of the number of value classes of the
game. The overall approach in deriving this is the same, namely we use
the general machinery of real algebraic geometry and semi-algebraic
geometry~\cite{BasuPollackRoy2006} to derive our bounds. In order to
do this we derive a formula in the first order theory of the real
numbers that uniquely defines the value of the game, and from the
value of the game we can express the optimal strategies. The improved bound
is obtained by presenting a formula where the number of variables depend
only on the number of value classes rather than the number of
states.

Let below $N$ denote the number of non-absorbing states, and $m \geq 2$
the maximum number of actions in a state for either player. Assume
that all probabilities are rational numbers with numerators and
denominators of bit-size at most $\tau$, where the bit-size of a positive
integer $n$ is given by $\lfloor \lg n \rfloor+1$. We let $K$ denote
the number of value classes. We number the non-absorbing states
$1,\dots,N$ and assume that both players have the actions
$\{1,\dots,m\}$ in each of these states. For a non-negative integer
$z$, define $\bit(z)=\lceil \lg z \rceil$.

Given valuations $v_1,\dots,v_N$ for the non-absorbing states, we
define for each state $k$ a $m \times m$ matrix game $A^k(v)$ letting
entry $(i,j)$ be $s_{ij}^k + \sum_{\ell=1}^N p_{ij}^{k\ell}v_\ell$,
where $p_{ij}^{k\ell} = \delta(k,i,j)(\ell)$ and $s_{ij}^k$ is the
probability of a transition to a state where the reachability player
wins, given actions $i$ and $j$ in state $k$. The \emph{value mapping}
operator $M:\R^N \rightarrow \R^N$ is given by
$M(v)=\left(\val(A^1(v),\dots,\val(A^N(v)))\right)$. Everett showed
that the value vector of his recursive games are given by the unique
\emph{critical vector}, which in turn is defined using the value
mapping. We will instead for concurrent reachability games use the
characterization of the value vector as the coordinate-wise least
fixpoint of the value mapping. The value vector $v$ is thus
characterized by the formula
\begin{equation}
\label{EQ:MinFixedpoint}
M(v)=v \wedge \left( \forall v' : M(v')=v' \Rightarrow v \leq v'\right ) \enspace .
\end{equation}
Similarly to \cite[proof of Theorem 13]{HKLMT11} we obtain the
following statement.
\begin{lemma}
  There is a quantifier free formula with $N$ variables $v$ that
  expresses $M(v)=v$. The formula uses at most $N(m+2)4^m$ different
  polynomials, each of degree at most $m+2$ and having coefficients of
  bit-size at most $2(N+1)(m+2)^2\bit(m)\tau$.
\end{lemma}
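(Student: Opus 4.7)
The plan is to mimic the derivation in the proof of Theorem~13 of Hansen et al.~\cite{HKLMT11}. Since $M(v)=v$ is equivalent to $\bigwedge_{k=1}^{N}\val(A^k(v))=v_k$, it suffices to produce, for each fixed $k$, a quantifier-free formula expressing $\val(A^k(v))=v_k$ with the right complexity, and then take the conjunction over $k$.

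For a fixed $k$, the entries of $A^k(v)$ are the linear polynomials $s_{ij}^k+\sum_{\ell} p_{ij}^{k\ell}v_\ell$ in $v=(v_1,\dots,v_N)$ with coefficients of bit-size at most $\tau$. I would eliminate the implicit existential quantifiers over the optimal mixed strategies by enumerating all candidate support pairs $(I,J)$ with $I,J\subseteq\{1,\dots,m\}$ and $|I|=|J|=\ell\leq m$. For each such pair, the Shapley--Snow / Cramer formulas express the value and the optimal probabilities $x_i$ ($i\in I$), $y_j$ ($j\in J$) as ratios of determinants of $(\ell+1)\times(\ell+1)$ augmented matrices whose entries are degree-$1$ in $v$. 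Clearing denominators while tracking the sign of the common denominator via the standard device $p/q\geq 0\Leftrightarrow(p\geq 0\wedge q>0)\vee(p\leq 0\wedge q<0)$ turns the statement ``$\val(A^k(v))=v_k$ is witnessed by an optimal strategy pair with supports exactly $(I,J)$'' into a Boolean combination of sign conditions on at most $m+2$ distinct polynomials in $v$: one polynomial equation $v_k\cdot D(v)=L(v)$ for the value (with $L,D$ determinants), at most $m$ polynomial inequalities encoding best-response optimality against pure strategies outside the supports, and one sign condition on the denominator $D$.

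The counting follows readily: there are at most $\sum_{\ell=1}^{m}\binom{m}{\ell}^{2}\leq 4^{m}$ support pairs per state, so the total polynomial count is bounded by $N(m+2)4^{m}$. Each polynomial is either a determinant of an $(\ell+1)\times(\ell+1)\leq(m+1)\times(m+1)$ matrix with degree-$1$ entries (degree at most $m+1$) or such a determinant multiplied by $v_k$ (degree at most $m+2$). The coefficient bit-size bound $2(N+1)(m+2)^{2}\bit(m)\tau$ follows by the usual accounting: each coefficient is a signed sum of at most $(m+2)!$ permutation products, each itself obtained by expanding a product of $m+2$ linear forms in the $N$ variables with coefficients of bit-size $\tau$, and charging the factors $(m+2)\tau$ for the product of coefficients, $\log((m+2)!)$ for the permutations, and $\log((N+1)^{m+2})$ for the monomial count fits inside the stated envelope. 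The main obstacle will be the careful accounting: the sign of the Cramer denominator must be handled without doubling the polynomial degree (hence the $p/q\geq 0\Leftrightarrow\ldots$ formulation rather than $pq\geq 0$), and the coefficient growth from $N$-variable polynomial multiplication must be charged to the $(N+1)$ factor rather than lost in a looser estimate.
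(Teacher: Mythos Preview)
Your approach is exactly what the paper does: it simply cites the proof of Theorem~13 in \cite{HKLMT11}, and you are spelling out that argument---enumerate Shapley--Snow bases, express the value and optimal strategies via Cramer's rule, and count the resulting polynomials and their degrees and coefficient sizes.

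One small gap in your accounting: for a fixed support pair $(I,J)$, the Cramer-rule quantities $x_i=N_i/D$ and $y_j=M_j/D$ must be certified to be \emph{nonnegative} (they sum to~1 by construction, but could be negative for a non-optimal basis), and without those sign conditions the disjunct for $(I,J)$ could be satisfied by a $v_k$ that is not the value. Including these adds up to $2\ell$ further determinant sign conditions; together with the best-response inequalities for rows and columns outside $I,J$ (you only listed one side), the value equation, and the sign of $D$, this is still $O(m)$ polynomials per basis, all of degree at most $m+2$, so the shape of the bound is unaffected---but you should revisit the count to land on the stated $(m+2)4^m$ rather than a looser constant.
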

Now, if we instead introduce a variable for each value class, we can
express $M(v)=v$ using only $K$ free variables, by identifying
variables of the same value class. For $w \in \R^K$, let $v(w) \in
\R^N$ denote the vector obtained by letting the coordinates
corresponding to value class $j$ be assigned $w_j$. We thus simply
express $M(v(w))=v(w)$ instead. Combining this with
(\ref{EQ:MinFixedpoint}) we obtain the final formula.
\begin{corollary}
\label{COR:FirstOrderValueClasses}
  There is a quantified formula with $K$ free variables that describes
  whether the vector $v(w)$ is the value vector of $G$. The
  formula has a single block of quantifiers over $K$
  variables. Furthermore the formula uses at most $2N(m+2)4^m+K$ different
  polynomials, each of degree at most $m+2$ and having coefficients of
  bit-size at most $2(N+1)(m+2)^2\bit(m)\tau$.
\end{corollary}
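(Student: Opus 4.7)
The plan is to combine the preceding Lemma with the least-fixpoint characterization~(\ref{EQ:MinFixedpoint}), using the fact that the value vector is by definition constant on its value classes to collapse the internal quantification from $N$ variables to $K$. Write $c(j) \in \{1,\dots,K\}$ for the value class of state $j$, so that $v(w)_j = w_{c(j)}$. Let $\psi(v_1,\dots,v_N)$ denote the quantifier-free formula from the preceding lemma expressing $M(v)=v$. Substituting $v_j := w_{c(j)}$ merely renames variables, leaving the number of polynomials (at most $N(m+2)4^m$), their degrees (at most $m+2$), and their coefficient bit-sizes (at most $2(N+1)(m+2)^2\bit(m)\tau$) unchanged, while the formula now lives in the $K$ variables $w$.

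Next, I would assemble the candidate formula
\[
\Phi(w) \;:=\; \psi(v(w)) \;\wedge\; \forall w' \in \R^K \; \bigl(\psi(v(w')) \;\Rightarrow\; v(w) \leq v(w')\bigr),
\]
which has a single $\forall$-block over $K$ variables and $K$ free variables, and argue that it is equivalent to the statement that $v(w)$ is the value vector of $G$. The ``only if'' direction is immediate from~(\ref{EQ:MinFixedpoint}): the value vector $v^{*}$ is a fixpoint of $M$ and, by its least-fixpoint property, $v^{*} \leq v(w')$ whenever $v(w')$ is any fixpoint. For the ``if'' direction, observe that $v^{*}$ is constant on each value class by the very definition of value classes, so $v^{*} = v(w^{*})$ for some $w^{*} \in \R^K$; then $\psi(v(w^{*}))$ holds and hence $\Phi(w)$ forces $v(w) \leq v(w^{*}) = v^{*}$, while $v(w)$ is itself a fixpoint so $v^{*} \leq v(w)$ by minimality, yielding $v(w) = v^{*}$.

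Counting polynomials: $\psi(v(w))$ contributes at most $N(m+2)4^m$ polynomials in $w$; $\psi(v(w'))$ contributes at most another $N(m+2)4^m$ polynomials in $w'$, disjoint from the previous because the variable sets differ; and the coordinate-wise inequality $v(w) \leq v(w')$ reduces, after discarding duplicated coordinates, to the $K$ linear comparisons $w_k \leq w'_k$. Summing gives the stated bound $2N(m+2)4^m + K$. Degrees and coefficient bit-sizes are inherited from the preceding lemma on the two copies of $\psi$, and the extra linear comparison polynomials contribute only degree one and unit coefficients, so the stated degree bound $m+2$ and bit-size bound $2(N+1)(m+2)^2\bit(m)\tau$ are preserved.

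I expect the only real obstacle to be precisely the ``if'' direction of the equivalence. One might naively worry that quantifying $w'$ only over $\R^K$ asks for the pointwise least fixpoint \emph{among those constant on the prescribed value classes}, which a priori could be strictly larger than the genuine least fixpoint over all of $\R^N$. The resolution is exactly the observation above that the true value vector already lies inside this restricted family, so the restricted and unrestricted least fixpoints coincide; this is the single place where the construction genuinely relies on the value classes being determined by the value vector itself.
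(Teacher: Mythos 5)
Your proposal is correct and follows essentially the same route as the paper: substitute the class-collapsed vector $v(w)$ into the least-fixpoint characterization~(\ref{EQ:MinFixedpoint}), quantify $w'$ over $\R^K$ rather than $v'$ over $\R^N$, and observe that this is sound because the true value vector is itself constant on value classes, so the restricted and unrestricted least fixpoints coincide. Your explicit treatment of that last point, and the polynomial count $2N(m+2)4^m+K$, are exactly what the paper's (terser) derivation relies on.
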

We shall now apply the \emph{quantifier elimination}
\cite[Theorem~14.16]{BasuPollackRoy2006} and \emph{sampling}
\cite[Theorem~13.11]{BasuPollackRoy2006} procedures to the formula of
Corollary~\ref{COR:FirstOrderValueClasses}.

First we use Theorem~14.16 of Basu, Pollack, and
Roy~\cite{BasuPollackRoy2006} obtaining a quantifier free formula with
$K$ variables, expressing that $w(v)$ is the value of $G$. Next
we use Theorem~13.11 of \cite{BasuPollackRoy2006} to obtain a
univariate representation of $w$ such that $v(w)$ is the value vector
of $G$. That is, we obtain univariate real polynomials $f,
g_0,\dots,g_K$, where $f$ and $g_0$ are co-prime, such that
$w=\left(g_1(t)/g_0(t),\dots,g_K(t)/g_0(t)\right)$, where $t$ is a
root of $f$. These polynomial are of degree $m^{O(K^2)}$ and their
coefficients have bit-size $\tau m^{O(K^2)}$. Our next task is to
recover from $w$ an optimal strategy for the safety player. For this
we just need to select optimal strategies for the column player in
each of the matrix games $A^k(v(w))$. Such optimal strategies
correspond to basic feasible solutions of standard linear programs for
computing the value and optimal strategies of matrix games
(cf.~\cite[Lemma~3]{HKLMT11}). This means that there exists
$(m+1)\times(m+1)$ matrices $M^1(w),\dots,M^N(w)$, such that
$(q^k_1(w),\dots,q^k_m(w))$ is an optimal strategy for the column
player in $A^k(v(w))$ where $q^k_i(w) =
\det((M^k(w))_i)/\det(M^k(w))$, where $(M^k(w))_i$ denotes the matrix
obtained from $M^k(w)$ by replacing column $i$ with the $(m+1)$th unit
vector $e_{m+1}$. As the matrices $M^1(w),\dots,M^k(w)$ are obtained
from the matrix games $A^1(v(w)),\dots,A^N(v(w))$, the entries are
degree 1 polynomial in $w$ and having rational coefficients with
numerators and denominators of bit-size at most $\tau$ as well.  Using
a simple bound on determinants
\cite[Proposition~8.12]{BasuPollackRoy2006}, and substituting the
expression $g_j(t)/g_0(t)$ for $w_j$ for each $j$, we obtain a
univariate representation of $(q^k_1(w),\dots,q^k_m(w))$ for each $k$
given by polynomials of degree $m^{O(K^2)}$ and their coefficients
have bit-size $\tau m^{O(K^2)}$.  Substituting the root $t$ using
resultants (cf.~\cite[Lemma~15]{HKLMT11}) we finally obtain the
following result.
\begin{theorem}
\label{THM:OptimalStrategy-RealAlgebraic}
  Let $G$ be a zero-sum concurrent reachability game with $N$
  non-absorbing states, at most $m\geq 2$ actions for each player in
  every non-absorbing state, and where all probabilities are rational
  numbers with numerators and denominators of bit-size at most
  $\tau$. Assume further that $G$ has at most $K$ value
  classes. Then there is an optimal strategy for the safety player
  where each probability is a real algebraic number, defined by a
  polynomial of degree $m^{O(K^2)}$ and maximum coefficient bit-size
  $\tau m^{O(K^2)}$.
\end{theorem}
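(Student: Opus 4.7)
The plan is to extract the univariate algebraic representation of an optimal strategy for the safety player directly from the first-order formula of Corollary~\ref{COR:FirstOrderValueClasses}, carefully propagating the bit-size and degree bounds through each step. Concretely, I would apply the quantifier elimination procedure of \cite[Theorem~14.16]{BasuPollackRoy2006} to the $K$-variable, single-block formula characterizing the value vector. Since the input formula involves only $K$ free variables, at most $2N(m+2)4^m+K$ polynomials, each of degree at most $m+2$ and coefficients of bit-size $O((N+1)(m+2)^2\bit(m)\tau)$, the quantifier-free output formula will involve polynomials whose degrees and coefficient bit-sizes depend polynomially on the input size but exponentially only in $K$ (rather than $N$). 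This is the crucial payoff of using one variable per value class: the blow-up from the elimination procedure is doubly-exponential in the number of quantified variables, which is $K$, not~$N$.

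Next, I would apply the sampling procedure of \cite[Theorem~13.11]{BasuPollackRoy2006} to the resulting quantifier-free formula. This yields univariate real polynomials $f,g_0,\dots,g_K$ with $f$ and $g_0$ coprime, and a root $t$ of $f$ such that the unique value-class vector $w$ satisfies $w_j=g_j(t)/g_0(t)$. By the quantitative versions of those theorems, the polynomials $f,g_0,\dots,g_K$ will have degree $m^{O(K^2)}$ and coefficient bit-size $\tau \cdot m^{O(K^2)}$. This gives a univariate algebraic representation of the value vector $v(w)$ itself.

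To convert the value vector into an optimal strategy for the safety player I would, for each non-absorbing state~$k$, consider a basic feasible solution to the linear program computing an optimal column strategy of the matrix game $A^k(v(w))$. By standard LP theory (as invoked via \cite[Lemma~3]{HKLMT11}), such a basic feasible solution is given by Cramer's rule: there exists an $(m+1)\times(m+1)$ matrix $M^k(w)$ with entries that are degree-$1$ polynomials in $w$ (with rational coefficients of bit-size at most $\tau$) such that the optimal probabilities are $q^k_i(w)=\det((M^k(w))_i)/\det(M^k(w))$. Substituting $w_j=g_j(t)/g_0(t)$ into these quotients and using the Hadamard-style determinant bound of \cite[Proposition~8.12]{BasuPollackRoy2006}, I obtain univariate rational functions in $t$ representing each $q^k_i$, of degree $m^{O(K^2)}$ and coefficient bit-size $\tau m^{O(K^2)}$.

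The final step is to eliminate the parameter $t$ so that each probability $q^k_i$ is expressed as a root of a single univariate polynomial with integer coefficients. For this I would use a resultant computation exactly as in \cite[Lemma~15]{HKLMT11}: taking the resultant with respect to $t$ of the defining polynomial $f(t)$ and the polynomial $\det(M^k(w(t)))_i\cdot y - \det(M^k(w(t)))\cdot\det((M^k(w(t)))_i)$ (or a suitable variant) yields a univariate polynomial in $y$ whose roots include $q^k_i$. Standard bounds on resultants preserve the degree as $m^{O(K^2)}$ and the coefficient bit-size as $\tau m^{O(K^2)}$, yielding the claimed bounds. The main technical obstacle is purely bookkeeping: tracking the degree and bit-size through quantifier elimination, the Cramer substitution, and the resultant step without the exponent on $K$ inflating past $O(K^2)$, but this follows from the existing quantitative statements in \cite{BasuPollackRoy2006} combined with the already-established structure of $A^k(v(w))$ as a matrix of degree-one polynomials in $w$.
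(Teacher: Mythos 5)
Your proposal follows the paper's own proof step for step: quantifier elimination (Theorem~14.16) on the formula of Corollary~\ref{COR:FirstOrderValueClasses}, sampling (Theorem~13.11) to get the univariate representation $w_j=g_j(t)/g_0(t)$, recovery of the column player's optimal strategies as basic feasible solutions via Cramer's rule on the matrices $M^k(w)$ with degree-one entries, the determinant bound of Proposition~8.12 after substitution, and finally elimination of $t$ by resultants as in Lemma~15 of~\cite{HKLMT11}. The approach and the bookkeeping of degree and bit-size bounds are essentially identical to the paper's argument, so the proposal is correct.
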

By a standard root separation bounds (e.g.~\cite[Chapter~6,
equation~(5)]{Yap2000}) we obtain a patience upper bound.
\begin{corollary}
  Let $G$ be as in
  Theorem~\ref{THM:OptimalStrategy-RealAlgebraic}. Then there is an
  optimal strategy for the safety player of patience at most
  $2^{\tau m^{O(K^2)}}$.
\end{corollary}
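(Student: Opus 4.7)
The plan is to deduce the corollary as an essentially immediate consequence of Theorem~\ref{THM:OptimalStrategy-RealAlgebraic} combined with a classical lower bound on the magnitude of nonzero roots of integer polynomials. Recall that the patience of a strategy is the reciprocal of the smallest nonzero probability appearing in its distributions, so it suffices to provide a uniform lower bound on each nonzero algebraic number produced by the theorem.

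First, I would fix an optimal stationary strategy $\sigma_2$ for the safety player as supplied by Theorem~\ref{THM:OptimalStrategy-RealAlgebraic}. Every probability $p$ appearing in $\sigma_2$ is a root of a univariate polynomial $f \in \Z[x]$ (after clearing denominators, which only affects the coefficient size by a polynomial factor and preserves the bit-size bound $\tau m^{O(K^2)}$) of degree $d \leq m^{O(K^2)}$ whose coefficients $a_0,\dots,a_d$ satisfy $\max_i |a_i| \leq 2^{\tau m^{O(K^2)}}$. If $p = 0$ there is nothing to bound, so assume $p \neq 0$; by factoring out the largest power of $x$ dividing $f$ we may further assume $a_0 \neq 0$, and since $a_0 \in \Z \setminus \{0\}$ we have $|a_0| \geq 1$.

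Next I would invoke the Cauchy-type lower bound on roots of the reciprocal polynomial (this is precisely the cited inequality from Yap~\cite{Yap2000}, Chapter~6, equation~(5)): any nonzero root $\alpha$ of $f$ satisfies
\[
|\alpha| \;\geq\; \frac{|a_0|}{|a_0| + \max_{i \geq 1} |a_i|} \;\geq\; \frac{1}{1 + 2^{\tau m^{O(K^2)}}} \;\geq\; 2^{-\tau m^{O(K^2)}}.
\]
Applying this to $p$ gives $p \geq 2^{-\tau m^{O(K^2)}}$, so $1/p \leq 2^{\tau m^{O(K^2)}}$. Taking the maximum over all nonzero probabilities occurring in all the distributions $\sigma_2(s)$ gives the required patience bound.

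There is no real technical obstacle here: the whole work has already been done in Theorem~\ref{THM:OptimalStrategy-RealAlgebraic}, where real algebraic geometry was used to bound the algebraic complexity of the coordinates of an optimal strategy. The present corollary is a purely number-theoretic translation from ``algebraic complexity'' (degree and coefficient size of a defining polynomial) to ``patience'' (inverse magnitude of the root), and this translation is handled cleanly by the standard Cauchy bound; the only mild care needed is the reduction to the case $a_0 \neq 0$ described above.
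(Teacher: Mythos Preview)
Your proposal is correct and follows the same approach as the paper, which simply invokes the standard root separation bound from \cite[Chapter~6, equation~(5)]{Yap2000} on the defining polynomials provided by Theorem~\ref{THM:OptimalStrategy-RealAlgebraic}. Your writeup just makes explicit the Cauchy-type lower bound and the reduction to $a_0\neq 0$, which the paper leaves implicit.
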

In general the probabilities of this optimal strategy will be
irrational numbers. However we may employ the rounding scheme as
explained in Lemma~14 and Theorem~15 of Hansen, Kouck\'y, and
Miltersen~\cite{HKM09} to obtain a rational $\eps$-optimal
strategy. Letting $\eps=2^{-\ell}$ we may round each probability,
except the largest, upwards to
$L=\lg\frac{1}{\eps}+\lg\lg\frac{1}{\eps}+N\tau m^{O(K^2)}$ binary
digits, and then rounding the largest probability down by the total
amount the rest were rounded up. Here we use that by fixing the above
strategy of patience at most $2^{\tau m^{O(K^2)}}$ for the safety
player and an pure strategy for the reachability player one obtains a
Markov chain where each non-zero transition probability is at least
$(2^{\tau m^{O(K^2)}})^{-1}$. We thus have the following.
\begin{corollary}
\label{COR:RoundedEpsOptimal}
  Let $G$ be as in
  Theorem~\ref{THM:OptimalStrategy-RealAlgebraic}. Then there is an
  $\eps$-optimal strategy for the safety player where each
  probability is a rational number with a common denominator of
  magnitude at most $\frac{1}{\eps}\lg{\frac{1}{\eps}}2^{N\tau m^{O(K^2)}}$.
\end{corollary}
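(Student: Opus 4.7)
The plan is to start from the optimal safety-player strategy $\sigma_2^\star$ of patience at most $p := 2^{\tau m^{O(K^2)}}$ provided by the preceding corollary. In general the entries of $\sigma_2^\star(s)$ are real algebraic numbers, so I must rationalize them without losing too much payoff. I will follow the rounding scheme of Hansen, Kouck\'y and Miltersen~\cite{HKM09} (Lemma~14 and Theorem~15) essentially verbatim, with the improved patience parameter that Theorem~\ref{THM:OptimalStrategy-RealAlgebraic} supplies.

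First, set $L = \lg\tfrac{1}{\eps}+\lg\lg\tfrac{1}{\eps}+N\tau m^{O(K^2)}$ and define $\sigma_2$ as follows: at each state $s$, round every probability $\sigma_2^\star(s)(a)$ except the largest one \emph{upward} to a multiple of $2^{-L}$, then decrease the largest probability by the accumulated excess. This yields a valid distribution whose entries share the common denominator $2^L$, and by construction $\norm{\sigma_2(s)-\sigma_2^\star(s)}_\infty \le m\cdot 2^{-L}$. This already gives the denominator bound claimed in the statement.

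Next, I must show that $\sigma_2$ is $\eps$-optimal. By the last paragraph of the preliminaries on matrix games (for player~1 a positional strategy suffices once $\sigma_2$ is stationary), it is enough to bound, for every positional strategy $\sigma_1$ of the reachability player, the change in reachability probability between the Markov chains induced by $(\sigma_1,\sigma_2^\star)$ and $(\sigma_1,\sigma_2)$. In the chain induced by $(\sigma_1,\sigma_2^\star)$, every non-zero transition probability is at least $\delta_{\min}/p \ge 2^{-\tau m^{O(K^2)}}$, because only one row of each matrix game is selected and $\sigma_2^\star$ has patience at most $p$. This is exactly the setting of Lemma~14 of~\cite{HKM09}: a perturbation of magnitude $m\cdot 2^{-L}$ in an $N$-state Markov chain whose minimum non-zero transition is at least $2^{-\tau m^{O(K^2)}}$ changes each hitting probability by at most $N\cdot m\cdot 2^{-L}\cdot 2^{N\tau m^{O(K^2)}}$. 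Choosing $L$ as above makes this quantity at most $\eps$, so $u(s,\sigma_1,\sigma_2)\le u(s,\sigma_1,\sigma_2^\star)+\eps \le \val(s)+\eps$, and $\sigma_2$ is $\eps$-optimal.

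The main obstacle is the second step: one must verify that the HKM perturbation bound still applies when the unperturbed strategy is the real-algebraic strategy of Theorem~\ref{THM:OptimalStrategy-RealAlgebraic} rather than the one produced by their own derivation. The only property of $\sigma_2^\star$ that is used, however, is its patience bound, and Theorem~\ref{THM:OptimalStrategy-RealAlgebraic} gives an even cleaner patience bound depending on $K$ rather than $N$; the rest of the argument of~\cite{HKM09} is structural and transfers without change, yielding the claimed denominator $\tfrac{1}{\eps}\lg\tfrac{1}{\eps}\,2^{N\tau m^{O(K^2)}}$.
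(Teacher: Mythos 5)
Your proposal is correct and follows the same route as the paper: the paper likewise takes the optimal strategy of patience $2^{\tau m^{O(K^2)}}$ from the preceding corollary, rounds all probabilities except the largest upward to $L=\lg\frac{1}{\eps}+\lg\lg\frac{1}{\eps}+N\tau m^{O(K^2)}$ binary digits (compensating with the largest), and invokes Lemma~14 and Theorem~15 of Hansen, Kouck\'y, and Miltersen via the observation that fixing this strategy together with a pure reachability strategy yields a Markov chain with minimum non-zero transition probability at least $(2^{\tau m^{O(K^2)}})^{-1}$. Your extra remark that the HKM argument only uses the patience bound of the unperturbed strategy is exactly the (implicit) justification the paper relies on.
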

We now address the basic decision problem. Let $s$ be a state and let
$\lambda$ be a rational number with numerator and denominator of
bit-size at most $\kappa$, and consider the task of deciding whether
$v_2(s) \geq \lambda$. An equivalent task is to decide whether
$v_2(s)-\lambda \geq 0$. Since $v_2(s)$ is a real algebraic number
defined by a polynomial of degree $m^{O(K^2)}$ and maximum coefficient
bit-size $\tau m^{O(K^2)}$ it follows that $v_2(s)-\lambda$ is a real
algebraic number defined by a polynomial of degree $m^{O(K^2)}$ and
maximum coefficient bit-size $(\kappa+\tau) m^{O(K^2)}$. This can be
seen by subtracting $\lambda$ from the univariate representation of
$v_2(s)$ and substituting for the root $t$ using a resultant. By
standard root separation bounds this means that either is
$v_2(s)-\lambda=0$ or $\abs{v_2(s)-\lambda} > \eta$, for some
$\eta$ of the form $d=2^{-(\kappa+\tau)m^{O(K^2)}}$. Given an
$\eta/2$-optimal strategy $\sigma_2$ for the safety player, by
fixing the strategy $\sigma_2$ we obtain an MDP for player 1, 
where we can find the value $\widetilde{v}_2(s)$ of state
$s$ using linear programming, and the computed estimate
$\widetilde{v}_2(s)$ for $v_2(s)$ is within $\eta/2$ of the true
value. Thus if $\widetilde{v}_2(s) \geq \lambda - \eta/2$ we
conclude that $v_2(s) \geq \lambda$ (and similarly if $\widetilde{v}_2(s)
\geq \lambda + \eta/2$ we conclude that $v_2(s) > \lambda$).
Now, if we fix $K$ to be a constant and consider the promise problem
that $G$ has at most $K$ value classes, then a rational
$\eta/2$-optimal strategy $\sigma_2$ exists with numerators and 
denominators of polynomial bit-size by Corollary~\ref{COR:RoundedEpsOptimal}. 
Now, by simply guessing non-deterministically the strategy $\sigma_2$ and
verifying as above we have the following result.
\begin{theorem}
  For a fixed constant $K$, the promise problem of deciding whether
  $v_1(s) \geq \lambda$ given a zero-sum concurrent stochastic game with 
  at most $K$ value classes is in $\coNP$ if player~1 has reachability
  objective and in $\NP$ if player~1 has safety objective.
\end{theorem}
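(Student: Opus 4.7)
The plan is to combine Corollary~\ref{COR:RoundedEpsOptimal} with the root-separation argument developed in the paragraphs immediately preceding the theorem, instantiated with $K$ treated as a constant. The critical observation is that when $K$ is fixed, every occurrence of $m^{O(K^2)}$ in those bounds becomes polynomial in $m$; consequently both the common denominator of the rational $\eta/2$-optimal safety strategy delivered by Corollary~\ref{COR:RoundedEpsOptimal} and the bit-length of $1/\eta$ (where $\eta = 2^{-(\kappa+\tau)m^{O(K^2)}}$) fit into polynomially many bits. This is exactly what makes the natural guess-and-verify procedure run in nondeterministic polynomial time.

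For the \coNP\ direction (player~1 has reachability, so player~2 is the safety player), I would certify the complement $v_1(s) < \lambda$, which by zero-sumness is $v_2(s) > 1-\lambda$. Nondeterministically guess a rational $\eta/2$-optimal stationary $\sigma_2$ for player~2 of polynomial bit-length; fix $\sigma_2$ to obtain an MDP for player~1; compute the induced value $\widetilde{v}_2(s)$ by linear programming; and accept iff $\widetilde{v}_2(s) \geq 1-\lambda+\eta/2$. Since fixing an $\eta/2$-optimal safety strategy forces $\widetilde{v}_2(s) \in [v_2(s)-\eta/2,\,v_2(s)]$, completeness follows from $v_2(s) \geq 1-\lambda+\eta$ (the root-separation bound applied to $v_2(s) > 1-\lambda$), and soundness follows from $\widetilde{v}_2(s) \leq v_2(s) \leq 1-\lambda < 1-\lambda+\eta/2$ whenever $v_1(s) \geq \lambda$.

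The \NP\ direction (player~1 is the safety player) is entirely symmetric: guess an $\eta/2$-optimal $\sigma_1$ for player~1 of polynomial bit-length, fix it to obtain an MDP for player~2, compute $\widetilde{v}_1(s)$ by linear programming, and accept iff $\widetilde{v}_1(s) \geq \lambda-\eta/2$. The completeness/soundness analysis mirrors the reachability case, using $\widetilde{v}_1(s) \in [v_1(s)-\eta/2,\, v_1(s)]$ together with the root-separation statement $|v_1(s) - \lambda| > \eta$ whenever $v_1(s) \neq \lambda$. I do not anticipate any genuine obstacle, since the heavy lifting (quantifier elimination, univariate representations, separation bounds, and the rounding scheme) has already been carried out for Theorem~\ref{THM:OptimalStrategy-RealAlgebraic} and Corollary~\ref{COR:RoundedEpsOptimal}; the only bookkeeping point that warrants care is confirming that under the promise $K=O(1)$ both $\eta$ and the guessed strategies are simultaneously of polynomial bit-length, which is immediate from their closed-form bounds.
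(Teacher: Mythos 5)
Your proposal is correct and follows essentially the same route as the paper: root separation applied to $v_2(s)-\lambda$ (equivalently $v_1(s)-\lambda$) yields the gap $\eta=2^{-(\kappa+\tau)m^{O(K^2)}}$, which is of polynomial bit-size for constant $K$; one then guesses the polynomially-bounded rational $\eta/2$-optimal safety-player strategy from Corollary~\ref{COR:RoundedEpsOptimal}, fixes it, and checks the resulting MDP value by linear programming against a threshold shifted by $\eta/2$. Your completeness/soundness bookkeeping via $\widetilde{v}_2(s)\in[v_2(s)-\eta/2,\,v_2(s)]$ is a correct (and slightly more explicit) rendering of the paper's verification step.
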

Note that interestingly it does not follow similarly that the promise problem is in $(\coNP\cap \NP)$, because the games are not symmetric.

\begin{remark}[Complexity of approximation for constant value classes]
As a direct consequence we have that for a game $G$ promised to
have at most $K$ value classes, the value of a state can be
approximated in $\FP^{\NP}$. This improves on the $\FNP^{\NP}$ bound
of Frederiksen and Miltersen~\cite{ISAAC:FrederiksenM13} (that holds
in general with no restriction on the number of value classes).
\end{remark}

\section{Non-Zero-sum Concurrent Stochastic Games: Bounds on Patience and Roundedness}
In this section we consider non-zero-sum concurrent stochastic games
where each player has either a reachability or a safety objective. 
We first present a remark on the lower bound in the presence of even a 
single player with reachability objective, and then for the rest of the 
section focus on non-zero-sum games where all players have safety objectives.

\begin{remark}\label{rem:multiple}
In non-zero-sum concurrent stochastic games, with at least two players, even if
there is one player with reachability objectives, then at least doubly-exponential
patience is required for $\eps$-Nash equilibrium strategies.
We have the property if $k=2$ and one player is a reachability player and the other is a safety player, 
from Section~\ref{sec:eps_patience}. 
It is also easy to see that Lemma~\ref{lem:partial_mixed} together with Lemma~\ref{lem:eps_opt_partial_mixed} 
imply that if player~1 is identified with the objective $(\reach,\{\top\})$ and player~2 is identified with the 
objective $(\reach,\{\bot\})$ and they are playing the Purgatory Duel, then  each strategy profile $\sigma$, 
that forms a $\eps$-Nash equilibrium, for any $0<\eps<\frac{1}{3}$, in the Purgatory Duel, has patience $2^{m^{\Omega(n)}}$. 
This is because player~2 has a harder objective (a subset of the plays satisfies it) than in Section~\ref{sec:eps_patience}, 
but can still ensure the same payoff (by using an optimal strategy for player~2 in the concurrent reachability variant, 
which ensures that $\bot$ is reached with probability at least $\frac{1}{2}$).
In this case, we say that a strategy is optimal (resp., $\eps$-optimal) for a player, if it is optimal (resp., $\eps$-optimal) 
for the corresponding player in the concurrent reachability version. 
It is clear that only if both strategies are optimal (resp., $\eps$-optimal), then the strategies forms a Nash 
equilibrium (resp., $\eps$-Nash equilibrium).
Thus the doubly-exponential lower bound follows even for non-zero-sum games with two reachability players.
The key idea to extend to more players, of which at least one is a reachability player, is as follows: 
Consider some reachability player~$i$. The game for which the lower bound holds can be described as follows. First player~$i$ picks another player~$j$ and they then proceed to play the Purgatory Duel with parameters $n,m$ against each other. This can be captured by a game with $k(2n+1)+3$ states, where each matrix has size at most $\max(m,k)$. Each player must then use doubly-expoential patience in every strategy profile that forms an $\eps$-Nash equilibrium, for sufficently small $\eps>0$. 
First consider a player $j$ that is different from $i$, and a strategy for player~$j$ with low patience.
It follows that player~$i$ would then simply play against player~$j$ and win with good probability. 
Second, consider a strategy for player~$i$ with low patience and there are two cases.
Either player~$i$ gets a payoff close to $\frac{1}{2}$ or not. If he gets a payoff close to $\frac{1}{2}$, 
then the player he is most likely to play against can deviate to an optimal strategy and increase his payoff 
by an amount close to $\frac{1}{2k}$, which player~$i$ loses. 
On the other hand, if player~$i$ gets a payoff far from $\frac{1}{2}$, then he can deviate to 
an optimal strategy and then he gets payoff $\frac{1}{2}$.
\end{remark}

The rest of the section is devoted to non-zero-sum concurrent stochastic games with 
safety objectives for all players, and first we establish an exponential upper bound 
on patience and then an exponential lower bound for $\eps$-Nash equilibrium strategies,
for $\eps>0$.

\subsection{Exponential upper bound on roundedness}
In this section we consider non-zero-sum concurrent safety games, with $k\geq 2$ players, 
and such games are also called stay-in-a-set games, by~\cite{SS02}.
We will argue that, for all $0<\eps<\frac{1}{4}$, in any such game, there exists a strategy profile 
$\sigma$ that forms an $\eps$-Nash equilibrium and have roundedness
at most 
\[
\frac{-32\cdot k^2 \cdot \ln(\eps)\cdot n\cdot (\delta_{\min})^{-n}\cdot m}{\eps}\enspace .
\] 
Note that the roundedness is only exponential, as compared to the doubly-exponential patience when there is at least 
one reachability player (Remark~\ref{rem:multiple}).
Note that the bound is polynomial in $m$ and $k$; and also polynomial in $n$ if 
$\delta_{\min}=1$.

\smallskip\noindent{\bf Players already lost, and all winners.}
For a prefix of a play $P_s^{\ell'}$, for a starting state $s$, play $P_s$ and length $\ell'$, 
let $\widehat{L}(P_s^{\ell'})$ be the set of players that have not lost already in $P_s^{\ell'}$ 
(note that for each $i$, player~$i$ has lost in a play prefix if a state not in $S^i$ has been visited in the prefix). 
Let $P_s^{\ell'}$ be some prefix of a play and we define $W(P_s^{\ell'})$ as the event that each 
player in $\widehat{L}(P_s^{\ell'})$ wins with probability~1.

\smallskip\noindent{\bf Player-stationary strategies.} 
As shown by~\cite{SS02}, there exists a strategy profile $\sigma=(\sigma_i)_i$ that forms a Nash equilibrium. 
They show that the strategy $\sigma_i$, for any player~$i$, in the witness Nash equilibrium strategy profile has the following properties:
For each set of players $\Pi$ and state $s$, there exists a probability distribution $\widehat{\sigma}_i(\Pi,s)$, such that for each prefix of a play $P_s^{\ell'}$, play $P_s$ and length $\ell'$, if $P_s^{\ell'}$ ends in $s'$, we have that $\sigma_i(P_s^{\ell'})=\widehat{\sigma}_i(\widehat{L}(P_s^{\ell'}),s')$ (i.e., the strategy only depends on the players who have not lost yet and the current state). 
Also, there exists some positional strategy $\sigma_i'$, such that $\widehat{\sigma}_i(\Pi,s)=\sigma_i'(s)$, for all $i\not\in \Pi$ (i.e., players who have lost already play some fixed positional strategy). This allows them to only consider the sub-game $G^{\Pi}$, which is the game in which each player~$i$ not in $\Pi$ plays $\sigma_i'$.
Also, if there is a strategy profile which ensures that each player in $\Pi$ wins with probability~1 if the play starts in $s$ of $G^{\Pi}$, then 
the probability distribution $\widehat{\sigma}_i(\Pi,s)$ is pure\footnote{it is not explicitly mentioned in~\cite{SS02} that the distributions are pure, but it follows from the fact that if all players can ensure their objectives with probability~1, then there exists a positional strategy profile ensuring so, by just considering an MDP (with all players together) with a conjunction of safety objectives} and it ensures that the players in $\Pi$ wins with probability~1. 
We call strategies with these properties {\em player-stationary strategies}.

\smallskip\noindent{\bf The real number $\eps$ and the length $\ell$.} 
In the remainder of this section, fix $0<\eps<\frac{1}{4}$ and fix the length $\ell$, 
such that  
\[
\ell=-n\cdot k\cdot \ln(\eps/(4k))\cdot (\delta_{\min})^{-n}\enspace .
\]
We will, in Lemma~\ref{lem:short_dist}, argue that any player-stationary strategy is such that with probability $1-\eps$ 
no player loses after $\ell$ steps.
Also several lemmas in this section will use $\ell$ and $\eps$.

\smallskip\noindent{\bf The event $E(P_s^{\ell'})$.} 
Given a play $P_s$, starting in state $s$ for some $s$ and any $\ell'$, let $E(P_s^{\ell'})$ be the event that either the event  $(\widehat{L}(P_s^{\ell'})\subsetneq \widehat{L}(P_s^{\ell'-1}))$ (i.e., some player lost at the $\ell'$-th step) or the event $W(P_s^{\ell'})$ (i.e., the remaining players win with probability~1) happens.
In \cite[2.1 Lemma]{SS02} they show\footnote{they do not explicitly show that the constant is $1-(\delta_{\min})^{n}$, but it follows easily from an inspection of the proof}:
\begin{lemma}\label{lem:event_happens}
Fix a player-stationary strategy profile $\sigma$.
Let $T\geq 0$ denote a round (or a step of plays).
Let $Y^{T,s}$ be the set of plays, where for all plays $P_s$ in $Y^{T,s}$, 
either the remaining players win with probability~1 in round $T$ 
(i.e., the event $W(P_s^{T})$ happens) 
or some player loses in round $T$ (i.e., the event 
$\widehat{L}(P_s^{T})\subsetneq \widehat{L}(P_s^{T-1})$ happens). 
For a constant $c$ and length $\ell'$, let 
$y_{c,\ell'}=\Pr_{\sigma}[\exists T:\ell'<T\leq \ell'+cn\ \wedge \ P_s\in Y^{T,s}]$ denote
the probability that event $Y^{T,s}$ happens for some $T$ between $\ell'$ and $\ell'+cn$. 
Then, for all constants $c$ and length $\ell'$, we have that 
\[
y_{c,\ell'} \geq 1-(1-(\delta_{\min})^{n})^c\enspace .\]
\end{lemma}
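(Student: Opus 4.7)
The plan is to reduce the lemma to a \emph{one-window claim}: for any play prefix $P_s^{T_0}$ ending in a state $s_{T_0}$ with surviving set $\Pi=\widehat{L}(P_s^{T_0})$ for which $W(P_s^{T_0})$ does not hold, the conditional probability that $P_s \in Y^{T,s}$ for some $T$ with $T_0 < T \le T_0+n$ is at least $(\delta_{\min})^n$. The full lemma then follows by chaining this claim across $c$ consecutive windows of length $n$.

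To establish the one-window claim I would pass to the Markov chain $M_\Pi$ obtained from the sub-game $G^{\Pi}$ by fixing $\widehat{\sigma}_i(\Pi,\cdot)$ for every player $i$, and let $U_\Pi\subseteq S$ denote the set of states from which every player in $\Pi$ wins with probability~$1$ in $M_\Pi$. Player-stationarity ensures that $W(P_s^T)$ holds (with surviving set $\Pi$ and current state $s_T$) precisely when $s_T\in U_\Pi$, and also that $U_\Pi$ is forward closed in $M_\Pi$, since any successor of a state in $U_\Pi$ reached with positive probability must itself witness almost-sure safety. Because $s_{T_0}\notin U_\Pi$ by hypothesis, some bad state for $\Pi$ (one outside $\bigcap_{i\in\Pi}S^i$) must be reachable from $s_{T_0}$ in $M_\Pi$; otherwise the reachable subgraph would consist entirely of states which are neither bad nor in $U_\Pi$, and then $s_{T_0}$ would avoid bad states with probability~$1$ and hence belong to $U_\Pi$, a contradiction. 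A shortest path in $M_\Pi$ from $s_{T_0}$ to the union of bad states and $U_\Pi$ therefore exists and contains at most $n$ edges; each transition along it carries probability at least $\delta_{\min}$, so the probability of following this particular path is at least $(\delta_{\min})^n$. Arriving at a bad state at time $T$ forces $\widehat{L}(P_s^T)\subsetneq\widehat{L}(P_s^{T-1})$, and arriving in $U_\Pi$ at time $T$ makes $W(P_s^T)$ hold; in either case $P_s\in Y^{T,s}$ for some $T_0<T\le T_0+n$, as required.

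For the iteration step, let $B_j$ denote the event that $P_s\in Y^{T,s}$ for some $T\in(\ell',\ell'+jn]$, so that $y_{c,\ell'}=\Pr_{\sigma}[B_c]$; I would prove $\Pr_{\sigma}[B_j^c]\le(1-(\delta_{\min})^n)^j$ by induction on $j$. On the event $B_{j-1}^c$, no player has lost in $(\ell',\ell'+(j-1)n]$ and $W$ has not held at any time in this window, so at time $T_0=\ell'+(j-1)n$ the surviving set equals its value at $\ell'$ and $W(P_s^{T_0})$ still does not hold; the one-window claim then yields $\Pr_{\sigma}[B_j\mid B_{j-1}^c]\ge(\delta_{\min})^n$, and hence $\Pr_{\sigma}[B_j^c]\le(1-(\delta_{\min})^n)\cdot\Pr_{\sigma}[B_{j-1}^c]\le(1-(\delta_{\min})^n)^j$, closing the induction. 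The main obstacle is the shortest-path argument within the one-window claim; its key enabling feature is that player-stationarity promotes the global ``almost-sure winning'' property of the strategy to a purely local, state-based condition (membership in $U_\Pi$), which is what allows the $(\delta_{\min})^n$ lower bound to apply uniformly regardless of the current state.
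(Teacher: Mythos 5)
Your proposal is correct. Note that the paper does not actually prove this lemma: it is imported from \cite[2.1~Lemma]{SS02}, with only a footnote asserting that the explicit constant $1-(\delta_{\min})^n$ "follows easily from an inspection of the proof." Your argument supplies exactly the missing content, and it is the standard one: player-stationarity turns the global property $W$ into membership of the current state in the almost-sure winning set $U_\Pi$ of the Markov chain $M_\Pi$, a shortest path from a state outside $U_\Pi$ to (bad states)$\,\cup\,U_\Pi$ has at most $n-1$ edges each of probability at least $\delta_{\min}$ and stays inside the surviving-set-$\Pi$ regime until it terminates, and the per-window bound $(\delta_{\min})^n$ then chains multiplicatively over $c$ windows. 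Two cosmetic points you should tighten: (i) in the reachability argument, if no bad state were reachable the reachable subgraph could still contain states of $U_\Pi$ — the correct conclusion is simply that the play from $s_{T_0}$ avoids bad states almost surely and hence $s_{T_0}\in U_\Pi$, which is what you use anyway; and (ii) in the first window of the induction ($j=1$, $T_0=\ell'$) the event $B_0^c$ does not guarantee that $W(P_s^{\ell'})$ fails, but on the complementary event $W(P_s^{\ell'+1})$ holds almost surely by forward-closedness of $U_\Pi$, so $B_1$ occurs with probability~$1$ there and the bound is unaffected. Neither point is a genuine gap.
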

Note that $T$ above depends on the play $P_s$.
It is straightforward that players can lose at most $k$ times in any play $P_s$, 
simply because there are at most $k$ players, and if the remaining players win with 
probability~1 in round $T$, 
then they also win with probability~1 in round $T+1$, by construction of $\sigma$.

\smallskip\noindent{\bf Proof overview.} 
Our proof will proceed as follows. Consider the game, while the players play some player-stationary strategy profile that forms a Nash equilibria. 
First, we show that it is unlikely (low-probability event) that the players do not play positional 
(like they do if the event $W(P_s^{\ell'})$ has happened) after some exponential number of steps.
Second, we show that if we change each of the probabilities used by an exponentially 
small amount as compared to the Nash equilibria, then it is unlikely that 
that there will be a large difference in the first exponentially many steps. 
This allows us to round the probabilities to exponentially small probabilities while the players only 
lose little.

\begin{lemma}\label{lem:short_dist}
 Fix some player-stationary strategy profile $\sigma$. 
Consider the set $P$ of plays $P_s$, under $\sigma$, such that $W(P_s^{\ell})$ does not happen. 
Then, the probability $\Pr_{\sigma}[P]$ is less than $\eps/4$.
\end{lemma}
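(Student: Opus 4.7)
The plan is to partition the horizon $\ell$ into $k$ successive windows of length $cn$, where $c = -\ln(\eps/(4k))(\delta_{\min})^{-n}$, so that $\ell = kcn$ matches the definition in the statement. In each window I will apply Lemma~\ref{lem:event_happens} to the event that either $W$ occurs or a new player loses, and combine the windows using the fact that players can be eliminated at most $k$ times in total.

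First I would record a monotonicity property of $W$: once $W(P_s^T)$ holds at some round $T$, the surviving players in $\widehat{L}(P_s^T)$ are, by construction of the player-stationary profile in~\cite{SS02}, following a strategy that guarantees their objectives with probability $1$, and since the set of surviving players only shrinks, $W(P_s^{T'})$ continues to hold for every $T' \geq T$. Hence the event ``$W(P_s^\ell)$ does not hold'' is exactly the event ``$W(P_s^T)$ holds for no $T \in \{1,\dots,\ell\}$''.

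Next, for each $i \in \{1,\dots,k\}$ let $B_i$ denote the event that throughout the $i$-th window (rounds $T$ with $(i-1)cn < T \leq icn$) neither does $W$ hold at any round $T$ nor does any player lose at any round $T$. Applying Lemma~\ref{lem:event_happens} with $\ell' = (i-1)cn$ gives, using $1-x \leq e^{-x}$,
\[
\Pr_{\sigma}[B_i] \;\leq\; (1 - (\delta_{\min})^n)^c \;\leq\; e^{-(\delta_{\min})^n\, c} \;=\; \eps/(4k),
\]
by our choice of $c$.

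Finally I would argue that if $W(P_s^\ell)$ fails then at least one $B_i$ must occur. If instead $B_1,\dots,B_k$ all fail, then in every window at least one additional player loses (because the other alternative in Lemma~\ref{lem:event_happens}, namely $W$ holding, is ruled out by the monotonicity step), so after $k$ windows all $k$ players have lost, making $\widehat{L}(P_s^\ell) = \emptyset$ and $W(P_s^\ell)$ vacuously true---a contradiction. A union bound over the $B_i$ then yields $\Pr_{\sigma}[P] \leq k \cdot \eps/(4k) = \eps/4$. The main subtlety is the monotonicity of $W$ under the player-stationary profile: it relies on the precise form of the strategies in~\cite{SS02} (players who already know they can win almost surely keep playing a pure positional strategy that secures this), and once this is in hand the rest is a clean windowing and union-bound argument.
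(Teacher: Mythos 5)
Your proof is correct and follows essentially the same route as the paper's: split the first $\ell=ckn$ rounds into $k$ epochs of length $cn$, bound the failure probability of each epoch by $\eps/(4k)$ via Lemma~\ref{lem:event_happens} and $1-x\leq e^{-x}$, take a union bound, and observe that $k$ occurrences of the event (each being either a strict loss of a player or $W$ holding, with $W$ monotone) force $W(P_s^{\ell})$ to hold. The only cosmetic difference is that the paper treats $\delta_{\min}=1$ as a separate case, whereas your exponential estimate handles it uniformly.
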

\begin{proof}
Fix $0<\eps<\frac{1}{2}$ and a player-stationary strategy profile $\sigma$. 
Let $c=-\ln(\eps/(4k))\cdot (\delta_{\min})^{-n}>1$. We will argue that the event $E(P_s^{\ell'})$ happens at least $k$ times with probability at least $1-\eps/4$ over $c\cdot n\cdot k=\ell$ steps.

We consider two cases, either $\delta_{\min}=1$ or $0<\delta_{\min}<1$.
If $\delta_{\min}=1$, the event $\exists 1\leq T\leq  n:E(P_s^{\ell'+T})$ always happens (otherwise, in case it did not in some play, then a deterministic cycle 
satisfying the safety objectives of all players who have not lost yet is executed, and then the players could win by playing whatever they did the last time they were in a given state). If $0<\delta_{\min}<1$, we see that $c\geq c'=\frac{\ln(\eps/(4k))}{\ln(1-(\delta_{\min})^{-n})}$, since $1+x\leq e^x$ and that $\exists 1\leq T\leq  c'\cdot n:E(P_s^{\ell'+T})$ happens with probability at least $1-\eps/(4k)$ by Lemma~\ref{lem:event_happens}. In either case, we have that the event $\exists 1\leq T\leq  c\cdot n:E(P_s^{\ell'+T})$ happens with probability at least $1-\eps/(4k)$.

Next, split the plays up in epochs of length $c\cdot n$ each, and we get that the event $E(P_s^{T})$ happens at least once for $T$ ranging over the steps of an epoch with probability at least $1-\eps/(4k)$ and hence happens at least once in each of the first $k$ epochs with probability at least $1-\eps/4$ using union bound. At that point the remaining players win with probability~1. The first $k$ epochs have length $c\cdot k\cdot n=\ell$ and the lemma follows.
\end{proof}

We use the above lemma to show that any strategy profile 
close to a Nash equilibrium ensures payoffs close to that equilibrium. 
To do so, we use coupling (similar to~\cite{CI14}).

\smallskip\noindent{\bf Variation distance.} 
The {\em variation distance} is a measure of the similarity between two distributions.
Given a finite set $Z$, and two distributions $d_1$ and $d_2$ over $Z$, 
the variation distance of the distributions is 
\[
\var(d_1,d_2)=\frac{1}{2}\cdot \sum_{z\in Z} |d_1(z)-d_2(z)| \enspace .
\]
We will extend the notion of variation distances to strategies as follows:
Given two strategies $\sigma_i$ and $\sigma_i'$ for player~$i$ the variation distance between the strategies is \[
\var(\sigma_i,\sigma_i')=\sup_{P_s^\ell}\var(\sigma_i(P_s^\ell),\sigma_i'(P_s^\ell)) \enspace ;
\]
i.e., it is the supremum over the variation distance of the distributions used by the strategies for finite-prefixes of plays.

\smallskip\noindent{\bf Coupling and coupling lemma.}
Given a pair of distributions, a coupling is a probability distribution over the joint set of possible outcomes. 
Let $Z$ be a finite set. For distributions $d_1$ and $d_2$ over the finite set $Z$, 
a {\em coupling} $\omega$ is a distribution over $Z\times Z$, such that for all $z\in Z$ we have 
$\sum_{z'\in Z} \omega(z,z')=d_1(z)$ and also for all $z'\in Z$ we have 
$\sum_{z\in Z} \omega(z,z')=d_2(z')$. 
One of the most important properties of coupling is the coupling lemma~\cite{aldous} of which we only mention and use the second part:
\begin{itemize}
\item {\bf (Coupling lemma).}
For a pair of distributions 
$d_1$ and $d_2$, there exists a coupling $\omega$ of $d_1$ and $d_2$, such that for a random variable 
$(X,Y)$ from the distribution $\omega$, we have that $\var(d_1,d_2)=\Pr[X\neq Y]$.
\end{itemize}

\smallskip\noindent{\bf Smaller support.}
Fix a pair of strategies $\sigma_i$ and $\sigma_i'$ for player~$i$ for some $i$. 
We say that $\sigma_i'$ has {\em smaller support} than $\sigma_i$, if for all $P_s^\ell$ we have that \[
\supp(\sigma_i'(P_s^\ell))\subseteq \supp(\sigma_i(P_s^\ell))\enspace .\]

\begin{lemma}\label{lem:close_outcome}
Let $\sigma=(\sigma_i)_i$ and $\sigma'=(\sigma_i')_i$ be player-stationary strategy profiles, such that 
\[\var(\sigma,\sigma')\leq \frac{\eps}{\ell\cdot k\cdot 4}\enspace ,\] and such that $\sigma_i'$ has smaller support than $\sigma_i$, for all $i$.
 Then $\sigma'$ is such that \[u(G,s,\sigma',i)\in [u(G,s,\sigma,i)-\eps/2,u(G,s,\sigma,i)+\eps/2]\] for each player~$i$ and state~$s$.
\end{lemma}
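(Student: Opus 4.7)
The plan is to construct a coupling of the two stochastic processes induced by $\sigma$ and $\sigma'$ starting from $s$, so that on an event of probability at least $1-\eps/2$ the two coupled plays produce the same infinite trajectory, hence the same payoff vector. Since payoffs lie in $[0,1]$, this immediately gives $\abs{u(G,s,\sigma,i)-u(G,s,\sigma',i)} \leq \eps/2$ for every player~$i$, which is the claim.

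The coupling is built round by round. Suppose at time~$T$ the two coupled processes have produced the same prefix $h$. For each player $j$ independently, invoke the Coupling Lemma on the pair $(\sigma_j(h),\sigma_j'(h))$ to obtain a joint draw of the two actions that coincide with probability at least $1-\var(\sigma_j(h),\sigma_j'(h)) \geq 1-\var(\sigma,\sigma')$. Conditional on all $k$ players' actions agreeing, we couple the two transitions to pick the same successor state (they are driven by identical distributions, since the transition function is the same). By a union bound over $k$ players and $\ell$ rounds, the probability that the two trajectories disagree during any of the first $\ell$ rounds is at most $\ell\cdot k\cdot \var(\sigma,\sigma') \leq \eps/4$.

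By Lemma~\ref{lem:short_dist}, the event $W(P_s^\ell)$ under $\sigma$ holds with probability at least $1-\eps/4$. Combining with the coupling, with probability at least $1-\eps/2$ the two processes agree on their first $\ell$ rounds and the common prefix satisfies $W(P_s^\ell)$. On this good event $E$, I then argue that the two plays in fact coincide forever. Indeed, when $W(P_s^\ell)$ holds, the defining property of player-stationary strategies forces $\widehat{\sigma}_j(\widehat{L}(P_s^\ell),\cdot)$ to be pure for each remaining player~$j$, while players already lost play a fixed positional (hence pure) strategy in $\sigma$. The smaller-support hypothesis $\supp(\sigma_j'(\cdot))\subseteq \supp(\sigma_j(\cdot))$ then implies that every distribution used by $\sigma_j'$ from round~$\ell$ onward, being contained in a singleton, must equal the corresponding pure distribution used by $\sigma_j$. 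Thus on $E$ the two profiles play identically from round~$\ell$ on, and so every remaining player's safety objective is satisfied with probability~$1$ under $\sigma'$ exactly as under $\sigma$, while any player already lost receives payoff~$0$ under both.

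The main obstacle is the last point, i.e., making sure that the post-$\ell$ continuation is truly shared: this requires the smaller-support hypothesis not only for remaining players (where it collapses the already-pure distribution) but also for already-lost players, so that the associated fixed positional strategies in $\sigma$ and $\sigma'$ must coincide at every reachable state. Once this is established, the union of the two bad events has probability at most $\eps/4+\eps/4=\eps/2$, and on its complement the payoffs are literally equal, which completes the argument.
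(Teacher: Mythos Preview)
Your proposal is correct and follows essentially the same approach as the paper: a round-by-round coupling of the action choices, a union bound over the $k$ players and the first $\ell$ rounds to get agreement with probability at least $1-\eps/4$, combined with Lemma~\ref{lem:short_dist} to get $W(P_s^\ell)$ with probability at least $1-\eps/4$, and then the smaller-support hypothesis to force the two profiles to play identically (and hence purely) from round~$\ell$ onward. Your treatment of the post-$\ell$ continuation is in fact more explicit than the paper's one-line appeal to ``because of the smaller support,'' but the underlying argument is the same.
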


\begin{proof}
Fix $\sigma$ and $\sigma'$ according to the lemma statement. 
For any prefix of a play $P_s^{\ell'}$, for any state $s$ and length $\ell'$ and player~$i$, we have that $\var(\sigma_i(P_s^{\ell'}),\sigma_i'(P_s^{\ell'}))\leq \frac{\eps}{\ell\cdot k\cdot 4}$ and thus, we can create a coupling $\omega=(X_i^{P_s^{\ell'}},Y_i^{P_s^{\ell'}})$ between the two distributions $\sigma_i(P_s^{\ell'})$ and $\sigma_i'(P_s^{\ell'})$, i.e.,  $X_i^{P_s^{\ell'}}\sim \sigma_i(P_s^{\ell'})$ and $Y_i^{P_s^{\ell'}}\sim\sigma_i'(P_s^{\ell'})$ is such that $\Pr[X_i^{P_s^{\ell'}}\neq Y_i^{P_s^{\ell'}}]\leq \frac{\eps}{\ell\cdot k\cdot 4}$. Then, consider some state $s$ and consider a play $P_s$, picked using the random variables $X_i^{P_s^{\ell'}}$, and a play $Q_s$, picked using the random variables $Y_i^{P_s^{\ell'}}$ (where, if the players uses the same action in $P_s^{\ell'}$ and $Q_s^{\ell'}$, then the next state is also the same, using an implicit coupling).  
Then according to Lemma~\ref{lem:short_dist}, the probability that $W(P_s^\ell)$ occurs is at least $1-\eps/4$. In that case, we are interested in the probability that $Q_s=P_s$. Observe that we just need to ensure that $P_s^\ell$ and $Q_s^\ell$ are the same, since at that point the players play according to the same positional strategy, because of the smaller support. For each $\ell''\leq \ell$, if the first $\ell''$ steps match, then the next step match with probability at least $1-\frac{\eps}{\ell\cdot k\cdot 4}\cdot k$, since each of the $k$ players has a probability of $\frac{\eps}{\ell\cdot k\cdot 2}$ to differ in the two plays. Hence, all $\ell$ steps match with probability at least $1-\frac{\eps}{\ell\cdot k\cdot 4}\cdot \ell \cdot k=1-\eps/4$. Hence, with probability at least $1-\eps/2$ we have that $P_s$ equals $Q_s$ and thus, especially, the payoff for each player must be the same in that case. But observe that $P_s$ is distributed like plays under $\sigma$ and $Q_s$ is distributed like plays under $\sigma'$ and the statement follows.\end{proof}

We will next show that we only need to consider deviations to player-stationary strategies for the purpose of player-stationary equilibria.

\begin{lemma}\label{lem:player-stationary_deviation}
For all player-stationary strategy profiles $\sigma$ and each player~$i$, there exists a pure player-stationary strategy $\sigma_i'$ for player~$i$ maximizing $u(G,s,\sigma[\sigma_i'],i)$.
\end{lemma}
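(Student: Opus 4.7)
The plan is to reduce the best-response problem for player~$i$ against the fixed player-stationary profile $\sigma_{-i}$ to a standard finite-state MDP with a safety objective, and then invoke the classical fact that such MDPs admit optimal pure stationary strategies.

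First I would observe that, since every $\sigma_j$ with $j\neq i$ is player-stationary, the distribution $\sigma_j(P_s^{\ell'})$ depends only on the pair $(s',\widehat{L}(P_s^{\ell'}))$, where $s'$ is the state in which the prefix ends. I therefore augment the state space of $G$ by pairs $(s',\Pi)$, where $\Pi\subseteq\{1,\dots,k\}$ is the current set of players who have not yet lost. Given a choice of action $a_i\in A^i_{s'}$ by player~$i$ and the fixed distributions $\sigma_j(\Pi,s')$ for $j\neq i$, the transition function of $G$ induces a well-defined probability distribution over the next augmented state $(s'',\Pi')$, where $\Pi'$ is obtained from $\Pi$ by discarding the players $j\in\Pi$ for which $s''\notin S^j$. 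Call the resulting finite-state MDP $H_i$. Player~$i$'s objective of staying in $S^i$ forever translates into the safety objective in $H_i$ of never having $i$ leave the second component of the augmented state.

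Next I would invoke the classical result that in a finite-state MDP with a safety objective the supremum of the winning probability over all (even history-dependent, randomised) strategies is attained by a pure stationary strategy (e.g., by greedy action selection from the solution of the usual Bellman equations on the safe subgraph). Such a pure stationary $\tau$ in $H_i$ translates back to a pure player-stationary $\sigma_i'$ in $G$ by setting $\widehat{\sigma}'_i(\Pi,s'):=\tau(s',\Pi)$ whenever $i\in\Pi$, and to any pure choice otherwise (which is immaterial, since player~$i$ has already lost in that case).

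Finally, to confirm that $\sigma_i'$ maximises $u(G,s,\sigma[\cdot],i)$ against arbitrary, not necessarily player-stationary, deviations, I would argue that every strategy $\sigma_i''$ in $G$ induces a (history-dependent, randomised) strategy in $H_i$ with the same winning probability: the augmented state $(s',\Pi)$ is a deterministic function of any play prefix, and $\sigma_{-i}$ depends on the prefix only through that augmented state. Hence $u(G,s,\sigma[\sigma_i''],i)$ is bounded above by the value of $H_i$, which is realised by $\sigma_i'$. I expect the only mild obstacle to be the correct identification of the augmented state space as a Markovian summary of the history; once this reduction is in place, the statement follows immediately from the standard fact about safety MDPs.
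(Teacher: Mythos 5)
Your argument is correct and follows essentially the same route as the paper's proof: fix the player-stationary strategies of the other players, pass to the induced finite MDP over the augmented states (current state, set of players who have not yet lost), and invoke positional optimality of best responses in finite MDPs. The only cosmetic difference is that you keep player~$i$'s objective as a safety objective in the augmented MDP, whereas the paper rephrases it as reachability of the region from which player~$i$ can no longer lose; both classical facts yield the desired pure player-stationary best response.
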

\begin{proof}
Observe first that it does not matter what player~$i$ does if he has already lost, and we can consider him to play some fixed positional strategy in that case. Also, when the remaining players play according to $\sigma$, we can view the game as being an MDP, in the games $G^\Pi$. The objective of player~$i$ is then to reach a sub-game of $G^\Pi$ and a state in that sub-game, from which he cannot lose. But it is well-known that such reachability objectives have positional optimal strategies in MDPs. Hence, this strategy forms a pure player-stationary strategy in the original game.
\end{proof}

We will use Lemma~3 from \cite{CI14}. The proof only appears in \cite{CI14Full}, where the lemma is Lemma~4.

\begin{lemma}\label{lem:round}{\bf\em (Lemma 3, \cite{CI14}).} Let $Z$ be a set of size $\ell$.
Let $d_1$ be some distribution over $Z$ and let $q\geq\ell$ be some integer.
Then there exists some distribution $d_2$, such that for each $z\in Z$, there exists an integer $p$ such that $d_2(z)=\frac{p}{q}$ and such that $|d_1(z)-d_2(z)|<\frac{1}{q}$.
\end{lemma}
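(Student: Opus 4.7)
The plan is to do the natural greedy rounding: split each value $d_1(z)\cdot q$ into its integer and fractional parts, and then decide how to distribute the leftover integer mass so that the resulting numerators sum to exactly $q$.

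First, for each $z\in Z$ write $d_1(z)\cdot q = a_z + f_z$ with $a_z=\lfloor d_1(z)\cdot q\rfloor \in \Z_{\geq 0}$ and $f_z \in [0,1)$. Since $\sum_{z}d_1(z)=1$ we have $\sum_z a_z + \sum_z f_z = q$, so the quantity $K := \sum_z f_z = q-\sum_z a_z$ is a non-negative integer. The plan is to pick some subset $T\subseteq Z$ of size $K$ and set $d_2(z)=(a_z+1)/q$ for $z\in T$ and $d_2(z)=a_z/q$ for $z\notin T$. Then $\sum_z d_2(z)\cdot q = \sum_z a_z + K = q$, so $d_2$ is a distribution whose probabilities each have denominator $q$.

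The one subtlety is that we must choose $T$ to lie inside $Z^{+}:=\{z:f_z>0\}$, because otherwise rounding up a $z$ with $f_z=0$ would give $|d_1(z)-d_2(z)|=1/q$ (tied, not strictly less). This is always possible: if $K=0$ then $T=\emptyset$ works trivially; if $K\geq 1$ then
\[
K \;=\; \sum_{z\in Z^+} f_z \;<\; \sum_{z\in Z^+} 1 \;=\; |Z^+|,
\]
using $f_z<1$ on each of the (at least one) terms, and since $K$ is an integer this gives $|Z^+|\geq K+1\geq K$. So we can choose any $K$-element subset $T\subseteq Z^+$.

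Finally, verify the bound. For $z\in T$ we have $|d_1(z)-d_2(z)| = (1-f_z)/q < 1/q$ because $f_z>0$. For $z\notin T$ we have $|d_1(z)-d_2(z)| = f_z/q < 1/q$ because $f_z<1$. So the strict inequality $|d_1(z)-d_2(z)|<1/q$ holds for all $z\in Z$, which completes the proof. The only step requiring any real thought is the counting argument showing $|Z^+|\geq K$, which is why the hypothesis $q\geq\ell$ is not needed for correctness per se but ensures that the construction is non-vacuous (so that small $a_z=0$ choices remain non-negative).
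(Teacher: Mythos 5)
Your proof is correct and complete. Note that the paper itself does not prove this lemma — it is imported verbatim from [CI14], with the remark that the proof appears only in the full version of that paper — so there is no in-paper argument to compare against; your floor-and-redistribute construction is the standard one for such rounding lemmas and is almost certainly what the cited proof does. Your handling of the one genuinely delicate point (restricting the rounded-up set $T$ to $\{z : f_z>0\}$ so that the inequality $|d_1(z)-d_2(z)|<\frac{1}{q}$ is strict, together with the counting argument $K<|Z^+|$ when $K\geq 1$) is exactly right, and your observation that the hypothesis $q\geq\ell$ is not actually needed is also correct.
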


We are now ready to show the main theorem of this section.

\begin{theorem}\label{thm:safety_only_upperbound}
For all concurrent stochastic games with all $k$ safety players, for all $0<\eps<\frac{1}{4}$, 
there exists a player-stationary strategy profile $\sigma$ that forms an $\eps$-Nash equilibrium 
and has roundedness at most 
\[4 n\cdot k^2\cdot m\cdot \eps^{-1} \cdot \ln(4k/\eps)\cdot (\delta_{\min})^{-n}\enspace .\]
\end{theorem}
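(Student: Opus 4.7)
The plan is to start with a player-stationary Nash equilibrium $\sigma$ (which exists by~\cite{SS02}) and round each of its constituent distributions $\widehat{\sigma}_i(\Pi,s)$ to obtain a player-stationary profile $\sigma'$ of bounded roundedness that still forms an $\eps$-Nash equilibrium. This reduces the problem to two ingredients already established in this section: the rounding lemma (Lemma~\ref{lem:round}) to produce $\sigma'$, and Lemma~\ref{lem:close_outcome} to control how payoffs change when strategies move slightly in variation distance while shrinking support.

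For the rounding step, I would fix the common denominator
\[
q = 4 n k^2 m \eps^{-1} \ln(4k/\eps) (\delta_{\min})^{-n}
\]
and, for each triple $(i,\Pi,s)$, apply Lemma~\ref{lem:round} with the ground set $Z$ taken to be the \emph{support} of $\widehat{\sigma}_i(\Pi,s)$. Since $|Z|\leq m\leq q$, this produces a distribution $\widehat{\sigma}'_i(\Pi,s)$ whose probabilities are integer multiples of $1/q$, whose support is contained in that of $\widehat{\sigma}_i(\Pi,s)$, and which differs pointwise from $\widehat{\sigma}_i(\Pi,s)$ by less than $1/q$. Defining $\sigma'$ from these rounded distributions in the player-stationary manner yields a profile with smaller support than $\sigma$ and with $\var(\sigma,\sigma') < m/(2q)$, which a short calculation shows is at most $\eps/(4\ell k)$ for $\ell = n k \ln(4k/\eps)(\delta_{\min})^{-n}$ as fixed earlier in the section.

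To verify the $\eps$-equilibrium property, fix a player~$i$, a state~$s$, and any deviation $\sigma^*_i$ from $\sigma'$. By Lemma~\ref{lem:player-stationary_deviation} applied to the player-stationary profile $\sigma'$, one may assume $\sigma^*_i$ is itself pure and player-stationary (an optimal best reply of this form exists). Then $\sigma[\sigma^*_i]$ and $\sigma'[\sigma^*_i]$ are both player-stationary profiles, agreeing on coordinate~$i$ and differing on the remaining coordinates only by the rounding, which shrinks support and is within the variation-distance threshold. Hence Lemma~\ref{lem:close_outcome} applies both to the pair $(\sigma,\sigma')$ and to the pair $(\sigma[\sigma^*_i], \sigma'[\sigma^*_i])$, yielding
\[
u(G,s,\sigma',i) \geq u(G,s,\sigma,i) - \eps/2 \quad\text{and}\quad u(G,s,\sigma'[\sigma^*_i],i) \leq u(G,s,\sigma[\sigma^*_i],i) + \eps/2.
\]
Combining these with the Nash property $u(G,s,\sigma,i) \geq u(G,s,\sigma[\sigma^*_i],i)$ of the original profile $\sigma$ yields $u(G,s,\sigma',i) \geq u(G,s,\sigma'[\sigma^*_i],i) - \eps$, which is exactly the $\eps$-Nash condition.

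The key obstacle is the support-preservation hypothesis of Lemma~\ref{lem:close_outcome}: a naive application of Lemma~\ref{lem:round} over the full action set could put a tiny nonzero probability on actions outside $\supp(\widehat{\sigma}_i(\Pi,s))$, breaking the coupling argument underlying Lemma~\ref{lem:close_outcome} (since after the event $W(P_s^\ell)$ the two strategies must agree to give the same play). Restricting the rounding domain to the support of each original distribution before invoking Lemma~\ref{lem:round} sidesteps this cleanly, and as a bonus keeps the per-distribution variation bound at $m/(2q)$ rather than the full-action-set cardinality divided by $2q$. A small additional bookkeeping check is that the chosen $q$ indeed forces $\var(\sigma,\sigma')$ below the threshold $\eps/(4\ell k)$ required by Lemma~\ref{lem:close_outcome}.
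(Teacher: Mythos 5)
Your proposal is correct and follows essentially the same route as the paper's proof: round each constituent distribution via Lemma~\ref{lem:round} with $Z$ taken to be its support (the paper makes exactly this restriction, for exactly the reason you identify), choose $q=4\ell k m/\eps$ with $\ell=nk\ln(4k/\eps)(\delta_{\min})^{-n}$, and then apply Lemma~\ref{lem:close_outcome} to both $(\sigma,\sigma')$ and $(\sigma[\sigma_i^*],\sigma'[\sigma_i^*])$ for a pure player-stationary best reply guaranteed by Lemma~\ref{lem:player-stationary_deviation}, combining with the Nash property of $\sigma$. The only cosmetic difference is that you carry the factor-of-$\tfrac12$ in the variation-distance bound ($m/(2q)$ versus the paper's $m/q$), which is immaterial since the paper's weaker bound already meets the threshold $\eps/(4\ell k)$.
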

\begin{proof}
Fix some player-stationary strategy profile $\sigma$ that forms a Nash-equilibrium and some $0<\eps<\frac{1}{4}$ and let  \[
\ell:=-n\cdot k\cdot \ln(\eps/(4k))\cdot (\delta_{\min})^{-n}\enspace .\]

Consider some distribution $d_1$ over some set $Z$. Observe that for each distribution $d_2$ with smaller support than $d_1$ and such that $|d_1(z)-d_2(z)|<\frac{1}{q}$, for each $z\in \supp(d_1)$, we have $\var(d_1,d_2)\leq \frac{|\supp(d_1)|}{q}$. 
Then, applying Lemma~\ref{lem:round}, for $q=\frac{\ell\cdot k\cdot 4\cdot m}{\eps}$ and $Z=\supp(d)$, to each probability distribution $d$ defining $\sigma$, we see that there exists a player-stationary strategy profile $\sigma'=(\sigma_i')_i$, such that (1) \[\var(\sigma,\sigma')\leq \frac{m}{q}=\frac{\eps}{\ell\cdot k\cdot 4}\enspace ;\] and (2)~$\sigma_i'$ has smaller support than $\sigma_i$; and (3)~$\sigma_i'(P_s^\ell)$ is a fraction with denominator $q$. Observe that the strategy has roundedness $q$.

We now argue that $\sigma'$ is an $\eps$-Nash equilibrium.
Consider some player~$i$ and a player-stationary strategy $\sigma_i''$ maximizing the probability that player~$i$ wins when the remaining players play according to $\sigma'$, which is known to exists by Lemma~\ref{lem:player-stationary_deviation}. From Lemma~\ref{lem:close_outcome}, we have that \[u(G,s,\sigma[\sigma_i''],i)\geq u(G,s,\sigma'[\sigma_i''],i)-\eps/2\] and \[u(G,s,\sigma,i)\leq u(G,s,\sigma',i)+\eps/2\enspace .\] Thus, $u(G,s,\sigma',i)\geq  u(G,s,\sigma'[\sigma_i''],i)-\eps$. This completes the proof.
\end{proof}

\begin{remark}[Finding an $\eps$-Nash equilibria in $\TFNP$]\label{rem:find_safety_only_nash} 
We explain how the results of this section imply that for non-zero-sum concurrent 
stochastic games with safety objectives for all players, if the number $k$ of players is 
only a constant or logarithmic, then we can compute an $\eps$-Nash equilibria in $\TFNP$, 
where $\eps>0$ is given in binary as part of the input. 
Note that there is a polynomial-size witness (to guess) for a stationary strategy with 
exponential roundedness. Observe that a player-stationary strategy for a player is 
defined by $2^{k-1}+1$ stationary strategies, one used in case that the respective player has 
lost, and one for each subset of other players. 
Thus, we can guess polynomial-size witnesses of $k$ player-stationary strategies with exponential roundedness, 
given that the number of players is at most logarithmic in the size of the input. 
Hence, according to Theorem~\ref{thm:safety_only_upperbound}, we can guess a candidate strategy profile 
$\sigma$ that forms an $\eps$-Nash equilibrium in non-deterministic polynomial time. 
For each player~$i$, constructing the (polynomial-sized) MDP described in the proof of 
Lemma~\ref{lem:player-stationary_deviation} and then solving it using linear programming 
gives us the payoff of playing the strategy maximizing the value for player~$i$ 
while the remaining players follows $\sigma$. 
If, for each player~$i$, the payoff only differs at most $\eps$ from what achieved by 
player~$i$ when all players follows $\sigma$, then the strategy profile $\sigma$ is an 
$\eps$-Nash equilibrium.
It follows that the approximation of some $\eps$-Nash equilibria can be achieved in $\TFNP$,
given that the number of players is at most logarithmic.
\end{remark}

\subsection{Exponential lower bound on patience}
In this section, we show that $\Omega((\delta_{\min})^{-(n-3)/6})$ patience is required, 
for each strategy profile that forms an $\eps$-Nash equilibrium, for any $0<\eps<\frac{1}{6}$, 
in a family of games $\{G_c^{(\delta_{\min})}\mid c\in \N\wedge \delta_{\min}<6^{-3}\}$ with two safety players. 

\smallskip\noindent{\bf Game family $G_c^{\delta_{\min}}$.}
For a fixed number $c\geq 1$ and $0<\delta_{\min}<6^{-3}$, the game $G_c^{\delta_{\min}}$ is defined as follows:
There are $n=4\cdot c+3$ states, namely, $S=\{v_s,v_1,v_2,\top,\bot\}\cup\{v_j^\ell\mid j\in\{1,2\}\wedge \ell\in\{1,\dots,2\cdot c-1\}\}$.
For player~$i$ in state $v_j$, for $j=1,2$, there are two actions, called $a_i^{j,1}$ and $a_i^{j,2}$, respectively. 
For each other state $s$ and each player~$i$, there is a single action, $a$.
For simplicity, for each pair of states $s,s'$ we write $d(s,s')$ for the probability distribution, where $d(s,s')(s)=1-\delta_{\min}$ and $d(s,s')(s')=\delta_{\min}$. Also, we define $v_1^0$ as $\top$ and $v_2^0$ as $\bot$.
The states $\bot$ and $\top$ are absorbing.
The state $v_s$ is such that\footnote{recall that $\U(s,s')$ is the uniform distribution over $s$ and $s'$} $\delta(v_s,a,a)=\U(v_1,v_2)$.
For each $j\in \{1,2\}$, the transition function of state $v_j$ is 
\[
\delta(v_j,a_1^{j,\ell},a_2^{j,\ell'})=\begin{cases}
d(v_s,v_j^{c-1})&\text{if }\ell=\ell'\\ 
d(v_s,v_{\widehat{j}}^{2c-1})&\text{if }\ell<\ell'\\
v_{\widehat{j}}^0&\text{if }\ell>\ell'\\
\end{cases}
\]
For each other state $v_j^\ell$, the transition function is $
\delta(v_j^\ell,a,a)=
d(v_s,v_j^{\ell-1})$.
The objective of player~$1$ is $(\safety,S\setminus \{\bot\})$ and the objective of player~2 
is $(\safety,S\setminus \{\top\})$. See Figure~\ref{fig:gcd} for an illustration of $G_2^{\delta_{\min}}$.

\begin{figure}
\center
\begin{tikzpicture}[node distance=3cm,-{stealth},shorten >=2pt]
\ma{top}[$\top$]{1}{1};

\ma[shift={($(top)+(2cm,0)$)}]{v11}[$v_1^1$]{1}{1};
\ma[shift={($(v11)+(2cm,0)$)}]{v12}[$v_1^2$]{1}{1};
\ma[shift={($(v12)+(2cm,0)$)}]{v13}[$v_1^3$]{1}{1};

\ma[shift={($(top)+(0,-4cm)$)}]{v1}[$v_1$]{2}{2};

\ma[shift={($(v1)+(3cm,0)$)}]{vs}[$v_s$]{1}{1};

\ma[shift={($(v13)+(0,-4cm)$)}]{v2}[$v_2$]{2}{2};

\ma[shift={($(v2)+(0,-4cm)$)}]{bot}[$\bot$]{1}{1};
\ma[shift={($(bot)+(-2cm,0)$)}]{v21}[$v_2^1$]{1}{1};
\ma[shift={($(v21)+(-2cm,0)$)}]{v22}[$v_2^2$]{1}{1};
\ma[shift={($(v22)+(-2cm,0)$)}]{v23}[$v_2^3$]{1}{1};

\draw[dashed] (v1-1-1.center) to[out=90,in=135] (vs);
\draw[dotted] (v1-1-1.center) to[out=90,in=-135] (v11);

\draw[dashed] (v1-1-2.center) to[out=0,in=155] (vs); 
\draw[dotted] (v1-1-2.center) to[out=0,in=45,min distance=2.2cm] (v23);

\draw[dashed] (v1-2-2.center) to[out=0,in=-155] (vs);
\draw[dotted] (v1-2-2.center) to[out=0,in=-90] ($(v11.south)!.5!(v11.south east)$);

\draw (v1-2-1.center) to[out=-90,in=135] (bot);

\draw (vs.center) to[out=-90,in=0] node[pos=0.8,anchor=center,fill=white,draw=black,inner sep=0.2pt,circle] {$\frac{1}{2}$}(v1-2-2.south east);

\draw (vs.center) to[out=-90,in=180] node[pos=0.7,anchor=center,fill=white,draw=black,inner sep=0.2pt,circle] {$\frac{1}{2}$}(v2-2-1.south west);

\draw[dashed] (v2-1-1.center) to[out=180,in=15] (vs);
\draw[dotted] (v2-1-1.center) to[out=180,in=135] (v21);

\draw[dashed] (v2-1-2.center) to[out=90,in=45] (vs); 
\draw[dotted] (v2-1-2.center) to[out=90,in=-90] ($(v13.south)!0.5!(v13.south east)$);

\begin{scope}[yscale=-1,xscale=-1]
\draw[dashed] (v2-2-2.center) arc (180:0:1.75cm);
\end{scope}
\draw[dotted] (v2-2-2.center) to[out=-90,in=45] (v21);
\draw (v2-2-1.center) to  (top);

\nloop{top}[xscale=-1,yscale=-1]
\nloop{bot}

\begin{scope}[yscale=-1,xscale=-1]
\draw[dotted] (v11.center) to (v11.south) arc (180:0:0.85cm);
\end{scope}
\draw[dashed] (v11.center) to[out=-90,in=110] (vs);

\begin{scope}[yscale=-1,xscale=-1]
\draw[dotted] (v12.center) to (v12.south) arc (180:0:0.75cm);
\end{scope}
\draw[dashed] (v12.center) to[out=-90,in=70] (vs);

\begin{scope}[yscale=-1,xscale=-1]
\draw[dotted] (v13.center) to (v13.south) arc (180:0:0.85cm);
\end{scope}
\draw[dashed] (v13.center) to (v13.south) to[out=-90,in=57.5] (vs);

\draw[dotted] (v21.center) to (v21.north) arc (180:0:0.85cm);
\draw[dashed] (v21.center) to[out=90,in=-75] (vs);

\draw[dotted] (v22.center) to (v22.north) arc (180:0:0.85cm);
\draw[dashed] (v22.center) to[out=90,in=-105] (vs);

\draw[dotted] (v23.center) to (v23.north) arc (180:0:0.85cm);
\draw[dashed] (v23.center) to (v23.north) to[out=90,in=-130] (vs);
\end{tikzpicture}
\caption{An illustration of the game $G_2^{\delta_{\min}}$. The probabilities are as follows: The probability of each dashed edge is $1-\delta_{\min}$; and the probability of each dotted edge is $\delta_{\min}$; and the probability of each solid edge is~1. The only exception is the edges from $v_s$, where the probability is written on each edge (it is $\frac{1}{2}$ in each case).\label{fig:gcd}}
\end{figure}
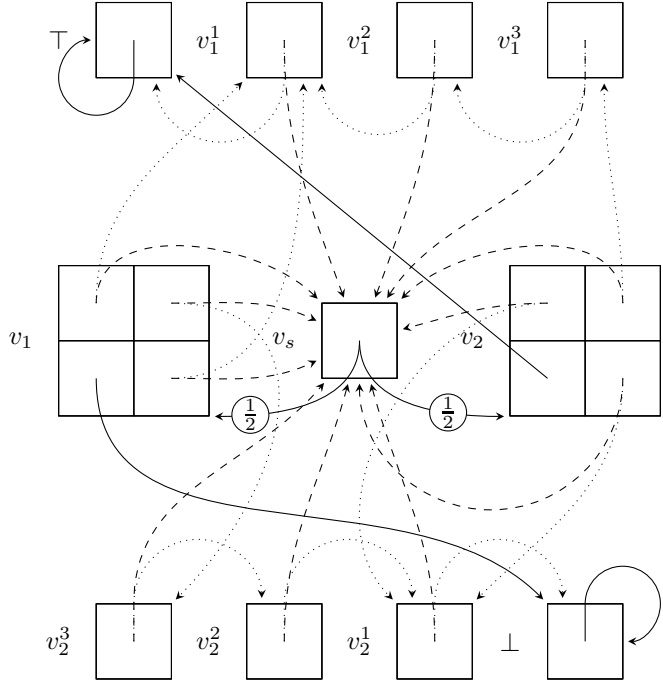

\smallskip\noindent{\bf Near-zero-sum property.} 
Observe that either $\bot$ or $\top$ is reached with probability~1 (and once $\top$ or $\bot$ is reached, the game stays there). 
The reasoning is as follows: there is a probability of at least $(\delta_{\min})^{2c}$ to reach either $\top$ or $\bot$ 
within the next $2c+1$ steps from any state.
If the current state is $v_s$, then the next state is either $v_1$ or $v_2$, and 
from $v_1$ or $v_2$  through $v_j^{\ell}$ for each $\ell$ from $1$ to $2c-1$, for some $j$, 
either $\top$ or $\bot$ is reached, and each of the steps from $v_1$ or $v_2$ onward happens 
with probability at least~$\delta_{\min}$, no matter the choice of the players. 
Hence, the game is in essence zero-sum, since with probability~1 precisely one player wins. 

\smallskip\noindent{\bf Proof overview.}
Our proof has two parts. We show that there is a strategy for player~$i$, for each~$i$, that ensures that against all strategies for the other player, the payoff is at least $\frac{1}{2}$ for player~$i$. Also, we show that for each strategy of player~$i$ with patience at most~$(\delta_{\min})^{-2/3\cdot c}$, there is a strategy for the other player such that the payoff is less than $\frac{1}{6}$ for player~$i$. 
This then allows us to show that no strategy profile that forms a $\frac{1}{6}$-Nash equilibrium has patience less than 
$(\delta_{\min})^{-2/3\cdot c}$.

\begin{lemma}\label{lem:safe_optimal}
For each $i$, player~$i$ has a strategy $\sigma_i$ such that \[\inf_{\sigma_{\widehat{i}}}u(G,v_s,\sigma_1,\sigma_2,i)=\frac{1}{2}\enspace .\]
\end{lemma}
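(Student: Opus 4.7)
The plan is to exhibit, for each player $i$, an explicit stationary strategy $\sigma_i^*$ and verify, by a direct Bellman-equation calculation on the MDP obtained after fixing $\sigma_i^*$, that every response of the opponent yields payoff exactly $\frac{1}{2}$ from $v_s$. First I observe that from every non-absorbing state the absorbing set $\{\top,\bot\}$ is reached with probability~$1$ (there is always a positive-probability path of length at most $2c+1$ to $\{\top,\bot\}$, each of whose steps has probability at least $\delta_{\min}$), so the game is effectively zero-sum: $u(G,v_s,\sigma_1,\sigma_2,1)+u(G,v_s,\sigma_1,\sigma_2,2)=1$.

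For $i=1$ I would define $\sigma_1^*$ to play action $a_1^{j,1}$ with probability $p:=1/(1+\delta_{\min}^c)$ and action $a_1^{j,2}$ with probability $1-p=\delta_{\min}^c/(1+\delta_{\min}^c)$ at state $v_j$ for both $j\in\{1,2\}$ (and the unique action elsewhere). Fixing $\sigma_1^*$ turns the game into an MDP for player~$2$, so I may assume $\sigma_2$ is pure and stationary. Writing $w=W(v_s)$ for the value of this MDP, the recurrence $W(v_j^\ell)=(1-\delta_{\min})w+\delta_{\min}W(v_j^{\ell-1})$ with $W(v_1^0)=1$ and $W(v_2^0)=0$ yields $W(v_1^\ell)=w+\delta_{\min}^\ell(1-w)$ and $W(v_2^\ell)=w(1-\delta_{\min}^\ell)$. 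Substituting these into the $2\times 2$ matrix game entries at $v_1$ and $v_2$ and averaging against the row mixture $(p,1-p)$, the key computation---using the identity $1-p=p\,\delta_{\min}^c$---shows that player~$2$'s two pure responses give \emph{identical} values at $v_1$ and at $v_2$, namely $A(w)=w+\delta_{\min}^c(1-2w)/(1+\delta_{\min}^c)$ (action~$1$) and $B(w)=w+\delta_{\min}^{2c}(1-2w)/(1+\delta_{\min}^c)$ (action~$2$).

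Consequently $W(v_1)=W(v_2)=\min(A(w),B(w))$, and since $W(v_s)=\frac{1}{2}(W(v_1)+W(v_2))$ the Bellman system collapses to the scalar fixed-point equation $w=\min(A(w),B(w))$. Both $A(w)-w$ and $B(w)-w$ have the same sign as $1-2w$, so neither $w<\frac{1}{2}$ nor $w>\frac{1}{2}$ is possible and the unique solution is $w=\frac{1}{2}$. This gives $\inf_{\sigma_2}u(G,v_s,\sigma_1^*,\sigma_2,1)=\frac{1}{2}$. For $i=2$, I would use the image of $\sigma_1^*$ under the natural automorphism of the game that swaps the two players, the two action labels, the states $v_1\leftrightarrow v_2$ and $v_1^\ell\leftrightarrow v_2^\ell$, and $\top\leftrightarrow\bot$ (concretely, the strategy playing $a_2^{j,2}$ with probability $p$ at $v_j$ for $j=1,2$), and the analogous calculation yields payoff $\frac{1}{2}$ for player~$2$. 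The main technical obstacle is the ``magical'' coincidence underlying the Bellman reduction: verifying that the specific choice $p=1/(1+\delta_{\min}^c)$ is precisely what is needed to equalise the column payoffs at $v_1$ and at $v_2$ as functions of $w$, so that the two-dimensional Bellman system collapses to a single scalar equation with value $\frac{1}{2}$.
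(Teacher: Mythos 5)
Your proof is correct and takes essentially the same route as the paper: your mixture $p=1/(1+\delta_{\min}^{c})$ is exactly the paper's strategy $\tfrac{1+(\delta_{\min})^{-c}}{2+(\delta_{\min})^{-c}+(\delta_{\min})^{c}}$ after clearing the factor $\delta_{\min}^{c}$, and both arguments fix this strategy, pass to the resulting MDP for the opponent, check the positional responses, and then invoke the near-zero-sum property and the player-swapping symmetry. The only difference is one of detail: you carry out the Bellman/fixed-point computation explicitly where the paper merely asserts that ``going through all four candidates'' gives value $\tfrac{1}{2}$.
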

\begin{proof}
Consider the stationary strategy $\sigma_1$, where \[\sigma_1(v_1)(a_1^{1,1})=\sigma_1(v_2)(a_1^{2,1})=\frac{1+(\delta_{\min})^{-c}}{2+(\delta_{\min})^{-c}+(\delta_{\min})^{c}}\]
and
\[\sigma_1(v_1)(a_1^{1,2})=\sigma_1(v_2)(a_1^{2,2})=\frac{1+(\delta_{\min})^{c}}{2+(\delta_{\min})^{-c}+(\delta_{\min})^{c}}\enspace .\]
Observe that fixing $\sigma_1$ as the strategy for player~1, the game turns into an MDP for player~2. Such games have a positional strategy ensuring that the payoff for player~2 is as large as possible.  Going through all four candidates for $\sigma_2$, one can see that $\max_{\sigma_2}u(G,v_s,\sigma_1,\sigma_2,2)=\frac{1}{2}$. Because of the near-zero-sum property, this minimizes the payoff for player~1 (since $u(G,v_s,\sigma_1,\sigma_2,1)+u(G,v_s,\sigma_1,\sigma_2,2)=1$), 
which is then $\inf_{\sigma_2}u(G,v_s,\sigma_1,\sigma_2,1)=\frac{1}{2}$. 
The strategy for player~2 follows from $\sigma_1$ and the symmetry of the game.
\end{proof}

We next argue that if player~$i$ uses a low-patience strategy, then the opponent can ensure low payoff 
for player~$i$.
\begin{lemma}\label{lem:low_outcome}
Let $\sigma_i$ be a strategy for player~$i$ with patience at most $(\delta_{\min})^{-2/3\cdot c}$. Then there exists a pure strategy $\sigma_{\widehat{i}}$ such that $u(G,v_s,\sigma_1,\sigma_2,\widehat{i})> 1-\frac{1}{6}$.
\end{lemma}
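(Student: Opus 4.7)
By the symmetry of $G_c^{\delta_{\min}}$ under the transformation that swaps the players, exchanges $v_1\leftrightarrow v_2$ and $\top\leftrightarrow\bot$, and reverses the action ordering $\ell\mapsto m+1-\ell$, it suffices to treat $i=1$. Since the near-zero-sum property of $G_c^{\delta_{\min}}$ ensures $\top$ or $\bot$ is reached almost surely, it is enough to exhibit a pure (history-dependent but deterministic) strategy $\sigma_2$ with $u(G,v_s,\sigma_1,\sigma_2,1)<1/6$; then $u(G,v_s,\sigma_1,\sigma_2,2)=1-u(G,v_s,\sigma_1,\sigma_2,1)>5/6$.

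First I will reduce to a per-visit analysis. Given a history $h$ ending at $v_j$ ($j\in\{1,2\}$), let $W_j(h)$ and $L_j(h)$ denote the probabilities that, starting from $v_j$ with $\sigma_1(h)$ on player~1's side and a deterministic action of player~2, the play next reaches $\top$ or $\bot$ before returning to $v_s$. Writing $\alpha=(\delta_{\min})^c$, and using that the random walk on the chain $v_j^\ell\to v_j^{\ell-1}$ reaches its endpoint with probability $(\delta_{\min})^\ell$, the atomic contributions at $v_1$ are $(W,L)=(\alpha,0)$ for a match, $(0,\alpha^2)$ when player~1's action is strictly smaller, and $(0,1)$ when strictly larger; at $v_2$ the two coordinates swap.

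Set $P=(\delta_{\min})^{-2c/3}$; the patience assumption forces $p_j(h):=\sigma_1(h)(a_1^{j,1})\in\{0,1\}\cup[1/P,1-1/P]$. Define $\sigma_2$ deterministically: at $v_1$, play $a_2^{1,2}$ if $p_1(h)=1$ and $a_2^{1,1}$ otherwise; at $v_2$, play $a_2^{2,1}$ if $p_2(h)=1$ and $a_2^{2,2}$ otherwise. A short case analysis over $p_j(h)\in\{0,1\}$ versus $p_j(h)\in[1/P,1-1/P]$ then yields, uniformly over $h$,
\[
\frac{W_j(h)}{W_j(h)+L_j(h)}\;\le\;\alpha P\;=\;(\delta_{\min})^{c/3}.
\]
Indeed, for $p_j(h)\in\{0,1\}$ player~2 anticipates player~1 perfectly and $W_j(h)=0$; and for $p_j(h)\in[1/P,1-1/P]$ the choice of $\sigma_2$ forces either the ``mismatch larger'' contribution (of loss value $1$) at $v_1$, or the ``match'' contribution (of loss value $\alpha$) at $v_2$, with probability at least $1/P$, whence $W_j(h)/(W_j(h)+L_j(h))\le p_j(h)\alpha/(p_j(h)\alpha+1/P)\le\alpha P$.

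Finally I lift the per-visit bound to the whole game by conditioning. Let $N^\ast$ be the index of the trial at which the play first exits $\{v_1,v_2\}\cup\{v_j^\ell\}$ to $\{\top,\bot\}$; by the near-zero-sum property $N^\ast<\infty$ almost surely. Conditional on $N^\ast=n$ and the history up to the start of trial~$n$, the probability of exiting to $\top$ at trial~$n$ is exactly the per-visit ratio above, hence at most $(\delta_{\min})^{c/3}$. Taking expectations,
\[
u(G,v_s,\sigma_1,\sigma_2,1)\;\le\;(\delta_{\min})^{c/3}\;\le\;\delta_{\min}^{1/3}\;<\;1/6,
\]
using $c\ge 1$ and $\delta_{\min}<6^{-3}$. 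The main obstacle is that $\sigma_1$ is arbitrarily history-dependent, so a stationary analysis does not suffice; the resolution is to let the pure strategy $\sigma_2$ depend on the current distribution prescribed by $\sigma_1$, and then apply the tower property to pass from the uniform per-visit ratio bound to the global bound on $u(G,v_s,\sigma_1,\sigma_2,1)$.
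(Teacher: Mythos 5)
Your proof is correct and takes essentially the same route as the paper's: you construct the same support-dependent pure best response (at $v_1$ play action $1$ whenever player~1 puts positive mass on action $2$, else action $2$; symmetrically at $v_2$), derive the same per-visit bound $W/(W+L)\le (\delta_{\min})^{c}\cdot(\delta_{\min})^{-2c/3}=(\delta_{\min})^{c/3}<1/6$ in each case, and lift it by conditioning on the decisive visit, exactly as the paper does by conditioning on the last round in $\{v_1,v_2\}$. Your write-up is if anything a bit more careful about the patience dichotomy $p_j(h)\in\{0,1\}\cup[1/P,1-1/P]$ and about the tower-property step.
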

\begin{proof}
Consider first player~$1$ (the argument for player~2 follows from symmetry). Let $\sigma_1$ be some strategy with patience at most $(\delta_{\min})^{-(n-3)/6}=(\delta_{\min})^{-2/3\cdot c}$.

The pure strategy $\sigma_2$ is defined given $\sigma_1$ as follows. For plays $P_s^{\ell}$ ending in state $v_1$ or $v_2$ we have that 
\[
\sigma_2(P_s^{\ell})=\begin{cases}
a_2^{j,j} &\text{if }\sigma_1(P_s^{\ell})(a_2^{j,2})>0\\
a_2^{j,\widehat{j}} &\text{if }\sigma_1(P_s^{\ell})=a_2^{j,1}\enspace .
\end{cases}
\]
To argue that $u(G,v_s,\sigma_1,\sigma_2,2)> 1-\frac{1}{6}$, 
we consider a play $P_{v_s}$ picked according to $(\sigma_1,\sigma_2)$, 
such that either $\bot$ or $\top$ is eventually reached. 
This is true with probability~1. 
Consider the last round $\ell$, such that $v_\ell=v_j$, for some $j=1,2$. 
 We now consider four cases: 
Either we have that
\begin{enumerate}
\item $j=1$ and $\sigma_1(P_s^{\ell})(a_2^{j,2})>0$ or
\item $j=1$ and $\sigma_1(P_s^{\ell})=a_2^{j,1}$ or
\item $j=2$ and $\sigma_1(P_s^{\ell})(a_2^{j,2})>0$ or
\item $j=2$ and $\sigma_1(P_s^{\ell})=a_2^{j,1}$.
\end{enumerate} 
 The probability to eventually reach $\bot$ is then at least the minimum probability to eventually reach $\bot$ in each of the four cases. In case (2) and case (4), we see that player~2 wins with probability~1. In case (1) observe that from a round $\ell'$ where $\sigma_1(P_s^{\ell'})(a_2^{1,2})>0$ player~1 wins (i.e., reaches $\top$ before entering $v_s$ again) with probability $(1-(\delta_{\min})^{2/3\cdot c})\cdot (\delta_{\min})^{c}<(\delta_{\min})^{c}$ and player~2 wins (i.e., reaches $\bot$ before entering $v_s$ again) with probability  $(\delta_{\min})^{2/3\cdot c}$. Hence, the probability that player~1 wins if such a round is round $\ell$ is at most 
\begin{align*}
\frac{(\delta_{\min})^{c}}{(\delta_{\min})^{2/3\cdot c}+(\delta_{\min})^{c}} < \frac{(\delta_{\min})^{c}}{(\delta_{\min})^{2/3\cdot c}}=(\delta_{\min})^{c/3}<\frac{1}{6}\enspace ,
\end{align*} 
where the last inequality comes from that $c\geq 1$ and $\delta_{\min}<6^{-3}$. In case (3) observe that from a round $\ell'$ where $\sigma_1(P_s^{\ell'})(a_2^{2,2})>0$ player~1 wins (i.e., reaches $\top$ before entering $v_s$ again) with probability at most $(1-(\delta_{\min})^{2/3\cdot c})\cdot (\delta_{\min})^{2c}<(\delta_{\min})^{2c}$ and player~2 wins (i.e., reaches $\bot$ before entering $v_s$ again) with probability  at least $(\delta_{\min})^{2/3\cdot c}\cdot (\delta_{\min})^{c}=(\delta_{\min})^{5/3\cdot c}$. Hence, the probability that player~1 wins if such a round is round $\ell$ is at most 
\begin{align*}
\frac{(\delta_{\min})^{2\cdot c}}{(\delta_{\min})^{5/3\cdot c}+(\delta_{\min})^{2\cdot c}} < \frac{(\delta_{\min})^{2\cdot c}}{(\delta_{\min})^{5/3\cdot c}}=(\delta_{\min})^{c/3}<\frac{1}{6}\enspace ,
\end{align*} 
where the last inequality comes from that $c\geq 1$ and $\delta_{\min}<6^{-3}$.
The desired result follows.
\end{proof}

We now prove the main result that no strategy with patience only $(\delta_{\min})^{-2/3\cdot c}$ can be a part 
of a $\frac{1}{6}$-Nash equilibrium.

\begin{theorem}
For all $c\in \N$ and all $0<\delta_{\min}<6^{-3}$, 
consider the game $G_c^{\delta_{\min}}$ (that has $n=4c+3$ states
and at most two actions for each player at all states).
Each strategy profile $\sigma=(\sigma_i)_i$ that forms an $\frac{1}{6}$-Nash equilibrium 
has patience at least $(\delta_{\min})^{-(n-3)/6}$.
\end{theorem}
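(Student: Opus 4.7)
The plan is to derive the lower bound by contradiction, using Lemma~\ref{lem:safe_optimal} and Lemma~\ref{lem:low_outcome} as the two opposing forces. The observation is that the game is essentially zero-sum: exactly one of $\bot$ or $\top$ is reached with probability~1 (as already argued in the paragraph on the near-zero-sum property), so the payoffs of the two players sum to~$1$ from every starting state. In particular, if one player's payoff in the profile $\sigma$ is small, the other player's payoff is correspondingly large.

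Suppose, for contradiction, that $\sigma=(\sigma_1,\sigma_2)$ forms a $\tfrac{1}{6}$-Nash equilibrium yet some $\sigma_i$ has patience at most $(\delta_{\min})^{-(n-3)/6}=(\delta_{\min})^{-2c/3}$. I first apply Lemma~\ref{lem:low_outcome} to $\sigma_i$ to obtain a pure deviation $\sigma'_{\widehat{i}}$ for the opponent giving $u(G,v_s,\sigma[\sigma'_{\widehat{i}}],\widehat{i})>1-\tfrac{1}{6}=\tfrac{5}{6}$. The $\tfrac{1}{6}$-Nash condition applied to player $\widehat{i}$ then forces
\[
u(G,v_s,\sigma,\widehat{i})\;\geq\; u(G,v_s,\sigma[\sigma'_{\widehat{i}}],\widehat{i})-\tfrac{1}{6}\;>\;\tfrac{5}{6}-\tfrac{1}{6}\;=\;\tfrac{2}{3},
\]
so by the near-zero-sum property we obtain $u(G,v_s,\sigma,i)<\tfrac{1}{3}$.

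On the other hand, Lemma~\ref{lem:safe_optimal} provides a strategy $\sigma^*_i$ for player~$i$ with $u(G,v_s,\sigma[\sigma^*_i],i)\geq\tfrac{1}{2}$, and applying the $\tfrac{1}{6}$-Nash condition to player~$i$ yields
\[
u(G,v_s,\sigma,i)\;\geq\;u(G,v_s,\sigma[\sigma^*_i],i)-\tfrac{1}{6}\;\geq\;\tfrac{1}{2}-\tfrac{1}{6}\;=\;\tfrac{1}{3}.
\]
This contradicts the strict bound $u(G,v_s,\sigma,i)<\tfrac{1}{3}$ derived above, so no such $\sigma_i$ exists, and every strategy in an $\tfrac{1}{6}$-Nash profile must have patience at least $(\delta_{\min})^{-(n-3)/6}$.

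Since the argument is just a two-line combination of the two preceding lemmas together with the (already-established) near-zero-sum property, there is no real obstacle here; the substantive work has already been done in establishing Lemma~\ref{lem:safe_optimal} (the $\tfrac{1}{2}$-guarantee constructed from the mirrored stationary strategy) and Lemma~\ref{lem:low_outcome} (the case analysis exploiting that a patience below $(\delta_{\min})^{-2c/3}$ makes the longer escape route $v_j^{c-1}$ disproportionately more likely than the shorter one $v_{\widehat j}^{2c-1}$). The only care needed in writing the proof is to ensure that the strict inequality $>\tfrac{5}{6}$ in Lemma~\ref{lem:low_outcome} is preserved through the chain of inequalities so that the contradiction with the non-strict $\geq\tfrac{1}{3}$ lower bound is genuine.
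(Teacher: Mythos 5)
Your proof is correct and follows essentially the same route as the paper's: both combine Lemma~\ref{lem:safe_optimal} and Lemma~\ref{lem:low_outcome} with the near-zero-sum property, the only difference being that the paper phrases the contradiction as a case split on whether $u(G,v_s,\sigma,2)\leq\frac{2}{3}$, whereas you derive the two incompatible bounds $u(G,v_s,\sigma,i)<\frac{1}{3}$ and $u(G,v_s,\sigma,i)\geq\frac{1}{3}$ directly. The handling of the strict versus non-strict inequality is exactly the care the argument requires, and you got it right.
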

\begin{proof}
Fix some $c\in \N$ and $0<\delta_{\min}<6^{-3}$. 
The proof will be by contradiction. Consider first player~$1$ (the argument for player~2 follows from symmetry). Let $\sigma_1$ be some strategy with patience at most $(\delta_{\min})^{-(n-3)/6}=(\delta_{\min})^{-2/3\cdot c}$.

Consider some strategy $\sigma_2$ for player~2.
We consider two cases, either \[u(G,v_s,\sigma_1,\sigma_2,2)\leq \frac{1}{2}+\frac{1}{6}=\frac{2}{3}\] or not.
 If \[u(G,v_s,\sigma_1,\sigma_2,2)\leq \frac{2}{3}\enspace ,\] then player~2 can play a strategy $\sigma_2'$, shown to exist in Lemma~\ref{lem:low_outcome}, instead and get payoff strictly above $1-\frac{1}{6}=\frac{5}{6}$, showing that $(\sigma_1,\sigma_2)$ is not an $\frac{1}{6}$-Nash equilibrium. On the other hand, if  
\[u(G,v_s,\sigma_1,\sigma_2,2)> \frac{2}{3}\enspace ,\]
then $u(G,v_s,\sigma_1,\sigma_2,1)< \frac{1}{3}$ and player~1 can play a strategy $\sigma_1'$, shown to exist in Lemma~\ref{lem:safe_optimal}, for which $u(G,v_s,\sigma_1',\sigma_2,1)\geq \frac{1}{2}$. Hence, $(\sigma_1,\sigma_2)$ does not form an $\frac{1}{6}$-Nash equilibrium in this case either.
The desired result follows.
\end{proof}

\begin{remark}
Using ideas similar to Remark~\ref{rem:multiple} we can construct a game with $k\geq 3$ 
safety players in which the patience is at least $(\delta_{\min})^{-(n-3)/(6k)}$ for all strategy profiles that forms an $\frac{1}{6k}$-Nash equilibrium.
\end{remark}

\section{Discussion and Conclusion}\label{sec:discussion}
In this section, we discuss some important features and interesting technical 
aspects of our results.
Finally we conclude with some remarks.

\subsection{Important features of results}\label{subsec:imp-features}

\begin{figure}
\center
\resizebox{0.9\textwidth}{!}{
\begin{tikzpicture}[node distance=3cm,-{stealth},shorten >=2pt]
\ma{top}[$\top$]{1}{1};

\node[rectangle, minimum height=7 cm,minimum width=5 cm,draw] (pn1m) at (1.5cm,-5.5cm){Purgatory $(n+1,m)$};

\ma[shift={($(pn1m.south)-(1.5cm,2cm)$)}]{bot}[$\bot$]{1}{1};

\nloop{top};
\nloop{bot}[yscale=-1];
\draw (pn1m.260) to[out=-90,in=90] (bot);
\draw (pn1m.100) to[out=90,in=-90] (top);

\ma[shift={($(pn1m.south)+(2cm,-2cm)$)}]{v21}[$v_2^1$]{2}{2};
\ma[shift={($(v21)+(0,-4cm)$)}]{v11}[$v_1^1$]{2}{2};

\draw (v11-1-1.center) to (v21);
\draw (v11-2-1.center) to[out=180,in=-90] (bot.250);

\nloop{v11-1-2};
\draw (v11-2-2.center) to[out=90,in=-90] (v21);
\draw (pn1m.270) to[out=-90,in=135] (v11);

\draw (v21-1-1.center) to[out=90,in=-90] (pn1m.298);
\draw (v21-2-1.center) to[out=180,in=0] (bot.25);
\draw (v21-1-2.center) to[out=0,in=0] (v11.0);

\draw (v21-2-2.center) to[out=90,in=-90] (pn1m.301);
\draw (pn1m.270) to[out=-90,in=135] (v11);

\draw (5cm,-5cm) to node[midway,above] {Simplify} (7cm,-5cm) ;
\begin{scope}[shift={(9cm,0)}]

\ma{top}[$\top$]{1}{1};

\node[rectangle, minimum height=7 cm,minimum width=5 cm,draw] (pn1m) at (1.5cm,-5.5cm){Simplified purgatory $(n,m)$};

\ma[shift={($(pn1m.south)-(1.5cm,2cm)$)}]{bot}[$\bot$]{1}{1};

\nloop{top};
\nloop{bot}[yscale=-1];
\draw (pn1m.260) to[out=-90,in=90] (bot);
\draw (pn1m.100) to[out=90,in=-90] (top);

\ma[shift={($(pn1m.south)+(2cm,-2cm)$)}]{v21}[$v_2^1$]{2}{2};
\ma[shift={($(v21)+(0,-4cm)$)}]{v11}[$v_s$]{1}{1};
\draw (v11.center) to (v21);

\draw (v21-1-1.center) to[out=90,in=-90] (pn1m.298);
\draw (v21-2-1.center) to[out=180,in=0] (bot.25);
\draw (v21-1-2.center) to[out=0,in=0] (v11.0);

\draw (v21-2-2.center) to[out=90,in=-90] (pn1m.301);
\draw (pn1m.270) to[out=-90,in=135] (v11);

\draw (5cm,-5cm) to node[midway,above] {Exchange $\top\leftrightarrow\bot$} (7cm,-5cm) ;
\end{scope}

\begin{scope}[shift={(18cm,0)}]

\ma{top}[$\bot$]{1}{1};

\node[rectangle, minimum height=7 cm,minimum width=5 cm,draw] (pn1m) at (1.5cm,-5.5cm){Simplified purgatory dual $(n,m)$};

\ma[shift={($(pn1m.south)-(1.5cm,2cm)$)}]{bot}[$\top$]{1}{1};

\nloop{top};
\nloop{bot}[yscale=-1];
\draw (pn1m.260) to[out=-90,in=90] (bot);
\draw (pn1m.100) to[out=90,in=-90] (top);

\ma[shift={($(pn1m.south)+(2cm,-2cm)$)}]{v21}[$v_2^2$]{2}{2};
\ma[shift={($(v21)+(0,-4cm)$)}]{v11}[$v_s$]{1}{1};
\draw (v11.center) to (v21);

\draw (v21-1-1.center) to[out=90,in=-90] (pn1m.298);
\draw (v21-2-1.center) to[out=180,in=0] (bot.25);
\draw (v21-1-2.center) to[out=0,in=0] (v11.0);

\draw (v21-2-2.center) to[out=90,in=-90] (pn1m.301);
\draw (pn1m.270) to[out=-90,in=135] (v11);

\end{scope}

\begin{scope}[shift={(13.5cm,-18cm)}]

\ma{top}[$\top$]{1}{1};

\node[rectangle, minimum height=7 cm,minimum width=5 cm,draw] (pn1m) at (1.5cm,-5.5cm){Purgatory duel $(n,m)$};

\ma[shift={($(pn1m.south)-(1.5cm,2cm)$)}]{bot}[$\bot$]{1}{1};

\nloop{top};
\nloop{bot}[yscale=-1];
\draw (pn1m.260) to[out=-90,in=90] (bot);
\draw (pn1m.100) to[out=90,in=-90] (top);

\ma[shift={($(pn1m.south)+(2cm,-2cm)$)}]{v12}[$v_2^1$]{2}{2};
\ma[shift={($(pn1m.south)+(-5cm,-2cm)$)}]{v22}[$v_2^2$]{2}{2};

\ma[shift={($(bot)+(0,-4cm)$)}]{vs}[$v_s$]{1}{1};

\draw (v22-1-1.center) to[out=90,in=180] (pn1m.210);
\draw (v22-2-1.center) to[out=180,in=180] (top.210);
\draw (v22-1-2.center) to[out=-30,in=180] (vs.150);
\draw (v22-2-2.center) to[out=0,in=-90] (pn1m.240);

\draw (v12-1-1.center) to[out=90,in=-90] (pn1m.298);
\draw (v12-2-1.center) to[out=180,in=0] (bot.25);
\draw (v12-1-2.center) to[out=-20,in=0] (vs.south east);

\draw (v12-2-2.center) to[out=90,in=-90] (pn1m.301);

\draw (vs.center) to[out=90,in=-45] node[above] {$\frac{1}{2}$} (v22-2-1);
\draw (vs.center) to[out=90,in=-135] node[above] {$\frac{1}{2}$} (v12);

\draw (pn1m.270) to[out=-90,in=45] (vs);
\end{scope}
\draw[-,shorten >=0] (13.7cm,-10cm) -- (15.2cm,-14cm);
\draw[-,shorten >=0] (16.7cm,-10cm) -- (15.2cm,-14cm);
\draw (15.2cm,-14cm) to node[above,sloped] {Merge $v_s,\top,\bot$} (15.2cm,-19cm);

\end{tikzpicture}
}
\caption{An illustration of the Purgatory Duel with $m=n=2$. The two dashed edges have probability~$\frac{1}{2}$ each.\label{fig:pd}}
\end{figure}
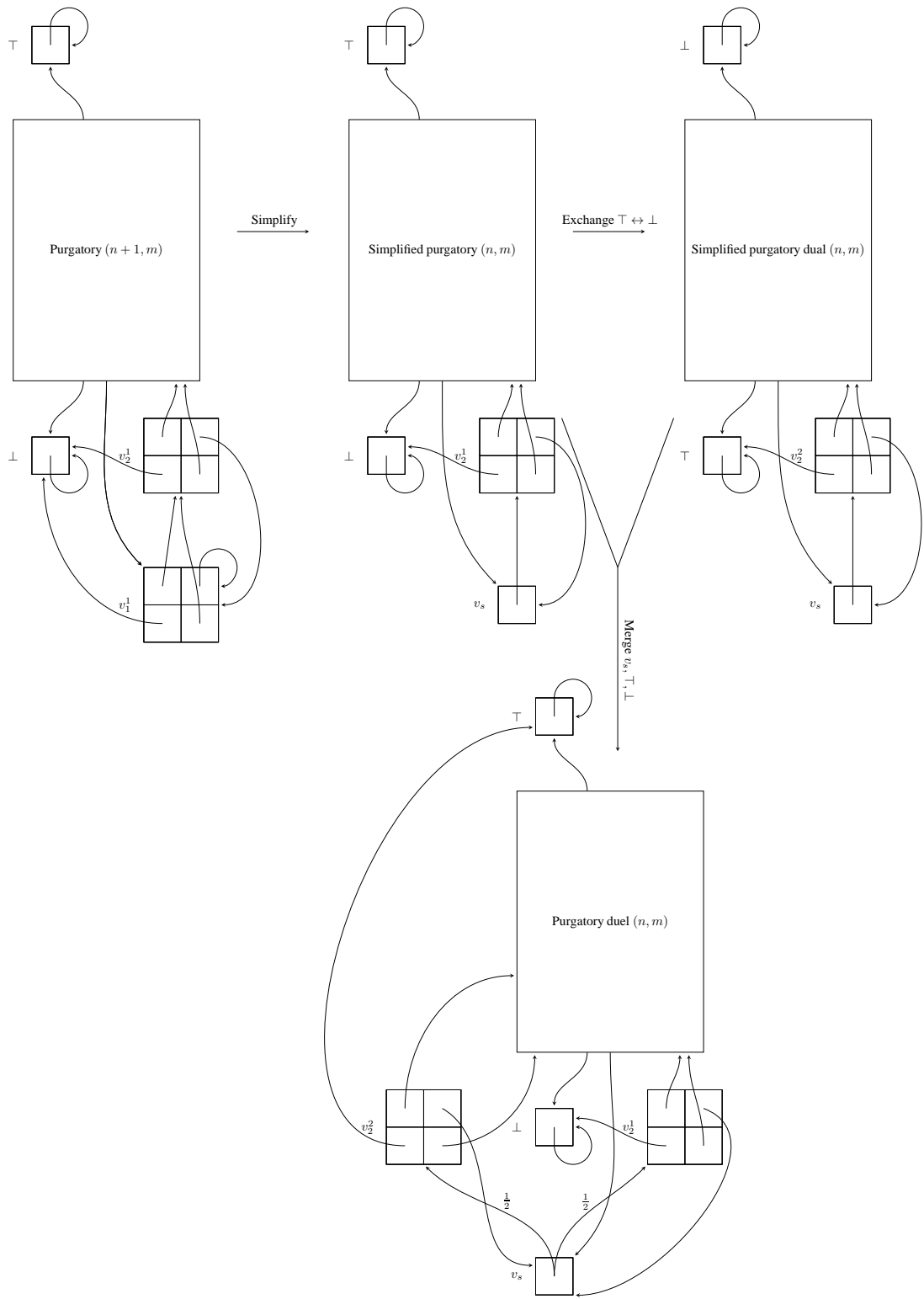

We now highlight two important features of our results, namely, the 
surprising aspects and the significance of the results.

\smallskip\noindent{\em Surprising aspects of our results.}
We discuss three surprising aspects of our result.
\begin{compactenum}

\item {\em The doubly-exponential lower bound on patience.}
For concurrent safety games, the properties of strategies resemble that
of concurrent discounted games.
In both cases, (1)~optimal strategies exist, (2)~there exist stationary 
strategies that are optimal, and (3)~locally optimal strategies (that play
optimally in every state with respect to the matrix games with values) 
are optimal.
The other class of concurrent games where optimal stationary strategies
exist are concurrent ergodic mean-payoff games, however, in contrast to 
safety and discounted games, in concurrent ergodic mean-payoff games not
all locally optimal strategies are optimal.
However, though for concurrent discounted games as well for concurrent ergodic
mean-payoff games, the optimal bound on the patience of $\epsilon$-optimal 
stationary strategies, for $\epsilon>0$, is exponential, we show a doubly-exponential 
lower bound on patience of $\epsilon$-optimal strategies for concurrent 
safety games, for $\epsilon>0$.

\item {\em The lower bound example.}
The second surprising aspect of our result is the lower bound example itself,
which had been elusive for safety games. 
The closer the lower bound example is to known examples, the greater is its value, 
as it is easier to understand, and illustrates the simplicity of our elusive example. 
Our example is obtained as follows:
We consider the Purgatory games $(n+1,m)$, which has two value classes, and in this 
game positional (pure memoryless) optimal strategies exist for the safety player.
We simplify the game by making the start state a deterministic state with one action 
for each player that with probability one goes to the next state.
We call this simplified Purgatory, and strategies in simplified Purgatory corresponds
to strategies in Purgatory $(n,m)$.
Then we consider the dual of the simplified Purgatory, which is basically a mirror
of the simplified Purgatory, with roles of the players exchanged.
In effect the dual is obtained by exchanging $\top$ and $\bot$.
Both in the simplified Purgatory and the dual of simplified Purgatory, 
there are two value classes, and positional optimal strategies exist for the 
safety player.
The Puragatory duel is obtained by simply merging the start states of the 
simplified Purgatory and the dual of the simplified Puragatory, 
thus from the start state we go to the first state of the Purgatory $(n,m)$ 
and the first state of the dual of Purgatory $(n,m)$, each with 
probability half; see Figure~\ref{fig:pd}.
Quite surprisingly we show that this simple merge operation gives a game 
where each state has a different value (i.e., that has linear number of value 
classes instead of two value classes), and the patience of optimal strategies 
increases from~1 (positional) to doubly-exponential 
(even for $\epsilon$-optimal strategies) for the safety player.

\item {\em From reachability to safety.}
The third surprising aspect is that we transfer a lower bound result from 
concurrent reachability to concurrent safety games.
Typically, the behavior of strategies of concurrent reachability and safety games are different,
e.g., for reachability games optimal strategies do not exist in general, whereas they exist
for concurrent safety games; and even in concurrent reachability games where optimal strategies
exist, not all locally optimal strategies are optimal, whereas in concurrent safety games
all locally optimal strategies are optimal.
Yet we show that a lower bound example for concurrent reachability games can be modified
to obtain a lower bound for concurrent safety games.
Moreover, we show that the strategy complexity results  with respect to the 
number of value classes in concurrent safety games is different and 
much more refined as compared to reachability games 
(see Table~\ref{tab:strategy-complexity}).

\end{compactenum}

\smallskip\noindent{\em Significance of our result.} There are several significant aspects
of our result.
\begin{compactenum}

\item {\em Roundedeness and patience.} As a measure of strategy complexity 
there are two important notions:
(a)~roundedness, which is more relevant from the computational aspect; and (b)~patience, which is
the traditional game theoretic measure.
The roundedness is always at least the patience, and in this work we present matching bounds for
patience and roundedness (i.e., our upper bounds are for roundedness which are matched with lower
bounds for patience).
Thus our results present a complete picture of strategy complexity with respect
to both well-known measures.

\item {\em Computational complexity.} In the study of stochastic games, the most well-studied way 
to obtain computational complexity result is to explicitly guess strategies and then verify
the resulting game obtained after fixing the strategy.
The lower bound for concurrent reachability games by itself did not rule out that improved computational
complexity bounds can be achieved through better strategy complexity for safety games.
Indeed, for constant number of value classes, we obtain a better complexity result due to the
exponential bound on roundedness.
Our doubly-exponential lower bound shows that in general the method of explicitly guessing strategies
would require exponential space, and would not yield \NP\ or \coNP\ upper bounds.
In other words, our results establish that to obtain \NP\ or \coNP\ upper bound for concurrent
safety games in general completely new techniques are necessary.

\item {\em Lower bound for algorithm.}
One of the most well-studied algorithm for games is the strategy-iteration algorithm that explicitly 
modifies strategies. 
Our result shows that any natural variant of the strategy-iteration algorithm for the safety player 
which explicitly compute strategies require exponential space in the worst-case.

\item {\em Complexity of strategies.} 
While the decision problem for games of whether the value is at least a threshold is the 
most fundamental question, along with values, witness (close-to-)optimal strategies are required. 
Our results present a tight bound on the complexity of strategies (which are as important as values).
\end{compactenum}
In summary, our main contributions are optimal bounds on strategy complexity, and our lower bounds 
have significant implications: it provides worst-case lower bound for a natural class of algorithms,
as well rules out a traditional method to obtain computational complexity results.

\subsection{Interesting technical aspects}\label{subsec:imp-tech}
\begin{remark}[Difference of exponential bounds]
In this work we present two different exponential bound on patience.
The first for zero-sum concurrent stochastic games, and the second for 
non-zero-sum concurrent stochastic games with safety objectives for all 
players.
However, note that the nature of the lower bounds are very different.
The first lower bound is exponential in the number of actions, and the size
of the state space is constant. 
In contrast, for non-zero-sum concurrent stochastic games with 
safety objectives for all players, if the size of the state space is constant,
then our upper bound on patience is polynomial. 
The second lower bound in contrast to the first lower bound is exponential
in the number of states (and the upper bound is polynomial in $m$ and also
the number of players). 
\end{remark}

\begin{remark}[Concurrent games with deterministic transitions]
We now discuss our results for concurrent games with deterministic transitions.
It follows from the results of~\cite{CDGH10} that for zero-sum games, there is 
a polynomial-time reduction from concurrent stochastic games to concurrent 
games with deterministic transitions. Hence, all our lower bound results for
zero-sum games also hold for concurrent deterministic games.
Observe that this is also true for our lower bound on non-zero sum games with at least one reachability player, since we reduce the problem to the zero-sum case.
However, in general for non-zero-sum games polynomial-time reductions 
from concurrent stochastic games to concurrent deterministic games are not 
possible. 
For example, for concurrent stochastic games with safety objectives for all 
players we establish an exponential lower bound on patience of strategies 
that constitute an $1/6$-Nash equilibrium, whereas in contrast, our upper
bound on patience shows that if the game is deterministic 
(i.e., $\delta_{\min}=1$) and $\epsilon$ is constant, then there always 
exists an $\eps$-Nash equilibrium that requires only polynomial patience.
\end{remark}

\begin{remark}[Nature of strategies for the reachability player]
Another important feature of our result is as follows: for zero-sum concurrent 
stochastic games, the characterization of~\cite{FM13} of $\epsilon$-optimal 
strategies as \emph{monomial} strategies for reachability objectives, 
separates the description of the strategies as a part that is a function of 
$\epsilon$, and a part that is independent $\epsilon$. 
The previous double-exponential lower bound on patience 
from~\cite{HKM09,HIM11} shows that the part dependent on $\epsilon$ requires 
double-exponential patience, whereas the part that is independent only 
requires linear patience. 
A witness for $\epsilon$-optimal strategies in Purgatory (as described 
in~\cite{TCS:AlfaroHK07} for the value-1 problem for general zero-sum 
concurrent stochastic game) can be obtained as a ranking function on states 
and actions, such that the actions with rank~0 are played with uniform 
probability (linear patience); and an action of  rank $i$ at a state of rank 
$j$ is played with probability roughly proportional to $\epsilon^{i^{j}}$.
In contrast, since we show lower bound for optimal strategies (and the 
strategies are symmetric) in Purgatory Duel, our lower bound implies that also 
the part that is independent of $\epsilon$ requires double-exponential 
patience in general (i.e., the probability description of $\epsilon$-optimal 
strategies needs to be doubly exponentially precise).
\end{remark}

\subsection{Concluding remarks}
In this work, we established the strategy complexity of zero-sum and non-zero-sum 
concurrent games with safety and reachability objectives.
Our most important result is the doubly-exponential lower bound on patience
for $\epsilon$-optimal strategies, for $\epsilon>0$, for the safety player
in concurrent zero-sum games.
Note that roundedness is at least patience, and we present upper bounds for 
roundedness that match our lower bound for patience, and thus we establish 
tight bounds both for roundedness and patience. 
Our results also imply tight bounds on ``granularity'' of strategies (i.e., the minimal difference between 
two probabilities). 
Since patience is the minimum positive probability, and some actions can be played with 
probability~0, a lower bound on patience is a lower bound on granularity, and 
an upper bound on roundedness is an upper bound on granularity.
Finally, there are many interesting directions of future work. 
The first question is the complexity of the value problem for concurrent safety games.
While our results show that explicitly guessing strategies does not yield desired complexity
results, an interesting question is whether new techniques can be developed to show 
that concurrent safety games can be decided in \coNP\ in general.
A second interesting question is whether variants of strategy-iteration algorithm can be 
developed that does not explicitly modify strategies, and does not have 
worst-case exponential-space complexity.

\bibliographystyle{abbrv}
\bibliography{diss}

\end{document}